\keywords{DRAT, extension, pigeonhole principle, proof logging, propagation redundancy, propositional proofs, resolution, satisfiability}
\tikzset{>=latex}
\theoremstyle{defC}
\newtheorem{exaC}[thm]{Example}
\def\tTrue{{\mathrm{1}}}
\def\tFalse{{\mathrm{0}}}
\def\tVar{{\hbox{\textit{Var}}}}
\def\tLit{{\hbox{\textit{Lit}}}}
\def\tdom{{\hbox{\textit{dom}}}}
\def\rest{{{|}}}    
\def\liff{{\leftrightarrow}}
\def\notpigeonto{{\nrightarrow}}
\def\pigeonto{{\rightarrow}}
\def\pprime{{\prime\prime}}
\def\calP{{\mathcal{P}}}
\def\calQ{{\mathcal{Q}}}
\def\SAT{{\mathrm{SAT}}}
\def\REF{{\mathrm{REF}}}
\def\PAR{{\mathrm{PAR}}}
\def\CC{{\mathrm{CC}}}
\def\TS{{\mathrm{TS}}}
\def\tBC{{\hbox{\rm BC}}}
\def\tRUP{{\hbox{\rm RUP}}}
\def\tRAT{{\hbox{\rm RAT}}}
\def\tSPR{{\hbox{\rm SPR}}}
\def\tPR{{\hbox{\rm PR}}}
\def\tSR{{\hbox{\rm SR}}}
\def\permR{{\pi\hbox{\rm PR}}}
\def\tER{{\hbox{\rm ER}}}
\def\tDBC{{\hbox{\rm DBC}}}
\def\tDRAT{{\hbox{\rm DRAT}}}
\def\tDSPR{{\hbox{\rm DSPR}}}
\def\tDPR{{\hbox{\rm DPR}}}
\def\tDSR{{\hbox{\rm DSR}}}
\def\BCnnv{{\tBC^-}}
\def\RATnnv{{\tRAT^-}}
\def\SPRnnv{{\tSPR^-}}
\def\PRnnv{{\tPR^-}}
\def\SRnnv{{\tSR^-}}
\def\permRnnv{{\permR^-}}
\def\DBCnnv{{\tDBC^-}}
\def\DRATnnv{{\tDRAT^-}}
\def\DSPRnnv{{\tDSPR^-}}
\def\DPRnnv{{\tDPR^-}}
\def\DSRnnv{{\tDSR^-}}
\def\PHP{\mathrm{PHP}}
\def\BPHP{{\mathrm{BPHP}}}
\def\bigor{\bigvee}
\def\dotlor{\,\dot\lor\,}
\newcommand{\IGNORE}[1]{}
\newcommand*\olnot[1]{%
   \vbox{%
     \hrule height 0.5pt
     \kern0.25ex
     \hbox{%
       \ifmmode#1\else\ensuremath{#1}\fi
     }
   }
}
\begin{document}

\title[DRAT and Propagation Redundancy Proofs \texorpdfstring{\\}{} Without New Variables]{DRAT and Propagation Redundancy Proofs \texorpdfstring{\\}{} Without New Variables\rsuper*}
\titlecomment{{\lsuper*}A preliminary
version~\cite{BussThapen:DratAndPr_SAT} of this paper appeared in the Proceedings
of the 2019 Conference on Theory and Applications of Satisfiability Testing (SAT).}

\author[S.~Buss]{Sam Buss\rsuper{a}}
\address{\lsuper{a}Department of Mathematics,
University of California, San Diego,
La Jolla, CA 92093--0112, USA}
\email{sbuss@ucsd.edu}
\thanks{This work was initiated on a visit of the first author to
the Czech Academy of Sciences in July~2018,
supported by ERC advanced grant 339691 (FEALORA).
The first author was also supported
by Simons Foundation grant 578919.
The second author was partially supported by
GA \v{C}R project 19--05497S.
The Institute of Mathematics of the Czech Academy of Sciences is supported
by RVO:67985840.}
\author[N.~Thapen]{Neil Thapen\rsuper{b}}
\address{\lsuper{b}Institute of Mathematics of the
Czech Academy of Sciences,
Prague, Czech Republic}
\email{thapen@math.cas.cz}


\begin{abstract} \noindent
We study the complexity of a range of propositional proof systems
which allow inference rules of the form: from a set of clauses
$\Gamma$ derive the set of clauses $\Gamma \cup \{ C \}$ where, due to some syntactic
condition,
$\Gamma \cup \{ C \}$ is satisfiable if $\Gamma$ is,
but where $\Gamma$ does not necessarily imply~$C$.
These inference rules include
BC, RAT, SPR and PR (respectively short for
\emph{blocked clauses},
\emph{resolution asymmetric tautologies},
\emph{subset propagation redundancy} and
\emph{propagation redundancy}), which arose from work in satisfiability (SAT) solving.
We introduce a new, more general rule SR (\emph{substitution redundancy}).

If the new clause $C$ is allowed to include new variables
then the systems based on these
rules are all  equivalent  to extended resolution.
We focus
on restricted systems that do not allow new variables.
The systems with deletion, where we can delete
a clause from our set at any time, are denoted
$\DBCnnv$, $\DRATnnv$, $\DSPRnnv$, $\DPRnnv$ and $\DSRnnv$.
The systems without deletion are
$\BCnnv$, $\RATnnv$, $\SPRnnv$, $\PRnnv$ and~$\SRnnv$.

With deletion, we show that DRAT${}^-$,
DSPR${}^-$ and DPR${}^-$ are equivalent. By earlier work
of Kiesl, Rebola-Pardo and Heule~\cite{KRPH:erDRAT},
they are also equivalent to DBC${}^-$.
Without deletion,
we show that SPR${}^-$ can simulate
PR${}^-$ provided only short clauses are inferred by SPR inferences.
We also show that many of the well-known
``hard'' principles have small SPR${}^-$ refutations.
These include the pigeonhole principle,
bit pigeonhole principle, parity principle,
Tseitin tautologies and clique-coloring tautologies. SPR${}^-$ can also
handle or-fication and xor-ification, and lifting with an index gadget.
Our final result is an exponential size lower
bound for RAT${}^-$ refutations, giving exponential separations between
RAT${}^-$ and both DRAT${}^-$ and SPR${}^-$.
\end{abstract}

\maketitle

\section{Introduction}\label{sec:Intro}

SAT solvers are routinely used for a range of
large-scale instances of satisfiability. It is widely
realized that when a solver reports that a SAT instance $\Gamma$
is unsatisfiable, it should also produce a \emph{proof} that it is
unsatisfiable. This is of particular importance as SAT solvers become
increasingly complex, combining many techniques, and thus are more
subject to software bugs or even design problems.

The first proof systems proposed for SAT solvers were based
on reverse unit propagation ($\tRUP$,
or $\vdash_1$ in the notation of this paper) inferences~\cite{GoldbergNovikov:verification,VanGelder:RUP}
as this is sufficient to handle both resolution inferences and
the usual CDCL clause learning schemes.
However, $\tRUP$ inferences only support logical implication,
and in particular do not accommodate many ``inprocessing'' rules.
Inprocessing rules support inferences which do not respect
logical implication; instead they only guarantee \emph{equisatisfiability}
where the (un)satisfiability of the set of clauses is preserved~\cite{JHB:inprocessing}. 
Inprocessing inferences have been formalized in terms of sophisticated inference
rules including $\tDRAT$ (\emph{deletion, reverse asymmetric tautology}),
$\tPR$ (\emph{propagation redundancy}), $\tSPR$ (\emph{subset $\tPR$}) in a series of papers including~\cite{JHB:inprocessing,%
HHW:verifying,%
HHW:trimming,%
WHH:DRATtrim}
--- see Section~\ref{sec:inferences} for definitions.
These inference rules can be viewed as introducing clauses that
hold ``without loss of
generality''~\cite{RebolaPardoSuda:SatPreserving}, and thus preserve (un)satisfiability. 
An important feature of these systems
is that they can be used both as proof systems to verify
unsatisfiability, and as inference systems to facilitate
searching for either a satisfying assignment or a proof
of unsatisfiability.\footnote{The deletion rule is very helpful
to improve proof search and can extend the power of the
inferences rules, see Corollary~\ref{coro:RATminusnotSimulateTwo};
however, it must be used carefully to preserve equisatisfiabity.
The present paper only considers \emph{refutation systems}, and thus
the deletion rule can be used without restriction.}

The $\tDRAT$ system is very powerful as it can
simulate extended resolution~\cite{Kullmann:GeneralizationER, KRPH:erDRAT}. This simulation
is straightforward, but depends on $\tDRAT$'s ability
to introduce new variables; we simply show that the usual extension
axioms are RAT\@.
 However, there are a number of results~\cite{HeuleBiere:Variable,HKSB:PRuning,HKB:NoNewVariables,HKB:StrongExtensionFree}
indicating that $\tDRAT$ and $\tPR$ are still powerful when restricted to use few new variables, or
even no new variables. In particular,~\cite{HKSB:PRuning,HKB:NoNewVariables,HKB:StrongExtensionFree} showed that the
pigeonhole principle clauses have short (polynomial size)
refutations in the $\tPR$ proof system. The paper~\cite{HKSB:PRuning} showed that
Satisfaction Driven Clause Learning (SDCL) can discover $\tPR$ proofs
of the pigeonhole principle automatically;
in the application studied by~\cite{HKSB:PRuning},
the SDCL search appears to have exponential runtime, but is much more efficient
than the usual CDCL search.
There are at present no broadly applicable proof search heuristics for how to
usefully introduce
new variables with the extension rule.
It is possible however that there are useful heuristics for searching
for proofs that do not use new variables
in $\tDRAT$ and $\tPR$ and related systems.
For these reasons, $\tDRAT$ and $\tPR$ and related systems (even when new
variables are not allowed) hold the potential
for substantial improvements in the power of SAT solvers.

The present paper extends the theoretical knowledge of
these proof systems viewed as refutation systems.
We pay particular attention to proof systems that
do not allow new variables. The remainder of Section~\ref{sec:Intro}
introduces the proof systems $\tBC$ (\emph{blocked clauses}),
$\tRAT$, $\tSPR$, $\tPR$ and $\tSR$ (\emph{substitution redundancy}).
(Only $\tSR$ is new to this paper.) These
systems have variants which allow deletion, called $\tDBC$,
$\tDRAT$, $\tDSPR$, $\tDPR$ and $\tDSR$.  There are also variants
of all these systems restricted to not allow new variables:
we denote these with a superscript~``$-$''
as $\BCnnv$, $\DBCnnv$, $\RATnnv$, $\DRATnnv$,~etc.

Section~\ref{sec:extended_resolution} studies the relation between
these systems and extended resolution. We show
in particular that any proof system containing $\BCnnv$ and
closed under restrictions simulates extended resolution.
Here a proof system~$\calP$ is said to \emph{simulate} a
proof system~$\calQ$ if any $\mathcal Q$-proof can be converted, in polynomial time, into a
$\calP$-proof of the same result.
Two systems are \emph{equivalent} if they simulate each other;
otherwise they are \emph{separated}.
We also show that the systems discussed above all have equivalent
canonical NP pairs (a coarser notion of equivalence).

Section~\ref{sec:simulations} extends known results
that $\DBCnnv$ simulates $\DRATnnv$~\cite{KRPH:erDRAT}
and that $\tDRAT$, limited to only one extra variable,
simulates $\DPRnnv$~\cite{HeuleBiere:Variable}.
Theorem~\ref{thm:DRATnnvDPRnnv}
proves that
$\DRATnnv$ simulates $\DPRnnv$. As a consequence,
$\DBCnnv$ can also simulate $\DPRnnv$. We then give a partial
simulation of $\PRnnv$ by $\SPRnnv$ --- our size bound is exponential
in the size of the ``discrepancy'' of the $\tPR$ inferences, but in many cases,
the discrepancy will be logarithmic or even smaller.

Section~\ref{sec:Upperbounds} proves new polynomial upper bounds
on the size of $\SPRnnv$ proofs for many of the ``hard'' tautologies
from proof complexity.
(Recall that $\SPRnnv$ allows neither deletion nor the
use of new variables.)
These include the pigeonhole principle,
the bit pigeonhole principle, the parity principle, the clique-coloring
principle, and the Tseitin tautologies. We also show that
obfuscation by or-fication, xor-ification and lifting with a indexing
gadget do not work against
$\SPRnnv$. Prior results gave $\SPRnnv$ proofs for
the pigeonhole principle (PHP)~\cite{HKB:NoNewVariables,HKB:StrongExtensionFree},
and $\PRnnv$ proofs for the
Tseitin tautologies and the 2--1 PHP~\cite{HeuleBiere:Variable}.
These results raise the question of whether $\SPRnnv$ (with no
new variables!) can simulate Frege systems, for instance.  Some
possible principles that might separate $\SPRnnv$ from Frege systems
are the graph PHP principle, 3-XOR tautologies and the even coloring principle;
these are discussed at the end of Section~\ref{sec:Upperbounds}. However,
the even coloring principle does have short $\DSPRnnv$ proofs, and it
is plausible
that the graph PHP principle has short $\SPRnnv$ proofs.

Section~\ref{sec:Lowerbounds} shows that
$\RATnnv$ (with neither new variables nor deletion) cannot simulate
either $\DRATnnv$ (without new variables, but
with deletion) or $\SPRnnv$ (with neither new variables nor deletion).
This follows from a size lower
bound for $\RATnnv$ proofs of the bit pigeonhole principle ($\BPHP$).
We first prove a width lower bound, by showing that any
$\tRAT$ inference in a small-width refutation of $\BPHP$ can be
replaced with a small-width resolution derivation,
and then derive the size bound.
We use that $\BPHP$ behaves well when the sign of a variable is
flipped.

\IGNORE{
Most of the known inclusions for these systems,
including our new results, are summarized in~\eqref{eq:results1}--\eqref{eq:results3}.
Allowing new variables (and with or without deletion), we have
\begin{equation}\label{eq:results1}
\mathrm{Res} <
\tBC \equiv \tRAT \equiv \tSPR \equiv \tPR \equiv \tSR
\equiv \tER.
\end{equation}
With deletion and no new variables (except $\tER$ may use new variables):
\begin{equation}\label{eq:results2}
\mathrm{Res} <
\DBCnnv \equiv \DRATnnv \equiv \DSPRnnv \equiv \DPRnnv \le \DSRnnv
\le \tER.
\end{equation}
With no deletion and no new variables (except $\tER$ may use new variables):
\begin{equation}\label{eq:results3}
\mathrm{Res} <
\BCnnv \le \RATnnv < \SPRnnv \le^* \PRnnv \le \SRnnv
\le \tER.
\end{equation}
In these equations, equivalence ($\equiv$) indicates the systems
simulate each other. Inequality ($\le$) indicates only one direction
is known for the simulation. Strict inequality ($<$) means that it is known
there is no simulation in the other direction.
The symbol $\le^*$ in~(\ref{eq:results3})
means $\PRnnv$ simulates $\SPRnnv$, and
there is  a simulation in the other direction under the additional
assumption that the discrepancies (see
Definition~\ref{def:discrepancy}) of $\tPR$ inferences are logarithmically bounded.
}

The known relationships between these systems,
including our results, are summarized in Figure~\ref{fig:diagram}.
Recall that e.g. $\tBC$ is the full system,
$\DBCnnv$ is the system with deletion but no new variables,
and $\BCnnv$ is the system with neither deletion nor new variables.
An arrow shows that the upper system simulates the lower one.
Equivalence $\equiv$ indicates that the systems simulate each other.
The arrow from $\PRnnv$ and $\SPRnnv$ is marked $*$
to indicate that
there is  a simulation in the other direction under the additional
assumption that the discrepancies (see
Definition~\ref{def:discrepancy}) of $\tPR$ inferences are logarithmically bounded.

\begin{figure}
\begin{tikzcd}[row sep = 0.6cm]
& \makebox[2em]{$\tER \equiv \tSR \equiv \tPR \equiv \tSPR \equiv \tRAT
\equiv \tBC$} \arrow[d] & \\
& \DSRnnv \arrow[ld] \arrow[rd] & \\
\makebox[4em]{$\DPRnnv \equiv \DSPRnnv \equiv \DRATnnv \equiv \DBCnnv$}\arrow[rd]
	&  & \SRnnv \arrow[ld]\\
& \PRnnv \arrow[d, "~*" near start] \\
& \SPRnnv \arrow[d, "~\not\equiv" near start] \\
& \RATnnv \arrow[d] \\
& \BCnnv \arrow[d, "~\not\equiv" near start] \\
& \mathrm{Res}
\end{tikzcd}
\caption{Relationships between proof systems.}\label{fig:diagram}
\end{figure}
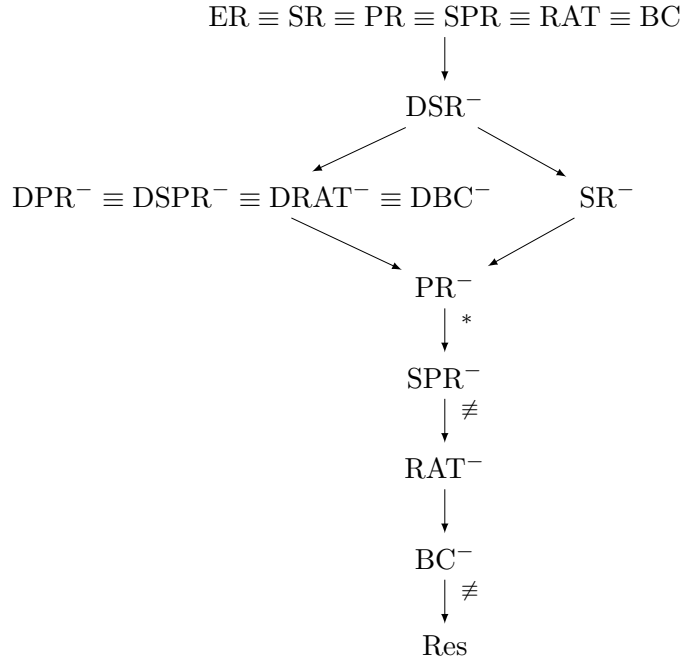

We summarize the rules underlying these systems in Table~\ref{table:rules}. The details
and the necessary definitions
are in Section~\ref{sec:inferences} below --- in particular
see Theorem~\ref{thm:RATandLPR} for this definition of RAT\@.

\begin{table}
\vspace{6mm}
\begin{center}
\begin{tabular}{ c  l  l }
  \tBC
  	& (a restriction of \tRAT)
	 & \emph{blocked clause} \\[0.5mm]
  \tRAT &  	
   $\tau$ is $\alpha$ with one variable flipped
  & \emph{reverse asymmetric tautology} \\[0.5mm]
%
  \tSPR & $\tau$ is a partial assignment,
  	$\tdom(\tau) \! = \! \tdom(\alpha)$
  	& \emph{subset propagation redundant} \\[0.5mm]
  \tPR &$\tau$ is a partial assignment
  	& \emph{propagation redundant} \\[0.5mm]
  \tSR & no extra conditions
  	&  \emph{substitution redundant}\\
\end{tabular}
\end{center}
\caption{Summary of rules of inference.}\label{table:rules}
\end{table}

As presented here the 
rules (except for BC) have the form:
derive $C$ from $\Gamma$, if there is a substitution $\tau$
satisfying $\Gamma_{\rest\alpha} \vdash_1 \Gamma_{\rest\tau}$
plus the conditions shown,
where $\alpha$ is $\olnot{C}$. The implication $\vdash_1$ is defined below in
terms of reverse unit propagation (RUP).

We remark that the question of whether new variables help
reasoning with blocked clause inferences was already studied by Kullmann in the
context of the system
Generalized Extended Resolution (GER)~\cite{Kullmann:GeneralizationER}.
As far as we know, GER does not correspond exactly to any of the systems we consider.
\cite{Kullmann:GeneralizationER}~showed that
allowing new variables does not reduce GER proof length
when the blocked clause rule is restricted to
introducing clauses of length at most two.

\subsection{Preliminaries}\label{sec:prelim}

We use the usual conventions for clauses, variables,
literals, truth assignments, satisfaction, etc.
$\tVar$ and $\tLit$ denote the sets of all
variables and all literals.
A set of literals is called \emph{tautological} if it contains a pair of
complementary literals $p$ and~$\olnot p$.
A \emph{clause} is a non-tautological\footnote{
Disallowing tautological clauses makes the rest
of the definitions more natural. In particular, we
can identify clauses with the negations of partial assignments.}
 set of literals;
we use $C, D, \ldots$ to denote clauses.
The empty clause is denoted~$\perp$, and is always false.
$\tFalse$ and $\tTrue$ denote respectively
\emph{False} and \emph{True}; and
$\olnot \tFalse$ and~$\olnot \tTrue$ are respectively $1$ and~$0$.
We use both $C\cup D$ or $C\lor D$ to denote unions of
clauses, but usually write $C\lor D$ when the union is a clause.
The notation $C = D \dotlor E$
indicates that $C= D\lor E$ is a clause and $D$ and $E$ have no variables in common.
If $\Gamma$ is a set of clauses, $C \lor \Gamma$
is the set $\{C \lor D: \text{$D \in \Gamma$ and $C\lor D$ is a clause}\}$.

A \emph{partial assignment}~$\tau$ is  a mapping
with domain a set of variables and range
contained in $\{\tFalse,\tTrue\}$.
It acts on literals by letting
$\tau(\olnot p) = \olnot{\tau(p)}$.
It is called a \emph{total assignment} if it sets all variables.
We sometimes identify a partial assignment $\tau$
with the set of unit clauses asserting that $\tau$ holds.
For $C$ a clause, $\olnot{C}$
denotes the partial assignment whose domain is the variables of $C$
and which asserts that $C$ is false.
For example, if $C = x \lor \olnot y \lor z$
then, depending on context, $\olnot C$ will denote either the
set containing the three unit clauses $\olnot x$ and~$y$ and~$\olnot z$,
or the partial assignment $\alpha$ with domain $\tdom(\alpha) = \{x,y,z\}$ such that
$\alpha(x) = \tFalse$, $\alpha(y) = \tTrue$
and $\alpha(z) = \tFalse$.

A \emph{substitution}  generalizes the notion of
a partial assignment by allowing variables to be mapped
also to literals.
Formally,
a substitution~$\sigma$ is a map from $\tVar \cup \{\tFalse, \tTrue \}$ to
$\tLit \cup \{\tFalse,\tTrue\}$ which is
the identity on $\{\tFalse, \tTrue \}$.
Note that a
substitution may cause different literals to become
identified.\footnote{\cite{Szeider:Homomorphisms} defined
a notion of ``homomorphisms'' that is similar to substitutions.
Substitutions, however, allow variables to be mapped also to constants.
Our SR inference, defined below, uses $\vdash_1$;
this was not used with homomorphisms in~\cite{Szeider:Homomorphisms}.}
A partial assignment~$\tau$ can be viewed as a
substitution, by defining $\tau(x)=x$ for all
variables $x$ outside the domain of $\tau$.
The \emph{domain} of a substitution $\sigma$ is the
set of variables~$x$ for which $\sigma(x) \neq x$.

Suppose $C$ is a clause and $\sigma$~is a substitution (or a
partial assigment viewed as a substitution).
Let $\sigma(C) = \{ \sigma(p) : p \in C \}$.
We say $\sigma$ \emph{satisfies} $C$, written $\sigma \vDash C$, if
$1\in \sigma(C)$ or $\sigma(C)$ is tautological.
When $\sigma \nvDash C$, the
\emph{restriction} $C_{\rest \sigma}$ is defined
by letting $C_{\rest \sigma}$ equal $\sigma(C) \setminus \{\tFalse\}$.
Thus $C_{\rest \sigma}$ is a clause
expressing the meaning of~$C$ under~$\sigma$.
For $\Gamma$ a set of
clauses, the restriction of $\Gamma$ under~$\sigma$ is
\[
\Gamma_{\rest \sigma} ~=~
   \{\, C_{\rest\sigma} : \text{ $C\in \Gamma$ and $\sigma \nvDash C$ } \}.
\]

The composition of two substitutions is
denoted $\tau\circ\pi$, meaning that
$(\tau\circ\pi)(x) = \tau(\pi(x))$,
and  in particular $(\tau\circ\pi)(x) = \pi(x)$ if $\pi(x)\in\{\tFalse,\tTrue\}$.
For partial assignments $\tau$ and~$\pi$, this
means that $\tdom(\tau\circ\pi) = \tdom(\tau)\cup\tdom(\pi)$ and
\[
(\tau\circ\pi)(x) ~=~ \left\{ \begin{array}{ll}
\pi(x) ~~~~ & \hbox{if $x\in\tdom(\pi)$} \\
\tau(x) & \hbox{if $x\in\tdom(\tau)\setminus\tdom(\pi)$.}
\end{array}\right.
\]

\begin{lem}\label{lem:substitution_composition}
For a set of clauses $\Gamma$ and substitutions $\tau$ and $\pi$,
$\Gamma_{\rest \tau \circ \pi} =
{(\Gamma_{\rest \pi})}_{\rest\tau}$.
In particular, $\tau \vDash \Gamma_{\rest\pi}$ if and only if
$\tau \circ \pi \vDash \Gamma$.
\end{lem}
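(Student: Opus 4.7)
The plan is to first prove the set identity $\Gamma_{\rest\tau\circ\pi} = (\Gamma_{\rest\pi})_{\rest\tau}$ by a clause-by-clause analysis, and then observe that the satisfaction equivalence follows because, for any substitution $\sigma$ and set of clauses $\Delta$, $\sigma \vDash \Delta$ holds if and only if $\Delta_{\rest\sigma} = \emptyset$ (since $\Delta_{\rest\sigma}$ collects exactly the restrictions of clauses that $\sigma$ does not satisfy). It therefore suffices to fix a clause $C \in \Gamma$ and check that $C$'s contribution on the two sides of the set identity agrees.

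The key underlying computation, to be established first, is that $(\tau\circ\pi)(C) = \tau(\pi(C))$ as sets of literals and constants --- this holds because $\tau$ is the identity on $\{\tFalse,\tTrue\}$ --- while $\tau(C_{\rest\pi}) = \tau(\pi(C)\setminus\{\tFalse\})$, which differs from $\tau(\pi(C))$ only by the possible presence of $\tFalse$. From here I would split into cases. If $\pi \vDash C$, then $C$ contributes nothing to $(\Gamma_{\rest\pi})_{\rest\tau}$, and I would verify that $\tau\circ\pi \vDash C$ as well: if $\tTrue \in \pi(C)$ this is immediate, and if $\pi(C)$ contains a complementary pair $p,\olnot p$ then $\tau(\pi(C))$ contains $\tau(p)$ together with $\olnot{\tau(p)}$, which either form a complementary literal pair or are the two constants $\tFalse,\tTrue$; in either case $\tau\circ\pi$ satisfies $C$.

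The remaining case $\pi \nvDash C$ splits further by whether $\tau \vDash C_{\rest\pi}$. If yes, then the satisfying witness in $\tau(C_{\rest\pi})$ also lives in the larger set $\tau(\pi(C)) = (\tau\circ\pi)(C)$, so $\tau\circ\pi \vDash C$ and $C$ again contributes nothing to either side. If no, then since neither the addition nor the removal of $\tFalse$ can introduce $\tTrue$ or a tautological pair, one also has $\tau\circ\pi \nvDash C$, and the two restrictions $C_{\rest\tau\circ\pi}$ and $(C_{\rest\pi})_{\rest\tau}$ both coincide with $\tau(\pi(C))\setminus\{\tFalse\}$, giving the identity on $C$.

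I do not anticipate any real obstacle beyond careful bookkeeping. The only subtlety to keep in mind is that $\tau$ may send a literal $p$ to a constant, in which case $\olnot p$ is sent to the opposite constant, so a complementary pair in $\pi(C)$ can become $\{\tFalse,\tTrue\}$ rather than a complementary literal pair under $\tau\circ\pi$. This was already handled above by noting that $\tTrue \in \tau(\pi(C))$ in exactly that situation, so the clause is still satisfied --- this is the one spot where the convention of identifying clauses with non-tautological sets of literals really needs to be respected.
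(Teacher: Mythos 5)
Your proof is correct and follows essentially the same approach as the paper: the paper's argument is a chain of set equalities whose engine is exactly the observation you prove in detail, namely that $\tau\circ\pi \vDash C$ if and only if either $\pi \vDash C$, or $\pi \nvDash C$ and $\tau \vDash C_{\rest\pi}$. The only difference is one of presentation — the paper states that equivalence without proof and packages the three cases into a displayed chain of set-builder identities, whereas you spell out the case analysis and handle the bookkeeping around $\tFalse$ and literals mapped to constants explicitly, which is exactly the content the paper leaves implicit.
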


\begin{proof}
Notice $\tau \circ \pi \vDash C$ if and only if
$\pi \vDash C$ or $( \pi \not \vDash C \wedge \tau \vDash C_{\rest \pi})$.
Thus
\begin{align*}
{(\Gamma_{\rest \pi})}_{\rest\tau}
&= \big\{ {(C_{\rest \pi})}_{\rest\tau} : C \in \Gamma, \
	\pi \not\vDash C, \ \tau \not\vDash C_{\rest\pi} \big\} \\
&= \big\{ \tau \circ \pi(C) \setminus \{ 0 \} : C \in \Gamma, \
	\pi \not\vDash C, \ \tau \not\vDash C_{\rest\pi} \big\} \\
&= \big\{ C_{\rest \tau \circ \pi} : C \in \Gamma, \
	\tau \circ \pi \not \vDash C \big\}
= \Gamma_{\rest \tau \circ \pi}. \qedhere
\end{align*}
\end{proof}

A set of clauses $\Gamma$ \emph{semantically implies}
a clause~$C$, written $\Gamma\vDash C$,
if every total assignment satisfying $\Gamma$ also satisfies~$C$.
As is well-known, $\Gamma\vDash C$ holds if and only if there is
a \emph{resolution derivation} of some $C^\prime \subseteq C$; that is,
$C^\prime$ is derived from~$\Gamma$ using \emph{resolution inferences}
of the form
\begin{equation}\label{eq:resRule}
\AxiomC{$p \dotlor D$} 
\AxiomC{$\olnot p \dotlor E$} 
\BinaryInfC{$D \lor E$} 
\DisplayProof.
\end{equation}
If the derived clause $C^\prime$ is the empty clause~$\perp$, then the derivation is called
a \emph{resolution refutation} of~$\Gamma$.
By the soundness and completeness of resolution,
$\Gamma \vDash \perp$, that is, $\Gamma$ is unsatisfiable,
 if and only if
 there is a resolution refutation of~$\Gamma$.

If either $D$ or $E$ is empty, then the resolution inference~(\ref{eq:resRule})
is an instance of \emph{unit propagation}.
A refutation using only such inferences is called
a \emph{unit propagation refutation}.
Recall that we can write $\olnot C$
for the set of unit clauses $\{ \olnot p : p \in C \}$.

\begin{defi}
We write $\Gamma \vdash_1 \bot$ to denote that there is a unit
propagation refutation of $\Gamma$.
We define $\Gamma \vdash_1 C$ to mean $\Gamma \cup \olnot C \vdash_1 \bot$.
For a set of clauses $\Delta$, we write $\Gamma \vdash_1 \Delta$ to mean
$\Gamma \vdash_1 C$ for every~$C \in \Delta$.
\end{defi}

\begin{fact}\label{fac:unit_propagation}
If $\Gamma \vdash_1 \bot$ and $\alpha$ is any partial assignment
or substitution,
then $\Gamma_{\rest\alpha} \vdash_1 \bot$.
\end{fact}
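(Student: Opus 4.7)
The plan is to prove the fact by induction on the length of a unit propagation refutation of $\Gamma$. In the base case, a length-zero refutation means $\bot \in \Gamma$; since $\alpha$ cannot satisfy the empty clause, $\bot \in \Gamma_{\rest\alpha}$, which gives a length-zero refutation of the restricted set.

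For the inductive step, consider the first inference of the refutation: it resolves some unit $\{p\} \in \Gamma$ with a clause $C = \{\olnot p\} \cup E$ from $\Gamma$ to derive $E$, and the rest of the refutation is a shorter unit propagation refutation of $\Gamma \cup \{E\}$. The key claim I would establish is that either $\bot \in \Gamma_{\rest\alpha}$ already, or whenever $\alpha \nvDash E$ we have $\Gamma_{\rest\alpha} \vdash_1 E_{\rest\alpha}$, by a short case analysis on the value of $\alpha(p)$. If $\alpha(p) = 1$ then $\alpha(\olnot p) = 0$ and $C_{\rest\alpha} = E_{\rest\alpha}$, which is already in $\Gamma_{\rest\alpha}$. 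If $\alpha(p) = 0$ then the unit $\{p\}$ restricts to the empty clause and we are immediately done. Finally, if $\alpha(p)$ is a literal $\ell$ (possibly $p$ itself, when $p \notin \tdom(\alpha)$), then $\{p\}_{\rest\alpha} = \{\ell\}$; either $\ell \in \alpha(E)$, in which case $\ell \in E_{\rest\alpha}$ and the unit $\{\ell\}$ alone derives $E_{\rest\alpha}$, or $\alpha \nvDash C$ and $C_{\rest\alpha} \supseteq \{\olnot\ell\} \cup E_{\rest\alpha}$, so a single unit-resolution step derives $E_{\rest\alpha}$.

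Applying the induction hypothesis to $\Gamma \cup \{E\}$ with the same $\alpha$ then yields a unit propagation refutation of $(\Gamma \cup \{E\})_{\rest\alpha}$, and this set is either $\Gamma_{\rest\alpha}$ (if $\alpha \vDash E$) or $\Gamma_{\rest\alpha} \cup \{E_{\rest\alpha}\}$, in which case the claim above shows that the extra clause is itself derivable by unit propagation from $\Gamma_{\rest\alpha}$. Either way, we obtain $\Gamma_{\rest\alpha} \vdash_1 \bot$.

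There is no real obstacle here; the proof amounts to the observation that the image under $\alpha$ of the original derivation either remains a valid unit propagation derivation or collapses in places because some clause has become satisfied. The one mild subtlety is specific to the substitution case, where $\alpha(p)$ may be a literal that coincides with or complements another literal of $C$, potentially shortening $C_{\rest\alpha}$ or making it satisfied. The case split above handles these situations uniformly, and crucially only ever requires one unit-resolution step per original inference, so the length of the new refutation is bounded by the length of the old one.
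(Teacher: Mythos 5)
The paper states Fact~\ref{fac:unit_propagation} without proof --- it is treated as a routine observation about unit propagation --- so there is no paper proof to compare against. Your induction on the length of the unit propagation refutation is the natural way to establish it, and the argument is essentially correct.

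A couple of places would benefit from slightly more care if this were to be written out in full. First, in the subcase where $\alpha(p)$ is a literal $\ell$ with $\ell \in \alpha(E)$, the clause $C$ itself is satisfied by $\alpha$ (since $\alpha(C)$ contains both $\ell$ and $\olnot\ell$), so $C$ contributes nothing to $\Gamma_{\rest\alpha}$; what you actually obtain is that the unit clause $\{\ell\}$, already present in $\Gamma_{\rest\alpha}$, is a \emph{subclause} of $E_{\rest\alpha}$, not that $E_{\rest\alpha}$ itself is derivable. Similarly, in the subcase $\ell \notin \alpha(E)$, if $\olnot\ell$ happens to lie in $E_{\rest\alpha}$ then $C_{\rest\alpha} = E_{\rest\alpha}$ outright and no resolution is performed, while otherwise the single unit-resolution step you describe is sound. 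So the precise outcome of the claim is: either $\bot \in \Gamma_{\rest\alpha}$, or some subclause of $E_{\rest\alpha}$ is derivable from $\Gamma_{\rest\alpha}$ by at most one unit-resolution step. Second, to chain this with the inductive hypothesis you then need the standard subsumption observation: if a subclause $F \subseteq E_{\rest\alpha}$ is derivable from $\Gamma_{\rest\alpha}$, then a unit propagation refutation of $\Gamma_{\rest\alpha} \cup \{E_{\rest\alpha}\}$ can be converted to one of $\Gamma_{\rest\alpha}$ by deriving $F$ first and using it in place of the subsumed clause $E_{\rest\alpha}$. Neither point is a real obstacle; they are exactly the ``collapses in places'' phenomenon you flag in your closing paragraph, but making them explicit is what turns the sketch into a proof. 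Your observation that each original inference maps to at most one unit-resolution step, so the length bound is preserved, is also correct and worth keeping.
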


In the literature, when $\Gamma \vdash_1 C$
then $C$ is said to be derivable from~$\Gamma$ by
\emph{reverse unit propagation} ($\tRUP$), or
is called an \emph{asymmetric tautology} (AT) with
respect to~$\Gamma$~\cite{VanGelder:RUP,JHB:inprocessing,HHW:verifying}.
Of course, $\Gamma \vdash_1 C$
implies that $\Gamma \vDash C$. The advantage of working with $\vdash_1$
is that
there is a simple polynomial time algorithm to
determine whether~$\Gamma \vdash_1 C$.
We have the following basic property of $\vdash_1$
(going back to~\cite{Chang:UnitAndInput}):

\begin{lem}\label{lem:RUP_vs_Res}
If $C$ is derivable from $\Gamma$ by a single resolution inference, then
$\Gamma \vdash_1 C$. Conversely, if~$\Gamma \vdash_1 C$,
then some $C^\prime \subseteq C$ has a resolution derivation from $\Gamma$ of
length at most~$n$, where $n$ is the total number of literals occurring in clauses
in~$\Gamma$.
\end{lem}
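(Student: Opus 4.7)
The forward direction is a direct calculation. Suppose $C = D \lor E$ is derived by resolution from $p \dotlor D$ and $\olnot p \dotlor E$ in $\Gamma$. Given $\Gamma \cup \olnot{C}$, the unit clauses in $\olnot{D} \subseteq \olnot{C}$ reduce $p \lor D$ to the unit $p$; then the unit clauses in $\olnot{E}$ together with $p$ reduce $\olnot p \lor E$ to $\bot$. This is a unit propagation refutation, so $\Gamma \vdash_1 C$.

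For the converse, I will proceed by induction on the length $k$ of a unit propagation refutation of $\Gamma \cup \olnot{C}$, where the length is the number of literals derived before $\bot$. The plan is to maintain the following stronger invariant: for every unit $\ell$ derived along the way, there is a resolution derivation from $\Gamma$ of some clause $C_\ell \subseteq \{\ell\} \cup C$, and similarly the final $\bot$-derivation yields a resolution derivation from $\Gamma$ of some $C' \subseteq C$. Base case $k = 0$: $\bot$ must already lie in $\Gamma \cup \olnot{C}$, and since $\olnot{C}$ contains only (non-empty) unit clauses, $\bot \in \Gamma$ and we are done with $C' = \bot$. Inductive step: suppose the next unit $\ell$ is derived from a clause $D = \ell \lor m_1 \lor \cdots \lor m_s$ where each $m_j$ has been ``killed,'' i.e., either $\olnot{m_j} \in \olnot{C}$ (so $m_j \in C$) or $\olnot{m_j}$ is a previously derived unit. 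In the latter case, by the induction hypothesis I already have a resolution derivation from $\Gamma$ of some $C_{\olnot{m_j}} \subseteq \{\olnot{m_j}\} \cup C$. If $D \in \Gamma$, I successively resolve $D$ with each such $C_{\olnot{m_j}}$ on the variable of $m_j$, which eliminates $m_j$ and introduces only literals of $C$. The result is a clause $C_\ell \subseteq \{\ell\} \cup C$, as desired. If instead $D \in \olnot{C}$, then $D = \ell$ is itself in $\olnot{C}$, so $\olnot{\ell} \in C$, and the trivial ``derivation'' taking $C_\ell = \olnot{\ell} \lor \ell$ is already a subclause of $\{\ell\} \cup C$; but more usefully, this case only matters when $\ell$ is eventually consumed, where using $\olnot{\ell} \in C$ gives the right bookkeeping. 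Finally, when the last step derives $\bot$ by contradicting two units $\ell$ and $\olnot{\ell}$, I resolve $C_\ell$ and $C_{\olnot{\ell}}$ on the variable of $\ell$ to obtain a clause $C' \subseteq C$.

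For the size bound, each distinct literal is derived at most once by unit propagation, so the number of propagation steps is at most the number of distinct literals appearing in clauses of $\Gamma$, which is bounded by $n$. Each step contributes one additional resolution inference in the constructed derivation, so the total length is at most $n$.

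The only real obstacle is the bookkeeping in the case split inside the inductive step: one must be careful that units arising from $\olnot{C}$ correspond to literals of $C$ that appear in the derived clause, while units arising from earlier propagations are already handled recursively by the induction hypothesis. Once this is kept straight, resolving on the variable of each $m_j$ cleanly eliminates it while only adding literals already belonging to $\{\ell\} \cup C$, so the invariant propagates and the length bound follows by counting.
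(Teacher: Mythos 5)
Your proposal is correct and takes essentially the same approach as the paper: the forward direction is the same direct propagation argument, and the converse is the paper's ``remove all resolutions against the $\olnot{C}$ units'' transformation, made precise via the invariant $C_\ell \subseteq \{\ell\} \cup C$. A couple of your bookkeeping remarks are a little loose (the $D \in \olnot{C}$ sub-case never actually arises as a \emph{derived} unit, and the length bound should be counted against literal occurrences removed rather than ``one per step''), but these don't affect the correctness of the argument, which is the one the paper intends.
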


\begin{proof}
First suppose that $C = D \lor E$ and clauses $p \dotlor D$
and $\olnot{p} \dotlor E$ appear in $\Gamma$. Then by resolving these
with the unit clauses
in $\olnot{C}$ we can derive the two unit clauses $p$ and~$\olnot{p}$,
then resolve these together to get the empty clause.

Now suppose that $\Gamma \vdash_1 C$. Then there is a unit
propagation derivation of $\bot$ from $\Gamma \cup \olnot{C}$, which is
of length at most $n$. Removing all resolutions
against unit clauses~$\olnot{p}$ for $p \in C$,
this can be turned
into a resolution derivation of~$C$ or
of some $C^\prime\subseteq C$ from~$\Gamma$.
\end{proof}

\begin{lem}\label{lem:RUP_restriction}
Let $C\lor D$ be a clause (so $C\cup D$ is not tautological), and
set $\alpha = \olnot{C}$. Then
\[
\Gamma_{\rest\alpha} \vdash_1 D \setminus C
\quad \Longleftrightarrow \quad
\Gamma_{\rest\alpha} \vdash_1 D
\quad \Longleftrightarrow \quad
\Gamma \vdash_1 C \lor D.
\]
\end{lem}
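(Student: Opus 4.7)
The plan is to unfold all three conditions using the definition
$\Gamma \vdash_1 E \Leftrightarrow \Gamma \cup \olnot E \vdash_1 \bot$,
and then exploit two observations about $\alpha = \olnot C$. First,
because $C \lor D$ is a clause, no literal of $D$ has its complement in $C$;
hence every literal of $D$ lies either in $C$ or else on a variable outside
$\tdom(\alpha)$. In particular $\olnot{C \lor D} = \alpha \cup \olnot{D \setminus C}$ as
a disjoint union of partial assignments, and the unit clauses
$\olnot{D \setminus C}$ are unaffected by restriction under~$\alpha$.
Second, no variable of $C$ occurs at all in $\Gamma_{\rest\alpha}$, since
under $\alpha$ every literal over such a variable is either satisfied
(and the containing clause removed) or falsified (and the literal
deleted).

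The key auxiliary equivalence I will establish is
\[
\Gamma' \cup \alpha \,\vdash_1\, \bot
\quad \Longleftrightarrow \quad
\Gamma'_{\rest\alpha} \,\vdash_1\, \bot,
\]
valid for any set of clauses~$\Gamma'$. For $(\Leftarrow)$, each clause of
$\Gamma'_{\rest\alpha}$ is derivable from its parent in $\Gamma'$ by unit-resolving
with literals of $\alpha$, so a $\vdash_1$-refutation from
$\Gamma'_{\rest\alpha}$ lifts to one from $\Gamma' \cup \alpha$. For
$(\Rightarrow)$, apply Fact~\ref{fac:unit_propagation} with substitution
$\alpha$ and use $\alpha_{\rest\alpha} = \emptyset$, which gives
$(\Gamma' \cup \alpha)_{\rest\alpha} = \Gamma'_{\rest\alpha} \vdash_1 \bot$.

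Given these, the outer equivalence is quick:
$\Gamma \vdash_1 C \lor D$ unfolds to
$\Gamma \cup \alpha \cup \olnot{D \setminus C} \vdash_1 \bot$, and applying
the auxiliary tool with $\Gamma' = \Gamma \cup \olnot{D \setminus C}$,
together with the fact that
$(\olnot{D \setminus C})_{\rest\alpha} = \olnot{D \setminus C}$, reduces this to
$\Gamma_{\rest\alpha} \cup \olnot{D \setminus C} \vdash_1 \bot$,
which is exactly $\Gamma_{\rest\alpha} \vdash_1 D \setminus C$. For the inner
equivalence the $(\Leftarrow)$ direction is immediate, since
$\olnot{D \setminus C} \subseteq \olnot D$ makes the hypothesis stronger.
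For $(\Rightarrow)$ I will argue that in any unit-propagation refutation of
$\Gamma_{\rest\alpha} \cup \olnot D
= \Gamma_{\rest\alpha} \cup \olnot{D \setminus C} \cup \olnot{D \cap C}$,
the units in $\olnot{D \cap C}$ are useless: they cannot resolve against
$\Gamma_{\rest\alpha}$ because no variable of $C$ appears there, and since
$D$ is non-tautological, no two of the units in $\olnot D$ are
complementary; hence those units can never participate in producing
$\bot$ and may be pruned, yielding a refutation from
$\Gamma_{\rest\alpha} \cup \olnot{D \setminus C}$.

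The main obstacle is purely bookkeeping: one must keep straight the dual role
of $\alpha$ as both a partial assignment and a set of unit clauses, and
locate the precise places where the non-tautology of $C \lor D$ is
invoked --- namely, to ensure disjoint variable sets for $\alpha$ and
$\olnot{D \setminus C}$, and to rule out complementary pairs among the
units in $\olnot D$.
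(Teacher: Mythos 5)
Your proposal is correct and rests on the same two tools as the paper's proof: the observation that every clause of $\Gamma_{\rest\alpha}$ is derivable from $\Gamma\cup\alpha$ by unit propagation (giving one direction), and Fact~\ref{fac:unit_propagation} applied with the substitution $\alpha$ (giving the other). Your auxiliary equivalence $\Gamma'\cup\alpha\vdash_1\bot\Leftrightarrow\Gamma'_{\rest\alpha}\vdash_1\bot$ is simply a clean packaging of exactly these two steps. The paper's proof is more economical in structure: it proves the chain of implications $\Phi_1\Rightarrow\Phi_2\Rightarrow\Phi_3\Rightarrow\Phi_1$ (where $\Phi_1,\Phi_2,\Phi_3$ are the three statements left to right), so each equivalence follows by going around the circle. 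You instead prove the two biconditionals independently, which forces you to also argue $\Phi_2\Rightarrow\Phi_1$ directly via the ``useless units'' pruning argument; that argument is valid (the variables of $C$ occur nowhere in $\Gamma_{\rest\alpha}$ or $\olnot{D\setminus C}$, so the units $\olnot{D\cap C}$ can never be resolved away), but it is redundant once the other implications are in place. One small slip: in your treatment of the inner equivalence the $(\Leftarrow)$ and $(\Rightarrow)$ labels appear to be swapped relative to the order in which the statement is written --- the monotonicity observation ($\olnot{D\setminus C}\subseteq\olnot D$) gives $\Phi_1\Rightarrow\Phi_2$, not $\Phi_2\Rightarrow\Phi_1$ --- but the substance of each argument is correct.
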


\begin{proof}
The left-to-right directions are immediate from the definitions,
since $\Gamma_{\rest\alpha}$ is derivable from
$\Gamma \cup \alpha$ using unit propagation.
To show that
$\Gamma \vdash_1 C \lor D$ implies
$\Gamma_{\rest\alpha} \vdash_1 D \setminus C$,
suppose
$\Gamma \cup \alpha \cup \olnot{D} \vdash_1 \bot$
and apply Fact~\ref{fac:unit_propagation}.
\end{proof}

\subsection{Inference rules}%
\label{sec:inferences}

We will describe a series if inference rules
which can be used to add a clause~$C$ to a set of clauses~$\Gamma$.
In increasing order of strength the rules are
\[
\rm
\tBC \leftarrow
\tRAT \leftarrow
\tSPR \leftarrow
\tPR \leftarrow
\tSR .
\]
We will show that in each case the sets $\Gamma$ and $\Gamma \cup \{ C \}$ are \emph{equisatisfiable}, that is, either they are both
satisfiable or both unsatisfiable.
The definitions follow~\cite{JHB:inprocessing,HHW:verifying,HKB:StrongExtensionFree},
except for the new notion $\tSR$ of ``substitution redundancy''.\footnote{%
M.\ Heule [personal communication, 2018] has independently formulated an inference rule
``permutation redundancy'' ($\permR$) which allows only substitutions which set some variables
to constants and acts
as a permutation on the remaining literals.
This is a special case of $\tSR$; but unlike $\tSR$, $\permR$ does not allow
identifying distinct literals. However, we do not
know the strength of $\permRnnv$ relative to $\SRnnv$ (even if deletion
is allowed for both systems).}
All of these rules can be viewed as allowing the
introduction of clauses that hold ``without loss of
generality''~\cite{RebolaPardoSuda:SatPreserving}.
The rules are summarized in a table earlier in this section.

Let $\Gamma$ be a set of clauses and $C$ a
clause with a distinguished literal~$p$,
so that $C$ has the
form~$p \dotlor C^\prime$.

\begin{defiC}[\cite{Kullmann:WorstCase3SAT,Kullmann:NewMethods3SAT}]
The clause $C$ is a \emph{blocked clause} ($\tBC$)
with respect to $p$ and~$\Gamma$
if, for every clause $D$ of the form
$\olnot p \dotlor D^\prime$ in~$\Gamma$,
the set $C^\prime \cup D^\prime$ is tautological.
\end{defiC}

Notice that
the condition ``$C' \cup D'$ is tautological'' above
would be equivalent to $\emptyset \vdash_1 C' \vee D'$,
except that our notation does not allow us
to write the expression $C' \vee D'$ if $C' \cup D'$ is tautological,
since it is not a clause.
Since $\olnot p$ does not appear in $C'$ or~$D'$, it would also be
equivalent to $\emptyset \vdash_1\penalty10000 p \vee\penalty10000 C' \vee\penalty10000 D'$.
Compare with the
definition of RAT below.

\begin{defiC}[\cite{JHB:inprocessing,HeuleBiere:Variable,WHH:DRATtrim}]
A clause $C$ is a \emph{resolution asymmetric tautology} ($\tRAT$)
with respect to $p$ and~$\Gamma$
if, for every clause $D$ of the form
$\olnot p \dotlor D^\prime$ in~$\Gamma$, either
$C^\prime \cup D^\prime$ is
tautological or
\[
\Gamma \vdash_1 p \lor C^\prime \lor D^\prime.
\]
\end{defiC}

Here we write $p \lor C^\prime$ instead of $C$ to emphasize
that we include the literal~$p$ (some definitions of $\tRAT$ omit it).
Clearly, being $\tBC$ implies being $\tRAT$.

\begin{exaC}[\cite{Kullmann:GeneralizationER}]\label{ex:BC_extension}
Let $\Gamma$ be a set of clauses in which the
variable $x$ does not occur, but the variables $p$ and~$q$ may occur. Consider the three clauses
\begin{equation*} \label{eq:extensionClauses}
x\lor \olnot p \lor \olnot q
\quad\quad\quad
\olnot x \lor p
\quad\quad\quad
\olnot x \lor q
\end{equation*}
which together express that $x \leftrightarrow (p \wedge q)$.
Let $\Gamma_1 \subset \Gamma_2 \subset \Gamma_3$
be $\Gamma$ with the three clauses above successively added.
Then $x\lor \olnot p \lor \olnot q$ is $\tBC$ with respect to $\Gamma$ and $x$,
because no clause in $\Gamma$ contains $\olnot x$, so there is nothing
to check.
The second clause $\olnot x \lor p$ is $\tBC$ with respect to $\Gamma_1$ and $\olnot x$
because the only clause in $\Gamma_1$ containing $x$ is
$x\lor \olnot p \lor \olnot q$, and resolving this with $\olnot x \lor p$ gives a tautological conclusion.
The third clause $\olnot x \lor q$ is $\tBC$ with respect to $\Gamma_2$ and $\olnot x$ in a similar way.
\end{exaC}

It follows from the example that we can use the BC rule to
simulate extended resolution
if we are allowed to introduce new variables; see Section~\ref{sec:ER_new_variables}.

We say the clause $C$ is RAT with respect to~$\Gamma$
if it is RAT with respect to $p$ and~$\Gamma$ for some
literal~$p$ in~$C$, and similarly for~BC\@.

\begin{thmC}[\cite{Kullmann:GeneralizationER,JHB:inprocessing}]\label{thm:BCandRAT}
If $C$ is $\tBC$ or $\tRAT$ with respect to $\Gamma$, then
$\Gamma$ and $\Gamma \cup \{ C \}$ are equisatisfiable.
\end{thmC}

\begin{proof}
It suffices to show that if $\Gamma$ is satisfiable, then
so is $\Gamma\cup\{ C\}$.
Let $\tau$ be any total assignment satisfying $\Gamma$.
We may assume $\tau \vDash \olnot{C}$, as otherwise we are done.
Let $\tau^\prime$ be $\tau$ with the value of  $\tau(p)$
switched  to satisfy~$p$. Then $\tau^\prime$ satisfies $C$,
along with every clause in $\Gamma$ which does
not contain $\olnot{p}$.
Let $D = \olnot{p} \dotlor D^\prime$ be any clause in $\Gamma$ which
contains $\olnot{p}$. It follows from the $\tRAT$ assumption that
$\Gamma \vDash C \lor D^\prime$, so $\tau \vDash D^\prime$
since $\tau \vDash \olnot{C}$.
Hence $\tau^\prime \vDash D^\prime$ and thus $\tau^\prime \vDash D$.
This shows that~$\tau^\prime \vDash \Gamma \cup \{ C \}$.
\end{proof}

For the rest of this section, let $\alpha$ be the partial assignment $\olnot{C}$.
In a moment we will introduce the rules
SPR, PR and SR\@. These are variants of a common form, and
we begin by showing that RAT can also be expressed in
a similar way (in the literature this form of RAT
is called \emph{literal propagation redundant} or LPR).

\begin{thmC}[\cite{HKB:StrongExtensionFree}]\label{thm:RATandLPR}
A clause $C$ is $\tRAT$ with respect to $p$ and~$\Gamma$
if and only if
$\Gamma_{\rest \alpha} \vdash_1 \Gamma_{\rest \tau}$
where $\tau$ is the partial assignment identical to
 $\alpha$ except at $p$, with $\tau(p)=1$.
\end{thmC}

\begin{proof}
First suppose that $C$ satifies the second condition.
Consider any clause~$D$ of the form
$\olnot p \dotlor D^\prime$ in~$\Gamma$.
We need to show that either $C \cup D^\prime$ is tautological or
$\Gamma \vdash_1 C \lor D^\prime$.
Suppose $C \cup D^\prime$ is not tautological.
Then $\alpha \not \vDash D^\prime$, $\tau \not \vDash D$, and
 by Lemma~\ref{lem:RUP_restriction}
it is enough to show $\Gamma_{\rest\alpha} \vdash_1 D^\prime$.
But this now follows
from 
$\Gamma_{\rest\alpha} \vdash_1 D_{\rest \tau}$,
since $D_{\rest \tau} = {D^\prime}_{\rest \alpha} \subseteq D^\prime$.

Now suppose $C$ is $\tRAT$ with respect to $p$ and~$\Gamma$.
 Consider any  $D \in \Gamma$ such that $\tau \not \vDash D$
and thus~$D_{\rest\tau} \in \Gamma_{\rest \tau}$.
We must show that $\Gamma_{\rest\alpha} \vdash_1 D_{\rest\tau}$.
If $\olnot{p} \notin D$ this is trivial, since then
$D_{\rest\tau} = D_{\rest\alpha} \in \Gamma_{\rest\alpha}$.
Otherwise $D = \olnot{p} \dotlor D^\prime$, where
$\alpha \not \vDash D^\prime$ since $\tau \not \vDash D$,
so $C \cup D^\prime$ is not tautological.
By the $\tRAT$ property, $\Gamma \vdash_1 C \lor D^\prime$.
By Lemma~\ref{lem:RUP_restriction} this implies
$\Gamma_{\rest\alpha} \vdash_1 D^\prime \setminus C$.
But $D^\prime \setminus C = {D^\prime}_{\rest\alpha} = D_{\rest\tau}$.
\end{proof}

\begin{defiC}[\cite{HKB:StrongExtensionFree}]\label{def:SPR}
A clause $C$ is \emph{subset propagation redundant} ($\tSPR$) with respect to~$\Gamma$
if there is a partial assignment~$\tau$ with $\tdom(\tau)=\tdom(\alpha)$ such that $\tau \vDash C$ and
$\Gamma_{\rest \alpha} \vdash_1 \Gamma_{\rest\tau}$.
\end{defiC}

\begin{defiC}[\cite{HKB:StrongExtensionFree}]\label{def:PR}
A clause $C$ is \emph{propagation redundant} ($\tPR$) with respect to~$\Gamma$
if there is a partial assignment~$\tau$ such that $\tau \vDash C$ and
$
\Gamma_{\rest \alpha} \vdash_1 \Gamma_{\rest\tau}.
$
\end{defiC}

\begin{defi}\label{def:SR}
A clause $C$ is \emph{substitution redundant} ($\tSR$) with respect to~$\Gamma$
if there is a substitution~$\tau$ such that $\tau \vDash C$ and
$\Gamma_{\rest \alpha} \vdash_1 \Gamma_{\rest\tau}$.
\end{defi}

\begin{exaC}[based on~\cite{HKB:StrongExtensionFree}]\label{ex:PHP_in_SR}
Let $\Gamma$ be the pigeonhole principle $\PHP_n$ (see Section~\ref{sec:PHP})
in variables~$p_{i,j}$ expressing that pigeon~$i$
goes to hole~$j$.
Let $C$ be the clause $\olnot{p_{1,0}} \vee p_{0,0}$
so that $\alpha$
is the partial assignment $p_{1,0} \wedge \olnot{p_{0,0}}$.

Let $\pi$ be the substitution
which swaps pigeons $0$ and~$1$; that is,
$\pi(p_{0,j}) = p_{1,j}$ and $\pi(p_{1,j}) = p_{0,j}$
for every hole $j$, and $\pi$ is otherwise the identity.
Notice that, by the symmetries of the pigeonhole principle,
$\Gamma_{\rest \pi} = \Gamma$ and thus
$\Gamma_{\rest\alpha} = {(\Gamma_{\rest\pi})}_{\rest\alpha}
=\Gamma_{\rest \alpha \circ \pi}$.
Let $\tau = \alpha \circ \pi$, so $\tau$ is the same as $\pi$
except that $\tau(p_{0,0})=1$ and $\tau(p_{1,0}) = 0$.

Then $\tau \vDash C$ and $\Gamma_{\rest \alpha} \vdash_1 \Gamma_{\rest\tau}$ (since they are the same set of clauses).
Hence we have shown that~$C$ is $\tSR$ with respect to $\Gamma$.

We go on to sketch a polynomial size
$\DSRnnv$ refutation of $\Gamma$,
that is, one that uses $\tSR$ inferences, resolution and deletion
but introduces no new variables
(see Section~\ref{sec:alternateNNV} below).
Resolve $C$ with the hole axiom
$\olnot{p_{1,0}} \vee \olnot{p_{0,0}}$ to derive
the unit clause~$\olnot{p_{1,0}}$. Delete $C$,
so that we are now working with the set of clauses
\mbox{$\Gamma \cup \{ \olnot{p_{1,0}} \}$}.
Let $C'$ be the clause $\olnot{p_{2,0}} \vee p_{0,0}$
and let $\alpha'$
be its negation~\mbox{$p_{2,0} \wedge \olnot{p_{0,0}}$}.
Let $\pi'$ be the substitution
which swaps pigeons~$0$ and~$2$
and let $\tau' = \alpha' \circ \pi'$.
As before $\tau' \vDash C'$ and
${(\Gamma \cup \{ \olnot{p_{1,0}} \})}_{\rest\alpha'}
\vdash_1 {(\Gamma \cup \{ \olnot{p_{1,0}} \})}_{\rest\tau'}$,
since neither $\alpha'$ nor $\tau'$ affects $p_{1,0}$
so these are again the same set of clauses.
Hence we may derive $C'$ by a $\tSR$ inference, then
resolve with the hole axiom
$\olnot{p_{2,0}} \vee \olnot{p_{0,0}}$
to get $\olnot{p_{2,0}}$.

Carrying on in this way, we eventually derive
$\Gamma \cup \{ \olnot{p_{1,0}} \}
\cup \dots \cup \{ \olnot{p_{n-1,0}} \}$.
We now resolve each unit clause $\olnot{p_{i,0}}$
with the pigeon axiom for pigeon $i$,
for $i=1, \dots, n-1$.
After some deletions, we are left
with clauses asserting that pigeons $1, \dots, n-1$
map injectively to holes $1, \dots, n-2$.
This is essentially~$\PHP_{n-1}$. We carry on inductively
to derive $\PHP_{n-2}$ etc.\ and can easily derive a contradiction
when we get to $\PHP_2$.
\end{exaC}

Section~\ref{sec:PHP} contains a more careful
version of this argument, refuting $\PHP_n$
using $\tSPR$ inferences and no
deletion.

\begin{thm}\label{thm:SRequisat}
If $C$ is $\tSR$ with respect to $\Gamma$, then $\Gamma$ and $\Gamma\cup\{C\}$ are equisatisfiable.
Hence the same is true for $\tSPR$ and $\tPR$.
\end{thm}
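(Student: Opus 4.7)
The plan is to generalize the proof of Theorem~\ref{thm:BCandRAT}, replacing the single literal flip used there by composition with the substitution~$\tau$ witnessing the $\tSR$ property. One direction of equisatisfiability is trivial, since any assignment satisfying $\Gamma \cup \{C\}$ already satisfies~$\Gamma$. So I would fix a total assignment $\sigma$ with $\sigma \vDash \Gamma$ and produce an assignment satisfying $\Gamma \cup \{C\}$. If $\sigma \vDash C$ already, there is nothing to do; otherwise $\sigma \vDash \alpha$, meaning $\sigma$ extends the partial assignment~$\alpha = \olnot C$, and in particular $\sigma \circ \alpha$ agrees with $\sigma$ on every variable.

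The candidate assignment is $\sigma' := \sigma \circ \tau$. First I would verify $\sigma' \vDash C$. By definition of SR, $\tau \vDash C$, so either some literal $p \in C$ has $\tau(p) = \tTrue$, in which case $\sigma'(p) = \sigma(\tTrue) = \tTrue$; or else $\tau(C)$ is tautological with two literals $p,q \in C$ satisfying $\tau(p) = \olnot{\tau(q)}$, in which case $\sigma(\tau(p))$ and $\sigma(\tau(q))$ are complementary, so at least one is~$\tTrue$. Either way $\sigma' \vDash C$.

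Next I would show $\sigma' \vDash \Gamma$ by applying Lemma~\ref{lem:substitution_composition} twice. Because $\sigma$ extends~$\alpha$, the composition $\sigma \circ \alpha$ is equal to $\sigma$ on every variable, so $\sigma \circ \alpha \vDash \Gamma$; by the lemma this gives $\sigma \vDash \Gamma_{\rest\alpha}$. The SR hypothesis $\Gamma_{\rest\alpha} \vdash_1 \Gamma_{\rest\tau}$ and soundness of unit propagation imply $\Gamma_{\rest\alpha} \vDash \Gamma_{\rest\tau}$, so $\sigma \vDash \Gamma_{\rest\tau}$. Applying Lemma~\ref{lem:substitution_composition} in the other direction, $\sigma \circ \tau \vDash \Gamma$, which is $\sigma' \vDash \Gamma$. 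The $\tSPR$ and $\tPR$ cases then follow for free, since each is an SR inference whose witness happens to be a partial assignment.

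The argument is largely mechanical once the correct assignment is chosen; the only point requiring care is the tautological branch in verifying $\sigma' \vDash C$, which is the one place where $\tau$ being a genuine substitution rather than a partial assignment matters and where one must remember that $\tau \vDash C$ is defined disjunctively via the presence of~$\tTrue$ or a complementary pair in~$\tau(C)$.
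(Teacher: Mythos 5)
Your proof is correct and takes the same route as the paper's: fix a total assignment $\sigma$ satisfying $\Gamma$, assume $\sigma \vDash \olnot C$, and show $\sigma \circ \tau$ satisfies both $\Gamma$ (via $\sigma \vDash \Gamma_{\rest\alpha}$, the SR hypothesis, and Lemma~\ref{lem:substitution_composition}) and $C$ (from $\tau \vDash C$). The only difference is cosmetic: you spell out the case analysis for why $\sigma \circ \tau \vDash C$ when $\tau(C)$ is tautological, which the paper compresses into the phrase ``and $\pi \circ \tau \vDash C$ since $\tau \vDash C$.''
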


\noindent
Theorem~\ref{thm:SRequisat} trivially implies the same statement
for $\tBC$ and $\tRAT$ (this was Theorem~\ref{thm:BCandRAT}
above).

\begin{proof}
Again it is sufficient to show that if $\Gamma$ is satisfiable,
then so is $\Gamma \cup \{ C \}$.
Suppose we have a substitution~$\tau$ such that $\tau \vDash C$ and
$\Gamma_{\rest \alpha} \vdash_1 \Gamma_{\rest\tau}$.
Let $\pi$ be any total assignment satisfying $\Gamma$.
If $\pi \vDash C$ then
we are done. Otherwise $\pi \vDash \olnot{C}$, so
$\pi \supseteq \alpha$ and $\pi$~satisfies $\Gamma_{\rest\alpha}$ by Lemma~\ref{lem:substitution_composition}.
Thus, by the assumption, $\pi \vDash \Gamma_{\rest\tau}$.
Therefore $\pi\circ\tau \vDash \Gamma$ by Lemma~\ref{lem:substitution_composition},
and $\pi \circ \tau \vDash C$ since $\tau \vDash C$.
\end{proof}

This proof of Theorem~\ref{thm:SRequisat}
 still goes through if we replace
$\Gamma_{\rest \alpha} \vdash_1 \Gamma_{\rest\tau}$
 with the weaker assumption
that~$\Gamma_{\rest \alpha} \vDash \Gamma_{\rest \tau}$.
The advantage of using $\vdash_1$ is that it is efficiently checkable.
Consequently, the conditions of being BC, RAT, SPR, PR or SR with respect
to $\Gamma$ are all polynomial-time checkable, as long
as we include the partial assignment or substitution $\tau$
as part of the input.



\subsection{Proof systems with new variables}%
\label{sec:proofsystems}

This section introduces proof systems based on
the $\tBC$, $\tRAT$,  $\tSPR$, $\tPR$ and $\tSR$ inferences.
Some of the systems also allow the use of the
deletion rule: these systems are denoted
$\tDBC$, $\tDRAT$,  etc. All the proof systems
are \emph{refutation systems}. They
start with a set of clauses $\Gamma$, and successively
derive sets~$\Gamma_i$ of clauses, first $\Gamma_0 = \Gamma$,
then $\Gamma_1,\Gamma_2, \ldots, \Gamma_m$ until
reaching a set $\Gamma_m$ containing the empty clause.
It will always be the case that if $\Gamma_i$ is satisfiable,
then $\Gamma_{i+1}$ is satisfiable. Since the empty clause~$\perp$
is in $\Gamma_{m}$, this last set is not satisfiable. This
suffices to show that $\Gamma$ is not satisfiable.

\begin{defi}%
\label{def:proofsystems}
A $\tBC$, $\tRAT$, $\tSPR$, $\tPR$ or $\tSR$ 
proof (a refutation)
of $\Gamma$ is a sequence $\Gamma_0, \ldots, \Gamma_m$
such that $\Gamma_0 = \Gamma$, $\perp\in\Gamma_m$ and
each $\Gamma_{i+1} =\Gamma_i\cup\{C\}$, where either
\begin{itemize}
\item
$\Gamma_i \vdash_1 C$ 
 or
\item
$C$ is $\tBC$, $\tRAT$, $\tSPR$, $\tPR$, or $\tSR$ (respectively) with respect to $\Gamma_i$.
\end{itemize}
For $\tBC$ or $\tRAT$ steps, the proof must specify which~$p$ is used,
 and for $\tSPR$, $\tPR$ or $\tSR$, it must specify which~$\tau$.
\end{defi}

There is no constraint on the variables
that appear in clauses $C$ introduced in $\tBC$, $\tRAT$, etc.\ steps.
They are free to include new variables that did not occur
in~$\Gamma_0, \dots, \Gamma_i$.

\begin{defi}\label{def:proofsystemsDeletion}
A $\tDBC$, $\tDRAT$, $\tDSPR$, $\tDPR$ or $\tDSR$ proof allows
the same rules of
inference (respectively) as Definition~\ref{def:proofsystems},
plus the \emph{deletion} inference rule:
\begin{itemize}
\item
$\Gamma_{i+1} = \Gamma_i \setminus \{ C \}$
for some $C\in\Gamma_i$.
\end{itemize}
\end{defi}


\noindent
Resolution can be simulated by $\tRUP$ inferences (Lemma~\ref{lem:RUP_vs_Res}),
so all the
systems introduced in this and the next subsection
simulate resolution. Furthermore, by Theorems~\ref{thm:BCandRAT} and~\ref{thm:SRequisat},
they are sound. Since the inferences are defined
using $\vdash_1$, they are polynomial time verifiable, as the
description of~$\tau$ is included with every $\tSPR$, $\tPR$ or $\tSR$ inference. Hence they
are all proof systems in the sense of
Cook-Reckhow~\cite{CookReckhow:proofsstoc,CookReckhow:proofs}.

The deletion rule deserves more explanation. First, we allow
\emph{any} clause to be deleted, even the initial clauses from $\Gamma$. So it is possible that $\Gamma_i$ is
unsatisfiable but
$\Gamma_{i+1}$ is satisfiable after a deletion. For us, this is okay
since we focus on refuting sets of unsatisfiable clauses, not on
finding satisfying assignments of satisfiable sets of clauses.
SAT solvers generally wish to maintain the equisatisfiability
property: they use deletion extensively to prune the search time,
but are careful only to perform deletions that preserve both
satisfiability and unsatisfiability, generally as justified by
the $\tBC$ or $\tRAT$ rules. Since applying $\tRAT$, or more generally $\tPR$ or $\tSR$, can
change the satisfying assignment, a SAT solver may also need to keep
a proof log with information about how to reverse the steps of
the proof once a satisfying assignment is found
(see~\cite{JHB:inprocessing}).

Second, deletion is important for us because the property of
being $\tBC$, $\tRAT$ etc.\ involves a universal quantification over the current
set of clauses $\Gamma_i$. So deletion can make the systems
more powerful, as removing clauses from $\Gamma_i$ can make more inferences possible.
For example, the unit clause $x$ is BC with respect to
the set~\mbox{$\{ x \lor y \}$}, since the literal $\olnot{x}$
does not appear, but is not even SR with respect to the
set~$\{ x \lor y, \, \olnot{x} \}$,
since~$\{ x \lor y, \, \olnot{x} \}$
and~$\{ x \lor y, \, \olnot{x}, \, x \}$ are not equisatisfiable.
An early paper on this
by Kullmann~\cite{Kullmann:GeneralizationER} exploited
deletions to generalize the power
of $\tBC$ inferences.

As we will show in Section~\ref{sec:extended_resolution}, all the systems defined so
far are equivalent to extended resolution, because of the ability to freely introduce new variables.
The main topic of the paper is the systems we introduce next,
which lack this ability.

\subsection{Proof systems without new variables}\label{sec:alternateNNV}

\begin{defi}\label{def:noNewVars}
A $\tBC$ refutation of $\Gamma$ \emph{without new variables},
or, for short, a $\BCnnv$ refutation of $\Gamma$,
is a $\tBC$ refutation of $\Gamma$ in which only variables from $\Gamma$ appear.
The systems $\RATnnv$, $\PRnnv$ etc.\ and $\DBCnnv$, $\DRATnnv$, $\DPRnnv$ etc.\ are
defined similarly.
\end{defi}

There is an alternative natural definition of ``without new variables'',
which requires not just that a refutation of~$\Gamma$
uses only variables that are used in~$\Gamma$, but also that once a variable
has been eliminated from all clauses through the use of deletion, it may not be
reused subsequently in the refutation. An equivalent way to state this
is that a clause~$C$ inferred
by a $\tBC$, $\tRAT$, $\tSPR$, $\tPR$ or $\tSR$ inference cannot involve
any variable which does not occur in the \emph{current} set of clauses.

This stronger definition is in fact essentially
equivalent to Definition~\ref{def:noNewVars},  for
a somewhat trivial reason. More precisely, any refutation
that satisfies Definition~\ref{def:noNewVars} can be converted into
a refutation that satisfies the stronger condition with at worst a
polynomial increase in the size of the refutation.
We state the proof for $\DBCnnv$, but the same argument works
verbatim for the other systems $\DRATnnv$, $\DSPRnnv$, $\DPRnnv$ and $\DSRnnv$.

Suppose $\Pi$ is a $\DBCnnv$ refutation of $\Gamma$ in the sense
of Definition~\ref{def:proofsystemsDeletion},
and consider a variable~$x$.
Suppose~$x$ is present in
$\Gamma = \Gamma_0$ and in $\Gamma_i$, is not present in~$\Gamma_{i+1}$ through~$\Gamma_j$,
but is present again in~$\Gamma_{j+1}$.  The derivation of $\Gamma_{i+1}$ from
$\Gamma_i$ deleted a single clause~$x \lor C$; for definiteness we assume this clause contains~$x$
positively. The derivation of $\Gamma_{j+1}$ introduced a clause $x \lor D$
with a $\tBC$ inference; we may assume without loss of generality that $x$~occurs with the same
sign in~$x \lor D$ as in~\mbox{$x \lor C$}, since otherwise the sign of~$x$
could be changed throughout the refutation from~$\Gamma_{j+1}$ onwards.

The refutation~$\Pi$ is modified as follows. Before deleting the clause $x \lor C$,
infer the unit clause $x$ by a $\tBC$ inference; this is valid trivially,
since $\olnot x$ does not occur in~$\Gamma_i$. Then
continue the derivation with the unit clause~$x$ added to $\Gamma_i,\ldots, \Gamma_j$.
Since there are no other uses of~$x$ in $\Gamma_i,\ldots, \Gamma_j$, these steps
in the refutation remain valid (by part~(b) of Lemma~\ref{lem:subsumeNoDelete} below).
Upon reaching $\Gamma_j$, infer $x\lor D$ with a $\tBC$ inference relative to the variable~$x$.
This is allowed since $\olnot x$ does not appear in $\Gamma_j$.  Then delete the
unit clause~$x$ to obtain again $\Gamma_{j+1}$.
Repeating this  for every gap in $\Pi$ where $x$ disappears,
and then doing the same construction for every variable, yields a $\DBCnnv$ refutation that satisfies the
stronger condition.


\subsection{Two useful lemmas}

We conclude this subsection with two technical lemmas, which
we will use in several places to simplify the construction of proofs.

All the inference rules $\tBC$, $\tRAT$, $\tSPR$, $\tPR$ and $\tSR$ are ``non-monotone'', in the sense that
it is possible that $\Gamma_{\rest \alpha}\vdash_1 \Gamma_{\rest\tau}$ holds
but $\Gamma^\prime_{\rest \alpha}\vdash_1 \Gamma^\prime_{\rest\tau}$ fails,
for $\Gamma \subseteq \Gamma^\prime$.
In particular, adding more clauses to~$\Gamma$ may invalidate
a $\tBC$, $\tRAT$, $\tSPR$, $\tPR$ or $\tSR$ inference. Conversely, removing clauses
from~$\Gamma$ may allow new clauses to be inferred by one of these
inferences.
This is one reason for the importance of the deletion rule.

The next lemma is a useful technical tool that will sometimes let us
avoid using deletion. It states conditions under which
the extra clauses in~$\Gamma^\prime$ do not invalidate
a $\tRAT$, $\tSPR$, $\tPR$ or $\tSR$ inference.\footnote{The conclusion
of Lemma~\ref{lem:subsumeNoDelete} is true
also for $\tBC$ inferences.}

\begin{defi}
A clause $C$ \emph{subsumes} a clause~$D$ if $C\subseteq D$.
A set $\Gamma$ of clauses \emph{subsumes} a set~$\Gamma^\prime$ if
each clause of $\Gamma^\prime$ is subsumed by some clause of~$\Gamma$.
\end{defi}


\begin{lem}\label{lem:subsumeNoDelete}
Suppose $\alpha$ and $\tau$ are substitutions and
$\Gamma_{\rest \alpha} \vdash_1 \Gamma_{\rest \tau}$ holds.
Also suppose $\Gamma\subseteq\Gamma^\prime$.
\begin{enumerate}[(a)]
\item
If $\Gamma$ subsumes~$\Gamma^\prime$,
then $\Gamma^\prime_{\rest \alpha} \vdash_1 \Gamma^\prime_{\rest \tau}$.
\item
If $\Gamma^\prime$ is $\Gamma$ plus one or more clauses
involving only variables that are not in the domain of either $\alpha$ or~$\tau$,
then $\Gamma^\prime_{\rest \alpha} \vdash_1 \Gamma^\prime_{\rest \tau}$.
\end{enumerate}
Consequently, in either case, if~$C$ can be inferred from~$\Gamma$ by
a $\tRAT$, $\tSPR$, $\tPR$ or $\tSR$ rule, then $C$~can also be inferred from~$\Gamma^\prime$
by the same rule.
\end{lem}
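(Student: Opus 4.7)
The plan is to prove (a) and (b) independently, then deduce the ``consequently'' clause by noting that for $\tRAT$, $\tSPR$, $\tPR$ and $\tSR$ the extra conditions on $\tau$ (such as $\tau \vDash C$, $\tdom(\tau) = \tdom(\alpha)$, or $\tau$ being $\alpha$ with one variable flipped) depend only on $\tau$ and $C$, not on $\Gamma$. So the only hypothesis that needs to be re-established when we pass from $\Gamma$ to $\Gamma'$ is the unit-propagation implication $\Gamma'_{\rest\alpha} \vdash_1 \Gamma'_{\rest\tau}$, and that is exactly what (a) and (b) give us. A key preliminary observation, which I will invoke in both parts, is that since $\Gamma\subseteq\Gamma'$ we have $\Gamma_{\rest\alpha}\subseteq\Gamma'_{\rest\alpha}$, so any unit propagation derivation from $\Gamma_{\rest\alpha}$ is also a derivation from $\Gamma'_{\rest\alpha}$.

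For part (a), I will take an arbitrary $C\in\Gamma'$ with $\tau\not\vDash C$ and show $\Gamma'_{\rest\alpha}\vdash_1 C_{\rest\tau}$. By the subsumption hypothesis there is some $C_0\in\Gamma$ with $C_0\subseteq C$; because $\tau\not\vDash C$ and the satisfaction of a clause is monotone in its literal set, $\tau\not\vDash C_0$ as well, and $(C_0)_{\rest\tau}\subseteq C_{\rest\tau}$. By assumption $\Gamma_{\rest\alpha}\vdash_1 (C_0)_{\rest\tau}$, and by the preliminary observation this lifts to $\Gamma'_{\rest\alpha}\vdash_1 (C_0)_{\rest\tau}$. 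Since $(C_0)_{\rest\tau}\subseteq C_{\rest\tau}$, unfolding the definition $\Gamma'_{\rest\alpha}\cup\olnot{C_{\rest\tau}}\supseteq \Gamma'_{\rest\alpha}\cup\olnot{(C_0)_{\rest\tau}}$, and adding extra clauses cannot invalidate a unit-propagation refutation, so $\Gamma'_{\rest\alpha}\vdash_1 C_{\rest\tau}$.

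For part (b), write $\Gamma'=\Gamma\cup\Delta$ where each $D\in\Delta$ involves only variables outside $\tdom(\alpha)\cup\tdom(\tau)$. Such a $D$ is fixed pointwise by both substitutions, so $\alpha,\tau\not\vDash D$ and $D_{\rest\alpha}=D_{\rest\tau}=D$. Hence every clause of $\Delta$ appears verbatim in both $\Gamma'_{\rest\alpha}$ and $\Gamma'_{\rest\tau}$, and is trivially derivable by $\vdash_1$ from $\Gamma'_{\rest\alpha}$. For $D\in\Gamma$ with $\tau\not\vDash D$, the clause $D_{\rest\tau}\in\Gamma_{\rest\tau}$ is derivable from $\Gamma_{\rest\alpha}$ by the hypothesis, hence from $\Gamma'_{\rest\alpha}$ by the preliminary observation. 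This exhausts $\Gamma'_{\rest\tau}$.

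There is no real obstacle here: the statement is essentially a matter of unpacking the definitions, and the only point that might deserve a sentence of care is verifying that subsumption at the level of $\Gamma$ transfers through the restriction operation (namely that $C_0\subseteq C$ implies $(C_0)_{\rest\tau}\subseteq C_{\rest\tau}$ whenever $\tau\not\vDash C$), together with the equally routine fact that $\vdash_1$ is monotone under enlargement of both the antecedent and the consequent.
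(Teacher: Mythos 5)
Your proof is correct and follows essentially the same route as the paper's: for (a), pick a subsuming clause in $\Gamma$, use $\tau\not\vDash C_0$ and $(C_0)_{\rest\tau}\subseteq C_{\rest\tau}$, apply the hypothesis, then lift by monotonicity of $\vdash_1$; for (b) the paper simply calls it immediate, and the ``consequently'' clause is left unargued there, but your observation that the side conditions on $\tau$ for each of $\tRAT$, $\tSPR$, $\tPR$, $\tSR$ refer only to $\tau$ and $C$, not to $\Gamma$, is exactly the right justification.
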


\begin{proof}
We prove (a). Suppose $D \in \Gamma^\prime$ and $\tau \not \vDash D$.
We must show $\Gamma^\prime_{\rest \alpha} \vdash_1 D_{\rest\tau}$.
Let $E \in \Gamma$ with~$E \subseteq D$. Then
$\tau \not \vDash E$, so by assumption
$\Gamma_{\rest\alpha} \vdash_1 E_{\rest\tau}$.
Also $E_{\rest\tau} \subseteq D_{\rest\tau}$, so
$\Gamma_{\rest \alpha} \vdash_1 D_{\rest\tau}$.
It follows that~$\Gamma^\prime_{\rest \alpha} \vdash_1 D_{\rest\tau}$, since $\Gamma \subseteq \Gamma^\prime$.

The proof of (b) is immediate from the definitions.
\end{proof}

Our last lemma
gives a kind of normal form for propagation
redundancy. Namely, it
implies that when $C$ is $\tPR$ with respect to~$\Gamma$,  we
may assume without loss of generality that~$\tdom(\tau)$
includes~$\tdom(\alpha)$. We will use this later to show a limited simulation of $\tPR$ by~$\tSPR$.

\begin{lem}\label{lem:PRnormal}
Suppose $C$ is $\tPR$ with respect to $\Gamma$, witnessed by a partial assignment~$\tau$.
Then we have~$\Gamma_{\rest\alpha} \vdash_1 \Gamma_{\rest\alpha\circ\tau}$,
where $\alpha$ is the partial assignment $\olnot{C}$.
\end{lem}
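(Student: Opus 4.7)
The plan is to reduce the desired statement to what the $\tPR$ hypothesis already gives us. By Lemma~\ref{lem:substitution_composition}, $\Gamma_{\rest\alpha\circ\tau} = (\Gamma_{\rest\tau})_{\rest\alpha}$, so a clause $E \in \Gamma_{\rest\alpha\circ\tau}$ is exactly one of the form $D_{\rest\alpha}$ where $D \in \Gamma_{\rest\tau}$ and $\alpha \not\vDash D$. Thus it suffices to show $\Gamma_{\rest\alpha} \vdash_1 D_{\rest\alpha}$ for every such~$D$.

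The $\tPR$ hypothesis gives $\Gamma_{\rest\alpha} \vdash_1 \Gamma_{\rest\tau}$, and in particular $\Gamma_{\rest\alpha} \vdash_1 D$, which by definition unfolds to $\Gamma_{\rest\alpha} \cup \olnot{D} \vdash_1 \bot$. I would then apply Fact~\ref{fac:unit_propagation} to push the restriction~$\alpha$ through this unit-propagation refutation, obtaining $(\Gamma_{\rest\alpha})_{\rest\alpha} \cup (\olnot{D})_{\rest\alpha} \vdash_1 \bot$.

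The final step is to simplify both summands. Restriction is idempotent (an easy consequence of Lemma~\ref{lem:substitution_composition} with $\pi=\alpha$, since $\alpha\circ\alpha = \alpha$ for partial assignments), so $(\Gamma_{\rest\alpha})_{\rest\alpha} = \Gamma_{\rest\alpha}$. For the second summand, $\olnot{D}$ is the set of unit clauses $\{\olnot{p} : p \in D\}$; since $\alpha \not\vDash D$, no literal $p \in D$ has $\alpha(p)=\tTrue$, so each unit clause $\olnot{p}$ under restriction either disappears (precisely when $\alpha(p) = \tFalse$, i.e.\ the literals of~$D$ dropped in forming $D_{\rest\alpha}$) or is preserved unchanged (precisely when the variable of~$p$ lies outside $\tdom(\alpha)$, i.e.\ the literals kept in $D_{\rest\alpha}$). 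Hence $(\olnot{D})_{\rest\alpha} = \olnot{D_{\rest\alpha}}$, and we conclude $\Gamma_{\rest\alpha} \cup \olnot{D_{\rest\alpha}} \vdash_1 \bot$, which is exactly $\Gamma_{\rest\alpha} \vdash_1 D_{\rest\alpha}$.

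There is no real obstacle here: the lemma follows mechanically from Lemma~\ref{lem:substitution_composition} and Fact~\ref{fac:unit_propagation}. The only piece of bookkeeping to state carefully is the commutation $(\olnot{D})_{\rest\alpha} = \olnot{D_{\rest\alpha}}$, which is exactly where the hypothesis $\alpha \not\vDash D$ is used.
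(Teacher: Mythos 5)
Your proof is correct and is essentially the same argument as the paper's, repackaged more mechanically. The paper fixes $E \in \Gamma$, splits it as $E_1 \lor E_2 \lor E_3$ (literals in $\tdom(\tau)$, in $\tdom(\alpha)\setminus\tdom(\tau)$, and the rest), and observes that the unit clauses $\olnot{E_2}$ cannot participate in a unit-propagation refutation since no other clause in the mix mentions a variable from $\tdom(\alpha)$. You instead start from a clause $D \in \Gamma_{\rest\tau}$ and invoke Fact~\ref{fac:unit_propagation} to restrict the refutation $\Gamma_{\rest\alpha} \cup \olnot{D} \vdash_1 \bot$ by $\alpha$; the restriction discards exactly the unit clauses $\olnot p$ with $p \in \tdom(\alpha)$ (your $\olnot{E_2}$) and leaves $\Gamma_{\rest\alpha}$ and $\olnot{D_{\rest\alpha}}$ fixed. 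The key identity $(\olnot{D})_{\rest\alpha} = \olnot{D_{\rest\alpha}}$, which you correctly flag as the place where $\alpha \not\vDash D$ is used, is precisely the content of the paper's ``$E_3 = E_{\rest\pi}$'' observation. The trade is purely stylistic: you avoid the informal ``these literals are not used'' step in favor of a clean application of the restriction fact, at the cost of a bit more bookkeeping around idempotence of restriction.
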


\begin{proof}
Let $\pi = \alpha \circ \tau$. Suppose $E \in \Gamma$ is such that
$\pi \not \vDash E$.
 We must show that $\Gamma_{\rest\alpha} \vdash_1 E_{\rest\pi}$.
We can decompose $E$ as $E_1 \lor E_2 \lor E_3$ where $E_1$ contains the literals
in $\tdom(\tau)$, $E_2$ contains the literals in $\tdom(\alpha) \setminus \tdom(\tau)$
and $E_3$ contains the remaining literals.
Then $E_{\rest\tau} = E_2 \lor E_3$ and by the $\tPR$
assumption~\mbox{$\Gamma_{\rest\alpha} \vdash_1 E_{\rest\tau}$}, so there is a derivation
$\Gamma_{\rest\alpha} \cup \olnot{E_2} \cup \olnot{E_3} \vdash_1 \bot$.
But neither $\Gamma_{\rest\alpha}$ nor $\olnot{E_3}$ contain any variables from $\tdom(\alpha)$,
so the literals in $\olnot{E_2}$ are not used in this derivation.
Hence  $\Gamma_{\rest\alpha}  \cup \olnot{E_3} \vdash_1 \bot$,
which completes the proof since $E_3=E_{\rest\pi}$.
\end{proof}

\section{Relations with extended resolution}\label{sec:extended_resolution}
\subsection{With new variables}\label{sec:ER_new_variables}

It is known that $\tRAT$, and even $\tBC$,
can simulate extended resolution if new variables are
allowed~\cite{Kullmann:GeneralizationER}.
In extended resolution for any variables $p,q$
we are allowed to introduce a new variable $x$
together with three clauses expressing
that $x \leftrightarrow (p \wedge q)$.
As shown in Example~\ref{ex:BC_extension},
we can instead introduce these clauses using $\tBC$
inferences.
Thus all the systems described above which allow new variables
simulate extended resolution.
The converse holds as well:

\begin{thm}\label{thm:ERsimulates}
The system $\tER$ simulates $\tDSR$, and hence every other system above.
\end{thm}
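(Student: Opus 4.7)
The plan is to simulate the DSR refutation $\Gamma = \Gamma_0, \Gamma_1, \ldots, \Gamma_m$ in ER step by step, maintaining the invariant that at stage~$i$ we have derived every clause of $\Gamma_i[\sigma_i]$, where $\sigma_i$ is a renaming of each original variable $p$ to a fresh copy $p^{(i)}$ introduced via the extension rule. Since $\perp[\sigma_m] = \perp$, completing the simulation yields an ER refutation of~$\Gamma$. For a RUP step $\Gamma_i \vdash_1 C$, Lemma~\ref{lem:RUP_vs_Res} supplies a short resolution derivation of~$C$ which, renamed by $\sigma_i$, derives $C[\sigma_i]$ from $\Gamma_i[\sigma_i]$; set $\sigma_{i+1} = \sigma_i$. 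Deletion steps only shrink $\Gamma_i$, so the invariant is preserved trivially.

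The main work is at an SR step, where $C$ is added to $\Gamma_i$ with substitution witness~$\tau$. Mirroring the soundness argument of Theorem~\ref{thm:SRequisat}, introduce for each variable $p$ of~$\Gamma$ a fresh variable $p^{(i+1)}$ together with ER extension clauses expressing the ``if-then-else''
\[
p^{(i+1)} \;\leftrightarrow\; \bigl(C[\sigma_i] \wedge p^{(i)}\bigr) \,\vee\, \bigl(\overline{C[\sigma_i]} \wedge \tau(p)[\sigma_i]\bigr).
\]
Each such gate is built by a constant number of extension steps per variable, sharing one auxiliary variable for $C[\sigma_i]$, so the overhead is polynomial. Define $\sigma_{i+1}$ to rename each $p$ to $p^{(i+1)}$. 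Intuitively, $p^{(i+1)}$ records either $p^{(i)}$ or the $\tau$-switched value according to whether $C[\sigma_i]$ already holds, exactly as in the soundness proof.

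To restore the invariant, derive each clause $D[\sigma_{i+1}]$ for $D \in \Gamma_i \cup \{C\}$ by a two-case argument in ER. First case: for each literal $\ell_j$ of $C[\sigma_i]$, assuming $\ell_j$ holds, the extension clauses force $p^{(i+1)} = p^{(i)}$ on every relevant variable, so $D[\sigma_{i+1}]$ collapses to $D[\sigma_i]$, giving the clause $\overline{\ell_j} \vee D[\sigma_{i+1}]$. Second case: assuming $\overline{C[\sigma_i]}$, i.e.\ $\alpha[\sigma_i]$, the extension clauses force $p^{(i+1)} = \tau(p)[\sigma_i]$, collapsing $D[\sigma_{i+1}]$ to $D_{\rest\tau}[\sigma_i]$; the SR hypothesis $\Gamma_{i \rest \alpha} \vdash_1 \Gamma_{i \rest \tau}$ together with Lemma~\ref{lem:RUP_vs_Res} provides a short resolution derivation, yielding the clause $C[\sigma_i] \vee D[\sigma_{i+1}]$. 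Resolving the first-case clauses with the second-case clause on the literals $\ell_j$ of $C[\sigma_i]$ produces $D[\sigma_{i+1}]$. For $D = C$, the second case instead uses that $\tau \vDash C$ forces some literal of $C[\sigma_{i+1}]$ to $1$ under $\alpha[\sigma_i]$.

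The main obstacle is managing this layered case analysis cleanly through the extension clauses, since $C[\sigma_i]$ is a disjunction of several literals and each must be resolved out for each affected variable. However, each clause $D$ contributes only polynomially many inferences; summing over $|\Gamma_i|$ clauses and over all $m$ refutation steps gives an ER refutation whose size is polynomial in the size of the original DSR refutation, which is exactly what is required for simulation.
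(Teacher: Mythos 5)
Your proposal is correct and follows essentially the same approach as the paper's direct sketch: introduce a renamed generation of variables $p^{(i+1)}$ at each SR step with extension axioms encoding the if-then-else $p^{(i+1)} \leftrightarrow (C \wedge p^{(i)}) \vee (\lnot C \wedge \tau(p)^{(i)})$, then show by a case split on $C$ versus $\lnot C$ (using the SR hypothesis $\Gamma_{\rest\alpha}\vdash_1\Gamma_{\rest\tau}$ in the latter case) that $\Gamma_{i+1}$ transfers to the new variables. The paper leaves the resolution case split implicit and also offers an alternative high-level argument via $S^1_2$/PV; your version fills in the resolution details, with only the cosmetic caveats that the fresh variables should range over all variables of $\Gamma_{i+1}$ (since $C$ may introduce new ones) and that the ``$\tau\vDash D$ forces a literal true'' subcase applies to any $D\in\Gamma_i$, not just $D=C$.
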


\begin{proof} (Sketch)
It is known that
the theorem holds for $\tDPR$ in place of $\tDSR$.
Namely, \cite{KRPH:erDRAT} gives an explicit simulation 
of $\tDRAT$ by extended resolution, and~\cite{HeuleBiere:Variable} gives an explicit
simulation of $\tDPR$ by $\tDRAT$.
Thus extended
resolution simulates $\tDPR$.

We sketch a direct proof of the
simulation of $\tDSR$ by extended resolution.
Suppose $\Gamma_0, \ldots, \Gamma_m$ is a DSR proof
and in particular $\Gamma_{i+1} = \Gamma_i \cup \{ C \}$
is introduced by an $\tSR$ inference from~$\Gamma_i$ with a substitution~$\tau$.
Let $x_1, \dots, x_s$ be all variables occurring in $\Gamma_{i+1}$
including any new variables introduced in~$C$.
Using the extension rule,
introduce new variables $x_1^\prime, \ldots, x_s^\prime$
along with extension variables and extension axioms expressing
\[
x_j^\prime ~\liff~  ( x_j \land C) \lor (\tau(x_j)\land \lnot C).
\]
Here $\tau(x_j)$ stands for a fixed symbol from $\tLit \cup \{ 0,1\}$.
Let $\Gamma_{i+1}(\vec x/\vec x^\prime)$ be the set of clauses
obtained from $\Gamma_i$ by replacing each variable $x_j$
with $x_j^\prime$.
It can be proved using only resolution, using the extension axioms, that if all
 clauses in~$\Gamma_i$ hold then all clauses
in $\Gamma_{i+1}(\vec x/\vec x^\prime)$ hold.  The extended
resolution proof then proceeds inductively on~$i$
using the new variables~$x_j^\prime$ in place of the
old variables~$x_j$.

Another way to prove the full theorem is via the
theories of bounded arithmetic $S^1_2$~\cite{Buss:bookBA}
and PV~\cite{Cook:PV}. 
 Namely,
suppose we are given a $\tDSR$ proof $\Gamma_0,\ldots,\Gamma_m$
and a satisfying assignment $\pi_0$
for $\Gamma_0$. By induction,
there exists a satisfying assignment~$\pi_i$ for
each~$\Gamma_i$.
The inductive step, for an  $\tSR$ rule deriving $\Gamma_{i+1} = \Gamma_i \cup \{C\}$,
witnessed by a substitution~$\tau$,
is to set~$\pi_{i+1} = \pi_i$ if $\pi_i \vDash C$ and otherwise
to set~$\pi_{i+1} = \pi_i \circ \tau$,
as in the proof of Theorem~\ref{thm:SRequisat}.
The inductive hypothesis can be written as a $\Sigma^b_1$ formulas
and the induction has $m$ steps, so this is formally
$\Sigma^b_1$ length-induction
(or $\Sigma^b_1$-LIND) which is available in $S^1_2$.
Thus $S^1_2$~can prove the soundness
of $\tDSR$. By conservativity of $S^1_2$ over PV~\cite{Buss:bookBA},
the theory PV also proves the soundness of $\tDSR$.
A fundamental property of PV is that PV proofs
translate into uniform families of polynomial size extended resolution refutations~\cite{Cook:PV}.
Thus ER efficiently proves the soundness of $\tDSR$.
It follows that ER simulates $\tDSR$.
\end{proof}

\subsection{Without new variables}\label{sec:ER_nnv}

In the systems without the ability to freely add new variables,
we can still imitate extended resolution by adding dummy variables to the formula we want to refute.
This was observed already in~\cite{Kullmann:GeneralizationER}.

For $m \ge 1$, define $X^m$ to be the set consisting of only the
two clauses
\[
y \lor x_1 \lor \dots \lor x_m
\qquad \hbox{and} \qquad
y.
\]

\begin{lem}\label{lem:ER_to_BC}
Suppose $\Gamma$ has an $\tER$ refutation $\Pi$ of size $m$,
and that $\Gamma$ and $X^m$ have no variables in common.
Then $\Gamma \cup X^m$ has a $\BCnnv$-refutation~$\Pi^*$ of size
$O(m)$, which can furthermore
be constructed from $\Pi$ in polynomial time.
\end{lem}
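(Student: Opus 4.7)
The plan is to simulate the ER refutation $\Pi$ of $\Gamma$ inside $\Gamma\cup X^m$, using the variables $x_1,\dots,x_m$ from $X^m$ as stand-ins for the (at most $m$) extension variables introduced by $\Pi$, and crucially using the unit clause $y$ together with the long clause $y\vee x_1\vee\dots\vee x_m$ to enable the required BC inferences. Without loss of generality each extension axiom in $\Pi$ has the standard form $z\liff(p\wedge q)$ for literals $p,q$, represented by the three clauses $z\vee\olnot{p}\vee\olnot{q}$, $\olnot{z}\vee p$, $\olnot{z}\vee q$ (any more complex extensions can be decomposed into binary ones with only constant blow-up). Enumerate the extension variables of $\Pi$ in order of introduction as $z_1,\dots,z_k$ with $k\le m$; for each $j$ replace $z_j$ by $x_j$ throughout, and write $p_j^*,q_j^*$ for the resulting literals.

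I build $\Pi^*$ in three phases. First, for each $j=1,\dots,k$ in order, I introduce by BC the three ``$\olnot{y}$-wrapped'' extension clauses
\[
x_j\vee\olnot{p_j^*}\vee\olnot{q_j^*}\vee\olnot{y},
\qquad
\olnot{x_j}\vee p_j^*\vee\olnot{y},
\qquad
\olnot{x_j}\vee q_j^*\vee\olnot{y},
\]
blocked respectively on $x_j$, $\olnot{x_j}$ and $\olnot{x_j}$. Second, since $y$ is a unit clause, each wrapped clause $D\vee\olnot{y}$ implies its unwrapped version $D$ by a one-step RUP inference, so I derive the usual (renamed) extension axioms $x_j\vee\olnot{p_j^*}\vee\olnot{q_j^*}$, $\olnot{x_j}\vee p_j^*$ and $\olnot{x_j}\vee q_j^*$. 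Third, the current clause set now subsumes $\Gamma$ together with the renamed extension axioms of $\Pi$, so I replay the remaining resolution inferences of $\Pi$, each a single RUP step by Lemma~\ref{lem:RUP_vs_Res}, to derive $\bot$. Each extension of $\Pi$ contributes a constant number of steps to $\Pi^*$ and each resolution step contributes one RUP step, so $\Pi^*$ has size $O(m)$ and is built in polynomial time.

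The main obstacle is verifying the BC conditions in the first phase. For the clause $x_j\vee\olnot{p_j^*}\vee\olnot{q_j^*}\vee\olnot{y}$, no clause currently in the set contains $\olnot{x_j}$: $\Gamma$ shares no variables with $X^m$; the two axioms of $X^m$ contain only positive $x_i$'s; and the clauses already introduced for $z_i$ with $i<j$ cannot mention $x_j$, because $z_i$ was introduced before $z_j$ in $\Pi$. For $\olnot{x_j}\vee p_j^*\vee\olnot{y}$, the clauses currently containing $x_j$ are of exactly two kinds. The first is the long clause $y\vee x_1\vee\dots\vee x_m$, whose resolvent with $\olnot{x_j}\vee p_j^*\vee\olnot{y}$ contains both $y$ and $\olnot{y}$ and is therefore tautological. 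The second is the just-introduced $x_j\vee\olnot{p_j^*}\vee\olnot{q_j^*}\vee\olnot{y}$, whose resolvent is tautological via the complementary pair $p_j^*,\olnot{p_j^*}$ (exactly the Example~\ref{ex:BC_extension} pattern). The check for $\olnot{x_j}\vee q_j^*\vee\olnot{y}$ is symmetric. Thus attaching $\olnot{y}$ to every extension clause is precisely the ingredient that neutralises the otherwise-obstructing long axiom of $X^m$, while leaving the standard extension-style tautologies intact.
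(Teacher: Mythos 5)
Your proposal is correct and follows essentially the same approach as the paper: introduce the three extension clauses (with the second and third carrying the extra literal $\olnot{y}$ so that their resolvents with the long $X^m$ axiom are tautological), use the unit clause $y$ to strip off the $\olnot y$ by RUP, and then replay the resolution steps. The only cosmetic deviation is that you also attach $\olnot y$ to the first clause $x_j \vee \olnot{p_j^*} \vee \olnot{q_j^*}$, which the paper introduces directly without the wrapper since no clause contains $\olnot{x_j}$ at that point; this changes nothing substantive.
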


\begin{proof}
We describe how to change $\Pi$ into $\Pi^*$. We first
rename all extension variables to use names from $\{x_1, \dots, x_m\}$
and replace all resolution steps with $\vdash_1$ inferences.
Now consider an extension rule in~$\Pi$
which introduces the three extension
clauses~(\ref{eq:extensionClauses}) expressing
$x_i \ \liff \ (p \land q)$,
where we may assume that $p$ and $q$ are either variables of $\Gamma$
or from~$\{x_1, \dots, x_{i-1}\}$.
We simulate this by introducing successively the
three clauses
\[
x_i \lor \olnot p \lor \olnot q
\quad\quad\quad
\olnot x_i \lor p \lor \olnot y
\quad\quad\quad
\olnot x_i \lor q \lor \olnot y
\]
using the $\tBC$ rule.
The first clause, $x_i \lor \olnot p \lor \olnot q$, is $\tBC$ with respect to~$x_i$,
because $\olnot x_i$ has not appeared yet.
The second clause is $\tBC$ with respect to $\olnot{x_i}$,
because $x_i$ appears only in two earlier clauses, namely
$y \lor x_1 \lor \dots \lor x_m$,
which contains $y$, and
$x_i \lor \olnot p \lor \olnot q$,
which contains $\olnot p$. In both cases
the resolvent with $\olnot x_i \lor p \lor \olnot y$
is tautological.
 The third clause is similar.
The unit clause~$y$ is in $X^m$, so we can then
derive the remaining two needed extension clauses
$\olnot x_i \lor p$ and $\olnot x_i \lor q$
by two $\vdash_1$ inferences.
\end{proof}

In the terminology of~\cite{PitassiSanthanam:effectivePsim},
Lemma~\ref{lem:ER_to_BC} shows that
$\BCnnv$ \emph{effectively simulates} $\tER$,
in that we are allowed to transform the formula as well as the refutation
when we move from $\tER$ to $\BCnnv$.

The next corollary is essentially from~\cite{Kullmann:GeneralizationER}.
It shows how to use the lemma to construct examples
of usually-hard formulas which have short proofs in $\BCnnv$.
(We will give less artificial examples
of short $\SPRnnv$ proofs in Section~\ref{sec:Upperbounds}.)
Let $m(n)$ be the polynomial size upper bound on $\tER$ refutations
of the pigeonhole principle $\PHP_n$
which follows from~\cite{CookReckhow:proofs} --- see Section~\ref{sec:PHP}
for the definition of the $\PHP_n$ clauses.

\begin{cor}\label{cor:PHP_X}
The set of clauses $\PHP_n \cup X^{m(n)}$ has polynomial size
proofs in $\BCnnv$, but requires exponential size proofs
in constant depth Frege.
\end{cor}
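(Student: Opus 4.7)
The plan is to prove the upper and lower bounds separately; neither requires much work beyond citing existing results.

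For the upper bound, I would simply apply Lemma~\ref{lem:ER_to_BC}. By construction the variables of $\PHP_n$ (namely the $p_{i,j}$) are disjoint from those of $X^{m(n)}$ (namely $y$ and $x_1,\dots,x_{m(n)}$), so the hypothesis of the lemma is met. The classical result of Cook and Reckhow~\cite{CookReckhow:proofs} provides an $\tER$ refutation of $\PHP_n$ of size $m(n) = n^{O(1)}$, and the lemma converts it into a $\BCnnv$ refutation of $\PHP_n \cup X^{m(n)}$ of size $O(m(n))$, which is polynomial in the size of the input formula.

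For the lower bound, I would use a restriction argument to reduce refuting $\PHP_n \cup X^{m(n)}$ to refuting $\PHP_n$ itself in constant depth Frege. Consider the partial assignment $\rho$ that sets $y = \tTrue$ (the values of the $x_j$ are irrelevant, or can be set arbitrarily, say to $\tFalse$). Under $\rho$, both clauses of $X^{m(n)}$ become true and disappear, and the clauses of $\PHP_n$ are untouched since they share no variables with $X^{m(n)}$. Constant depth Frege refutations are closed under restrictions with at most no increase in size and no increase in depth, so any constant depth Frege refutation of $\PHP_n \cup X^{m(n)}$ yields a refutation of $\PHP_n$ of the same depth and at most the same size. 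The classical exponential lower bound on constant depth Frege refutations of $\PHP_n$, due to Ajtai and subsequently improved (see e.g.\ Krajíček--Pudlák--Woods and Pitassi--Beame--Impagliazzo), then transfers to $\PHP_n \cup X^{m(n)}$, giving the claimed exponential lower bound (still exponential in the size $n^{O(1)}$ of $\PHP_n \cup X^{m(n)}$).

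There is no real obstacle: the substantive work has already been done, in Lemma~\ref{lem:ER_to_BC} for the upper bound and in the known lower bound for $\PHP_n$. The only thing to check is that $X^{m(n)}$ is so trivial as a formula (in particular, satisfiable) that it cannot help a constant depth Frege refutation, and the one-variable restriction $y \mapsto \tTrue$ makes this precise.
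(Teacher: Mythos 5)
Your proof is correct and matches the paper's argument: the upper bound is exactly an application of Lemma~\ref{lem:ER_to_BC}, and the lower bound is exactly the restriction $y \mapsto 1$ to recover $\PHP_n$, then an appeal to the known exponential lower bound for $\PHP_n$ in constant depth Frege.
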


\begin{proof}
The upper bound follows from Lemma~\ref{lem:ER_to_BC}.
For the lower bound, let $\Pi$ be a refutation
in depth-$d$ Frege. Then we can restrict $\Pi$
by setting $y=1$ to obtain a depth-$d$
refutation of $\PHP_n$ of the same size.
By~\cite{KPW:PHP,PBI:PHP}, this must have
exponential size.
\end{proof}

The same argument can give a more general
result.
A propositional proof system $\calP$ is
\emph{closed under restrictions} if,
given any $\calP$-refutation of $\Gamma$ and any
partial assignment~$\rho$,
we can construct a $\calP$-refutation of $\Gamma_{\rest\rho}$
in polynomial time. (\cite{BKS:clauselearning} called such systems ``natural''.)
Most of the commonly-studied proof systems such as resolution,
Frege, etc.~are closed under restrictions.
On the other hand, it follows from  results in this paper
that $\BCnnv$ and $\RATnnv$ are not closed under restrictions.
To see this,
let~$\Gamma$ be $\BPHP_n \cup X^{m(n)}$ where $\BPHP_n$ is
the bit pigeonhole principle (see Section~\ref{sec:BPHP})
and $m$ is a suitable function.
Then $\Gamma$ has short $\BCnnv$ refutations,
since $\BPHP_n$ has short refutations in~$\tER$. But $\BPHP_n$ is a restriction of~$\Gamma$, as in Corollary~\ref{cor:PHP_X}, and
has no short $\RATnnv$ refutations by Theorem~\ref{thm:BPHP_size} below.

\begin{thm}
Let $\calP$ be any propositional proof system which is closed
under restrictions. If $\calP$ simulates $\BCnnv$, then $\calP$
simulates $\tER$.
\end{thm}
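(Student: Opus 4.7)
The plan is to combine the effective simulation in Lemma~\ref{lem:ER_to_BC} with the hypothesized closure under restrictions, using the fact that the auxiliary variables in $X^m$ are disjoint from those of~$\Gamma$ and can be trivially killed by a restriction.

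Suppose we are given an $\tER$-refutation $\Pi$ of a set of clauses $\Gamma$ of size $m$. First I would choose a fresh unit variable $y$ and fresh variables $x_1, \dots, x_m$ which do not appear in $\Gamma$, so that Lemma~\ref{lem:ER_to_BC} applies and yields, in polynomial time, a $\BCnnv$-refutation $\Pi^*$ of $\Gamma \cup X^m$ of size $O(m)$.  Since by assumption $\calP$ simulates $\BCnnv$, we can in polynomial time convert $\Pi^*$ into a $\calP$-refutation $\Pi'$ of $\Gamma \cup X^m$ of size polynomial in~$m$.

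Next I would apply the partial assignment $\rho$ that sets $y = 1$ and is undefined elsewhere. Both clauses $y \lor x_1 \lor \dots \lor x_m$ and $y$ of $X^m$ are satisfied by $\rho$, and so disappear under restriction. Because $y$ and the $x_i$ do not appear in $\Gamma$, we have $(\Gamma \cup X^m)_{\rest\rho} = \Gamma$. Since $\calP$ is closed under restrictions, we can convert $\Pi'$ in polynomial time into a $\calP$-refutation of $\Gamma$ whose size is polynomial in the size of~$\Pi'$, hence polynomial in~$m$. Composing the three polynomial-time transformations gives the desired simulation of $\tER$ by~$\calP$.

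There is essentially no obstacle beyond checking the bookkeeping: the only thing that could go wrong is a variable clash, but that is handled by taking the $X^m$-variables fresh, and then the restriction $y = 1$ precisely reverses the passage from $\Gamma$ to $\Gamma \cup X^m$ used in the lemma. Note that the argument is exactly the abstraction of the lower-bound technique used in Corollary~\ref{cor:PHP_X}, where the same restriction was used to transfer hardness from $\Gamma$ to $\Gamma \cup X^m$; here it transfers upper bounds in the other direction.
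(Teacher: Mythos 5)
Your proof is correct and follows exactly the same route as the paper: use Lemma~\ref{lem:ER_to_BC} to pass from an $\tER$-refutation of $\Gamma$ to a $\BCnnv$-refutation of $\Gamma \cup X^m$, translate that into $\calP$ by the simulation hypothesis, and then recover a $\calP$-refutation of $\Gamma$ by restricting $y=1$ and invoking closure under restrictions. The observation about the relation to Corollary~\ref{cor:PHP_X} is also the intended reading.
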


\begin{proof}
Suppose $\Gamma$ has a refutation $\Pi$ in $\tER$ of length~$m$.
Take a copy of~$X^m$ in disjoint variables from~$\Gamma$.
By Lemma~\ref{lem:ER_to_BC} we can construct
a $\BCnnv$-refutation  of~\mbox{$\Gamma \cup X^m$}.
Since $\calP$ simulates $\BCnnv$, we can then construct
a $\calP$-refutation of~\mbox{$\Gamma \cup X^m$}.
Let $\rho$ be the restriction which just sets $y=1$,
so that~${(\Gamma \cup X^m)}_{\rest\rho} = \Gamma$.
By the assumption that $\calP$ is closed under restrictions,
we can construct a $\calP$ refutation of~$\Gamma$.
All constructions are polynomial time.
\end{proof}

\begin{cor}
If the Frege proof system simulates $\BCnnv$, then
Frege and $\tER$ are equivalent.
\end{cor}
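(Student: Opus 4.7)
The plan is to deduce the corollary as an almost immediate consequence of the preceding theorem. First I would recall two standard facts about Frege systems: (i) Frege is closed under restrictions, since given any Frege refutation of $\Gamma$ and any partial assignment $\rho$, one can substitute $\rho$ into every formula line by line and clean up trivial steps in polynomial time to obtain a Frege refutation of $\Gamma_{\rest\rho}$; and (ii) extended resolution $\tER$ trivially simulates Frege (in fact $\tER$ is at least as strong as any Frege system, by the standard simulation via extension variables for subformulas).

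Given these, the argument is a one-step application of the preceding theorem. Assume Frege simulates $\BCnnv$. Since Frege is closed under restrictions, the preceding theorem immediately yields that Frege simulates $\tER$. Combined with the converse direction that $\tER$ simulates Frege, this gives that Frege and $\tER$ are equivalent.

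There is no real obstacle here; the only thing worth a sentence of justification is closure of Frege under restrictions, which is a routine property but the one place where the structure of the proof system is actually used. Everything else is just assembling simulations.
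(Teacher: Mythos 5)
Your proof is correct and matches the argument the paper intends: apply the preceding theorem using the standard fact that Frege is closed under restrictions, then combine with the well-known simulation of Frege by $\tER$ to conclude equivalence. No issues.
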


Hence it is unlikely that Frege simulates $\BCnnv$, since
Frege is expected to be strictly weaker than~$\tER$.

\subsection{Canonical NP pairs}

The notion of \emph{disjoint NP pairs} was first introduced
by Grollmann and Selman~\cite{GrollmannSelman:CryptoMeasures}.
Razborov~\cite{Razborov:NPpairs} showed how a propositional proof system~$\calP$
gives rise to a canonical disjoint NP pair,
which gives a measure of the strength of the system.
It is known that
if a propositional proof system~$\calP_1$ simulates a system~$\calP_2$,
then there is a many-one reduction from the canonical NP pair for~$\calP_2$
to the canonical NP pair for~$\calP_1$~\mbox{\cite{Razborov:NPpairs,Pudlak:NPpairs}}.
We can use Lemma~\ref{lem:ER_to_BC} to prove that the
systems $\BCnnv$ through $\DSRnnv$ cannot be distinguished
from each other or $\tER$ by their canonical NP pairs,
even though they do not all simulate each other.

\begin{defi}
A \emph{disjoint NP pair} is a pair $(U,V)$ of NP sets such that
$U \cap V = \emptyset$.
A \emph{many-one reduction} from a disjoint NP pair $(U,V)$ to
a disjoint NP pair $(U^\prime, V^\prime)$ is a polynomial time function~$f$
mapping $U$ to~$U^\prime$ and mapping $V$ to~$V^\prime$.
\end{defi}

To motivate this definition a little,
a disjoint NP pair $(U,V)$ is said to be \emph{polynomially separable} if
there is a polynomial time function~$f$ which, given
\mbox{$x \in U \cup V$}, correctly
identifies whether $x\in U$ or $x\in V$.
Clearly if $(U,V)$ is many-one reducible to $(U^\prime, V^\prime)$,
then if $(U^\prime,V^\prime)$ is polynomially separable so is $(U,V)$.

\begin{defi}
$\SAT$ is the set of pairs $(\Gamma, 1^m)$ such that $\Gamma$
is a satisfiable set of clauses and $m \ge 1$~is an arbitrary integer.
Let $\calP$ be a propositional proof system for refuting sets of clauses.
Then $\REF(\calP)$ is the set of pairs $(\Gamma, 1^m)$ such that
$\Gamma$~has a $\calP$-refutation of length at most $m$.
Notice that $\SAT$ and $\REF(\calP)$ are both NP\@.
We define
the \emph{canonical disjoint NP pair}, or \emph{canonical NP pair},
associated with~$\calP$ to be $(\REF(\calP),\SAT)$.
\end{defi}

The canonical NP pair for a proof system~$\calP$ defines the following
problem. Given a pair $(\Gamma,1^m)$, the soundness of~$\calP$
implies that it is impossible that both (a)~$\Gamma$~is satisfiable
and (b)~$\Gamma$ has a proof in~$\calP$ of length~$\le m$.  The promise
problem is to identify one of (a) and~(b) which does \emph{not} hold.
(If neither (a) nor~(b) holds, then either answer may be given.)

\begin{thm}\label{thm:BCandERpairs}
There are many-one reductions in both directions between
the canonical NP pair for $\tER$
and the
canonical NP pairs for
all the systems in Section~\ref{sec:proofsystems}.
\end{thm}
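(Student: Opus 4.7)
The plan is to assemble the required reductions from two ingredients. The first is the standard observation that whenever a proof system $\calP_1$ simulates a system $\calP_2$ with polynomial size blow-up bounded by~$p$, the map $(\Gamma, 1^m) \mapsto (\Gamma, 1^{p(m)})$ is a polynomial-time many-one reduction from $(\REF(\calP_2), \SAT)$ to $(\REF(\calP_1), \SAT)$: satisfiability of $\Gamma$ is preserved, and a $\calP_2$-refutation of length at most~$m$ is converted into a $\calP_1$-refutation of length at most~$p(m)$. The second is Lemma~\ref{lem:ER_to_BC}, which is precisely the tool needed to handle the no-new-variables systems, where no direct simulation of $\tER$ is available.

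One direction follows almost immediately. By Theorem~\ref{thm:ERsimulates}, $\tER$ simulates $\tDSR$ and hence every system in Section~\ref{sec:proofsystems}; the NNV variants $\BCnnv, \RATnnv, \ldots, \DSRnnv$ are subsystems of their full counterparts, so $\tER$ simulates them as well. Applying the first ingredient then yields a polynomial-time many-one reduction from each of the canonical NP pairs to that of~$\tER$.

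For the reverse direction, for the systems of Section~\ref{sec:proofsystems} that allow new variables, the reduction is again immediate, since Example~\ref{ex:BC_extension} shows that even $\tBC$ (and hence all stronger systems in this family) simulates~$\tER$. For an NNV system~$\calP$, where no such simulation is available, I would define the reduction by $(\Gamma, 1^m) \mapsto (\Gamma \cup X^m, 1^{cm})$, where $X^m$ is built on fresh variables disjoint from those of~$\Gamma$ and $c$ is the constant hidden by the $O(m)$ bound of Lemma~\ref{lem:ER_to_BC}. If $(\Gamma, 1^m) \in \SAT$, then $\Gamma \cup X^m$ is also satisfiable: any satisfying assignment of~$\Gamma$ extends by setting $y=1$ (and the $x_i$ arbitrarily) to satisfy both clauses of~$X^m$. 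If $(\Gamma, 1^m) \in \REF(\tER)$, then Lemma~\ref{lem:ER_to_BC} converts the $\tER$-refutation into a $\BCnnv$-refutation of $\Gamma \cup X^m$ of size at most~$cm$, which in particular is a refutation in~$\calP$ (since $\calP$ contains $\BCnnv$).

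There is no substantial obstacle. The reduction is constructible in polynomial time because Lemma~\ref{lem:ER_to_BC} itself is stated to run in polynomial time, and the only subtlety worth flagging is that the auxiliary clauses $X^m$ must not threaten satisfiability; this is taken care of automatically by using fresh variables. Putting the two directions together shows that the canonical NP pair of $\tER$ and those of all the systems under consideration are many-one equivalent.
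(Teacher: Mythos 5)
Your proof is correct and takes essentially the same approach as the paper: both directions hinge on the standard observation that a simulation yields a reduction between canonical NP pairs, and the nontrivial direction is handled by the map $(\Gamma,1^m)\mapsto(\Gamma\cup X^m,1^{O(m)})$ together with Lemma~\ref{lem:ER_to_BC}. The paper is slightly more economical — it reduces $\tER$'s pair only to $\BCnnv$'s and lets the chain of simulations (every system in scope simulates $\BCnnv$, and $\tER$ simulates every system by Theorem~\ref{thm:ERsimulates}) supply the rest — whereas you spell out the new-variable and no-new-variable cases separately, but the two arguments are the same in substance.
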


\begin{proof}
As a simulation implies a reduction between canonical NP pairs,
all we need to show is a reduction of the canonical NP pair for $\tER$
to the canonical NP pair for the weakest system $\BCnnv$,
that is, of
$(\REF(\tER),\SAT)$ to $(\REF(\BCnnv),\SAT)$.
Suppose $(\Gamma,1^m)$ is given as a query to $(\REF(\tER),\SAT)$.
We must produce some $\Gamma^*$ and $m^*$ such that
\begin{enumerate}
\item
$\Gamma^*$ is satisfiable if $\Gamma$ is,
\item
$\Gamma^*$ has a $\BCnnv$-refutation of size $m^*$ if $\Gamma$~has an
$\tER$-refutation of size $m$, and
\item
$m^*$ is bounded by a polynomial in~$m$.
\end{enumerate}
We use Lemma~\ref{lem:ER_to_BC}, letting $\Gamma^*$
be $\Gamma \cup X^m$ for $X^m$ in variables disjoint from $\Gamma$, and letting
$m^*$ be the bound on the size of the
$\BCnnv$-refutation of $\Gamma^*$.
\end{proof}

\section{Simulations}\label{sec:simulations}

\subsection{\texorpdfstring{$\DRATnnv$}{DRAT-} simulates \texorpdfstring{$\DPRnnv$}{DPR-}}

The following relations were known between $\DBCnnv$,
$\DRATnnv$ and~$\DPRnnv$.

\begin{thmC}[\cite{KRPH:erDRAT}]\label{thm:BCsimRATnoNew}
$\DBCnnv$ simulates $\DRATnnv$. (Hence they are equivalent).
\end{thmC}

\begin{thmC}[\cite{HeuleBiere:Variable}]\label{thm:DRATsimDPRonevar}
Suppose $\Gamma$ has a $\tDPR$ refutation $\Pi$.
Then it has a $\tDRAT$ refutation constructible in polynomial time
from $\Pi$, using at most one variable not appearing in $\Pi$.
\end{thmC}

We prove:

\begin{thm}\label{thm:DRATnnvDPRnnv}
$\DRATnnv$ simulates $\DPRnnv$.
\end{thm}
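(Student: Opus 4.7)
The plan is to simulate each $\tPR$ inference in a $\DPRnnv$ refutation by a polynomial-length block of $\tRAT$ inferences and deletions, producing overall a polynomial-size $\DRATnnv$ refutation. The existing result of Heule and Biere (Theorem~\ref{thm:DRATsimDPRonevar}) gives such a simulation using a single extra variable; the task is to eliminate that extra variable while staying in the no-new-variables regime.

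Consider a single $\tPR$ step that derives~$C$ from $\Gamma_i$ with witness~$\tau$. By Lemma~\ref{lem:PRnormal} I may assume $\tdom(\tau) \supseteq \tdom(\alpha)$, where $\alpha = \olnot{C}$. Let $V = \tdom(\tau) \setminus \tdom(\alpha)$ and let $F \subseteq \tdom(\alpha)$ be the set of variables of~$C$ on which $\tau$ disagrees with~$\alpha$; the $\tRAT$ rule is exactly the case $V = \emptyset$ and $|F| = 1$, so there are two independent sources of generality to remove. The first sub-step would eliminate~$V$. For each $y \in V$ let $\ell_y$ denote the literal on~$y$ satisfied by~$\tau$. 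I would derive the extended clause $C^* = C \lor \bigor_{y \in V} \olnot{\ell_y}$ by a single $\tSPR$ inference with witness~$\tau$: the negation $\olnot{C^*}$ equals the partial assignment $\alpha \cup \tau|_V$ and so has domain $\tdom(\tau)$; $\tau \vDash C^*$ because $\tau \vDash C$; and the $\tPR$ condition $(\Gamma_i)_{\rest\alpha} \vdash_1 (\Gamma_i)_{\rest\tau}$ transfers to $(\Gamma_i)_{\rest\olnot{C^*}} \vdash_1 (\Gamma_i)_{\rest\tau}$ by restricting both sides under $\tau|_V$ and using that $(\Gamma_i)_{\rest\tau}$ already contains no $V$-variable. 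Since $C^*$ is a weakening of~$C$, I would then derive companion clauses $C \lor \ell_y$ for each $y \in V$ by further $\tSPR$ inferences tailored to each~$y$, and resolve them against~$C^*$ to recover~$C$.

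The second sub-step would convert each $\tSPR$ inference (now with $V = \emptyset$ but possibly $|F| > 1$) into a sequence of $\tRAT$ inferences. I would fix an ordering $F = \{x_1,\dots,x_s\}$ and peel off the flips one at a time, using deletion to retract earlier auxiliaries that would otherwise obstruct the current pivot. The main obstacle is the verification step: when a $\tRAT$ inference pivots on the literal $p \in C$ corresponding to some $x_i \in F$, the $\tRAT$ condition can fail on a clause $D = \olnot{p} \,\dotlor\, D' \in \Gamma_i$ that $\tau$ happens to satisfy via a different literal $\ell_* \in D'$, either with $\ell_* \in C$ or with $\text{var}(\ell_*) \in V$. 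These clauses must be handled by further auxiliaries derived from the $\vdash_1$-certificates of $(\Gamma_i)_{\rest\alpha} \vdash_1 (\Gamma_i)_{\rest\tau}$, with Lemma~\ref{lem:subsumeNoDelete} guaranteeing that earlier auxiliaries do not invalidate later inferences. All new clauses live on variables in $\tdom(\tau) \subseteq \text{var}(\Gamma)$, so no new variables are introduced, and each $\tPR$ step expands into a number of $\tRAT$ steps and deletions polynomial in the size of the original step and of~$\Gamma_i$.
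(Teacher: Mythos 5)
Your overall plan is genuinely different from the paper's, and each of its two sub-steps has a gap that does not look repairable without a new idea.

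In the first sub-step, the $\tSPR$ inference introducing $C^* = C \lor \bigor_{y \in V} \olnot{\ell_y}$ is indeed valid (this is essentially Lemma~\ref{lem:AlphaPlus}), but the ``companion clauses'' $C \lor \ell_y$ are asserted without a witness, and no witness works. If you take $\tau_y := \tau$ restricted to $\tdom(\alpha)\cup\{y\}$ as an $\tSPR$ witness, the required implication $(\Gamma_i)_{\rest\olnot{C\lor\ell_y}} \vdash_1 (\Gamma_i)_{\rest\tau_y}$ does not follow from the $\tPR$ condition: a clause $D\in\Gamma_i$ that $\tau$ satisfies only through a literal on some $z\in V\setminus\{y\}$ contributes $D_{\rest\tau_y}$ to the right-hand side but is absent from $(\Gamma_i)_{\rest\tau}$, and nothing forces it to be derivable. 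Taking $\tau$ itself as witness makes $C\lor\ell_y$ a $\tPR$ inference of discrepancy $|V|-1$, so the construction becomes a recursion that terminates only after deriving all $2^{|V|}$ clauses of the form $C\lor\bigor_{y\in S}\ell_y$, $S\subseteq V$ --- exactly the blow-up of Theorem~\ref{thm:SPRsimPRnoNew_disc}, which is polynomial only for logarithmic discrepancy. The second sub-step has the same character of difficulty: when the pivot is $p$ for some $x_i\in F$ and a clause $D = \olnot{p}\,\dotlor\,D'\in\Gamma_i$ is satisfied by $\tau$ only via a literal of $D'$ that later flips would not neutralize, the $\tRAT$ side condition $\Gamma_i\vdash_1 C\lor D'$ simply may not hold, and ``deleting the offending clause'' is not available when $D$ is an original clause of $\Gamma$ needed later in the refutation.

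The paper does something quite different and does not decompose a $\tPR$ step into single flips at all. It first normalizes each $\tPR$ step to a $\tPR_0$ step (Lemma~\ref{lem:PR_to_PR_0}), so that either $\tau$ is total on $\Gamma$'s variables (and then $C\in\Gamma$ already) or some variable $x$ of $\Gamma$ lies outside $\tdom(\tau)$. It then passes from $\Gamma$ to $\Gamma^{(x)}$ by Davis--Putnam elimination (using $\vdash_1$ and deletion), observes via Lemma~\ref{lem:PR_0_Gamma(x)} that $C$ is still $\tPR_0$ over $\Gamma^{(x)}$, and --- since $x$ no longer occurs --- runs the Heule--Biere one-extra-variable simulation (Lemma~\ref{lem:DRATsimDPRonevar}) using $x$ as the recycled auxiliary variable, before reintroducing $x$ with Lemma~\ref{lem:Gamma_from_(x)}. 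The key idea your proposal is missing is precisely this temporary variable elimination: it is what lets one invoke the existing one-variable simulation without ever needing a genuinely new variable, and without it the attempt to rebuild that simulation flip-by-flip over the same variable set is as hard as the theorem itself.
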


Hence the systems $\DBCnnv$, $\DRATnnv$, $\DSPRnnv$
and $\DPRnnv$ are all equivalent.
The theorem relies on the following main lemma used
in the proof of Theorem~\ref{thm:DRATsimDPRonevar}.
We include a proof for completeness.

\begin{lemC}[\cite{HeuleBiere:Variable}]\label{lem:DRATsimDPRonevar}
Suppose $C$ is $\tPR$ with respect to $\Gamma$.
Then there is a polynomial size  $\tDRAT$ derivation of $\Gamma \cup \{ C \}$
from $\Gamma$, using at most one variable not
appearing in $\Gamma$ or~$C$.
\end{lemC}

\begin{proof}
We have $\Gamma_{\rest\alpha} \vdash_1 \Gamma_{\rest\tau}$,
where $\alpha = \olnot C$ and $\tau \vDash C$. Let $x$ be a new
variable.
We describe the construction step-by-step.

\emph{Step 1.} For each $D \in \Gamma$ which is not satisfied by $\tau$,
derive $D_{\rest\tau} \lor \olnot x$ by $\tRAT$ on~$\olnot x$.
This is possible, as $x$ does not appear anywhere yet.

\emph{Step 2.}
Derive $C \lor x$ by $\tRAT$ on $x$. The only clauses in which
$\olnot x$ appears are those of the form~$D_{\rest\tau} \lor \olnot x$
introduced in step~1, and from Lemma~\ref{lem:RUP_restriction}
and the assumption that
$\Gamma_{\rest\alpha} \vdash_1 \Gamma_{\rest\tau}$
we have that~$\Gamma \vdash_1 D_{\rest\tau} \lor C$.

\emph{Step 3.}
For each $E \in \Gamma$  satisfied by $\tau$,
derive $E \lor x$ by a $\vdash_1$ step and delete~$E$.

\emph{Step 4.}
For each literal $p$ in $\tau$, derive $\olnot x \lor p$
by $\tRAT$ on $p$. To see that this satisfies the $\tRAT$ condition,
consider any clause $G = G' \dotlor \olnot p$
with which $\olnot x \lor p$ could be resolved. If $\tau \vDash G$, then
by steps~2 and~3 above, $G$ must also contain $x$,
so the resolvent $G' \cup \olnot{x}$ is a tautology. If $\tau \not \vDash G$,
then $G$ must be one of the clauses $D \in \Gamma$
or $D_{\rest\tau} \lor \olnot x$ from  step~1,
which means that we have already derived~$G_{\rest\tau} \lor \olnot x$,
which subsumes the resolvent $G' \lor \olnot x$.

\emph{Step 5.}
Consider each clause $E \lor x$ introduced in step~2 or~3.
In either case $\tau \vDash E$, so $E$ contains some literal $p$ in $\tau$.
Therefore we can derive $E$ by resolving $E \lor x$ with $\olnot x \lor p$.
Thus we derive $C$
and all clauses from $\Gamma$ deleted in step~3.

Finally delete all the new clauses except for $C$.
\end{proof}

\begin{defi}
Let $\Gamma$ be a set of clauses and $x$ any variable.
Then $\Gamma^{(x)}$  consists of
every clause in $\Gamma$ which does not mention $x$,
together with
every clause of the form $E \lor F$
where both $x \dotlor E$ and $\olnot x \dotlor F$ are in $\Gamma$.
\end{defi}

In other words, $\Gamma^{(x)}$ is formed from $\Gamma$ by
doing all possible resolutions with respect to~$x$ and then deleting all
clauses containing either $x$ or~$\olnot x$. (This is exactly like the
first step of the Davis-Putnam procedure. In~\cite{Kullmann:GeneralizationER} the
notation $\mathrm{DP}_x$ is used instead of $\Gamma^{(x)}$.)

\begin{lem}\label{lem:Gamma_from_(x)}
There is a polynomial size $\tDRAT$ derivation
of $\Gamma$ from $\Gamma^{(x)}$,
using only variables  from~$\Gamma$.
\end{lem}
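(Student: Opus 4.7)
The plan is to derive $\Gamma$ from $\Gamma^{(x)}$ in two phases that re-introduce the $x$-containing clauses of $\Gamma$, followed by a cleanup phase that deletes the auxiliary resolvents belonging to $\Gamma^{(x)}$ but not to $\Gamma$. Write the $x$-clauses of $\Gamma$ as $x \dotlor E_1, \dots, x \dotlor E_k$ and $\olnot{x} \dotlor F_1, \dots, \olnot{x} \dotlor F_m$; by definition, $\Gamma^{(x)}$ then consists of all the $x$-free clauses of $\Gamma$ together with every non-tautological resolvent $E_i \vee F_j$. We may assume $x$ is a variable of $\Gamma$, since otherwise $\Gamma^{(x)} = \Gamma$ and there is nothing to derive.

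First I would add each $\olnot{x} \vee F_j$ by a $\tBC$ (hence $\tRAT$) inference on the literal $\olnot{x}$. Since $\Gamma^{(x)}$ contains no occurrence of $x$ at all, and the only clauses introduced in this phase contain $\olnot{x}$ rather than $x$, the $\tBC$ condition is vacuous at each step. Next I would add each $x \vee E_i$ by a $\tRAT$ inference on $x$. The only clauses in the current set containing $\olnot{x}$ are the $\olnot{x} \vee F_j$ just added. For each such $F_j$, either $E_i \cup F_j$ is tautological, in which case the $\tRAT$ condition is trivially satisfied, or $E_i \vee F_j$ is a non-tautological resolvent and therefore already present in $\Gamma^{(x)}$; in that second case the clause $x \vee E_i \vee F_j$ demanded by $\tRAT$ (in the form of Theorem~\ref{thm:RATandLPR}) is subsumed by a clause already in the current set and is thus derivable by a single unit propagation step. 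Finally, I would use the deletion rule to remove each auxiliary resolvent $E_i \vee F_j$ that lies in $\Gamma^{(x)} \setminus \Gamma$, so that the derivation ends exactly at $\Gamma$.

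The resulting derivation uses only variables of $\Gamma$ (in particular $x$ itself) and has size bounded by $O(k + m + km)$, hence polynomial in $|\Gamma| + |\Gamma^{(x)}|$. I do not expect any serious obstacle here: the definition of $\Gamma^{(x)}$ was engineered exactly so that every non-tautological resolvent witnessing the $\tRAT$ condition in the second phase is already in the starting set. The only point that needs a little care is to order the two phases correctly, since introducing the clauses $x \vee E_i$ first would block the $\tBC$ step on $\olnot{x}$ for the $\olnot{x} \vee F_j$ --- doing the $\olnot{x}$-clauses first avoids this and keeps the $\tBC$ condition vacuous throughout phase one.
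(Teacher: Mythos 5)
Your proof is correct and takes essentially the same approach as the paper: introduce the clauses of one $x$-polarity vacuously (no complementary literal present yet), then introduce the other polarity by $\tRAT$, discharging the side condition via the resolvents already present in $\Gamma^{(x)}$, and finally delete the extra resolvents. The paper does the $x$-clauses first and the $\olnot{x}$-clauses second, but the argument is symmetric; one small correction to your closing remark is that the order is not actually essential, since the second phase can always use $\tRAT$ rather than $\tBC$, and the $\tRAT$ condition is met via the resolvents in $\Gamma^{(x)}$ whichever polarity goes first.
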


\begin{proof}
We first derive every clause of the form $E \dotlor x$ in $\Gamma$,
by $\tRAT$ on $x$. As $\olnot{x}$ has not
appeared yet, the $\tRAT$ condition is satisfied.
Then we derive each clause of the form $F \dotlor \olnot{x}$ in $\Gamma$,
by $\tRAT$ on $\olnot{x}$. The only possible resolutions are with clauses of the
form $E \dotlor x$ which we have just introduced, but in this case
either $E\cup F$ is tautological or
$E \lor F$ is in $\Gamma^{(x)}$ so
 $\Gamma^{(x)} \vdash_1 \olnot{x} \lor F \lor E$.
Finally we delete all clauses not in $\Gamma$.
\end{proof}

The next two lemmas show that, under suitable conditions,
if we can derive $C$ from $\Gamma$ in $\DPRnnv$, then we can
derive it from $\Gamma^{(x)}$.
We will use a kind of normal form for $\tPR$ inferences. Say that a clause~$C$
is $\tPR_0$ with respect to $\Gamma$ if there is a partial assignment $\tau$
such that $\tau \vDash C$, all variables in~$C$ are in $\tdom(\tau)$, and
\begin{equation}\label{eq:PR0condition}
C \lor \Gamma_{\rest\tau} \subseteq \Gamma.
\end{equation}
(Recall that the notation $C \lor \Gamma_{\rest\tau}$ means
the set of clauses $C \lor D$ for $D \in \Gamma_{\rest\tau}$.)
The $\tPR_0$ inference rule lets us derive $\Gamma \cup \{ C \}$ from $\Gamma$
when (\ref{eq:PR0condition}) holds.
Letting $\alpha = \olnot{C}$
it is easy to see that (\ref{eq:PR0condition})
implies~$\Gamma_{\rest\tau} \subseteq \Gamma_{\rest\alpha}$,
so in particular
$\Gamma_{\rest\alpha} \vdash_1 \Gamma_{\rest\tau}$,
and hence this is a special case of the $\tPR$ rule.

\begin{lem}\label{lem:PR_to_PR_0}
Any $\tPR$ inference can be replaced with a $\tPR_0$ inference
together with polynomially many
$\vdash_1$ and deletion steps, using no new variables.
\end{lem}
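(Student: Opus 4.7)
The plan is to massage the witnessing partial assignment $\tau$ of the given $\tPR$ inference into a form that fits the $\tPR_0$ rule, paying for the required shape by introducing a few extra clauses via $\vdash_1$ beforehand and deleting them afterwards.

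First I would apply Lemma~\ref{lem:PRnormal} to replace $\tau$ by $\alpha \circ \tau$ (where $\alpha = \olnot C$). This new witness still satisfies $C$ and witnesses the $\tPR$ property, and moreover $\tdom(\alpha\circ\tau) \supseteq \tdom(\alpha) = \mathrm{vars}(C)$. So from now on I may assume $\tau$ sets every variable of $C$.

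Next, for each clause $D \in \Gamma$ with $\tau \nvDash D$, I would derive the clause $C \lor D_{\rest\tau}$ by a single $\vdash_1$ step. This is legal: since $\tdom(\tau) \supseteq \mathrm{vars}(C)$, the clause $D_{\rest\tau}$ shares no variables with $C$, so $C \lor D_{\rest\tau}$ is a genuine clause involving only variables already present in $\Gamma$. The hypothesis $\Gamma_{\rest\alpha} \vdash_1 \Gamma_{\rest\tau}$ gives $\Gamma_{\rest\alpha} \vdash_1 D_{\rest\tau}$, and Lemma~\ref{lem:RUP_restriction} then yields $\Gamma \vdash_1 C \lor D_{\rest\tau}$, as required. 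There are at most $|\Gamma|$ such clauses, so this costs polynomially many $\vdash_1$ steps.

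Let $\Gamma^*$ be $\Gamma$ together with all of these auxiliary clauses. I claim that $C$ is now $\tPR_0$ with respect to $\Gamma^*$, witnessed by the same $\tau$: I need $C \lor \Gamma^*_{\rest\tau} \subseteq \Gamma^*$. Any $E \in \Gamma^*$ with $\tau \nvDash E$ cannot be of the form $C \lor D_{\rest\tau}$, since $\tau \vDash C$ implies $\tau$ satisfies every auxiliary clause; hence $E \in \Gamma$, and then $C \lor E_{\rest\tau}$ was added in the previous step and lies in $\Gamma^*$. So one $\tPR_0$ inference now adds $C$. Finally, delete all the auxiliary clauses $C \lor D_{\rest\tau}$ to return to $\Gamma \cup \{C\}$.

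The only subtle points are the normalization step (to ensure $D_{\rest\tau}$ and $C$ have disjoint variables, so that $C \lor D_{\rest\tau}$ is a clause and the $\tPR_0$ condition~\eqref{eq:PR0condition} can actually be set up), and verifying that the auxiliary clauses themselves do not spoil the $\tPR_0$ condition — both handled cleanly by Lemma~\ref{lem:PRnormal} and the fact that $\tau \vDash C$. No new variables are ever used, since $D_{\rest\tau}$ uses only variables already appearing in $D \in \Gamma$.
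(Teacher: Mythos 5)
Your proof is correct and follows essentially the same route as the paper's: normalize $\tau$ via Lemma~\ref{lem:PRnormal} so that $\tdom(\tau)\supseteq\tdom(\alpha)$, add the clauses $C\lor\Gamma_{\rest\tau}$ by $\vdash_1$ steps (justified through Lemma~\ref{lem:RUP_restriction}), observe that these auxiliary clauses are all satisfied by $\tau$ so the $\tPR_0$ condition holds for the enlarged set, and clean up with deletion. The only cosmetic difference is that the paper names the auxiliary set $\Delta = C\lor\Gamma_{\rest\tau}$ and phrases the key point as ``$\Delta_{\rest\tau}$ is empty'', whereas you argue element by element; the content is identical.
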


\begin{proof}
Suppose $\Gamma_{\rest\alpha} \vdash_1 \Gamma_{\rest\tau}$,
where $\alpha = \olnot C$ and $\tau \vDash C$.
By Lemma~\ref{lem:PRnormal} we may assume that $\tdom(\alpha) \subseteq \tdom(\tau)$
so $\tdom(\tau)$ contains all variables in $C$.
Let $\Delta = C \lor \Gamma_{\rest\tau}$ and $\Gamma^* = \Gamma \cup \Delta$.
Note that~$\Delta_{\rest\tau}$ is empty, as $\tau$ satisfies~$C$.
This implies that
$C \lor \Gamma^*_{\rest\tau} = C \lor \Gamma_{\rest\tau} \subseteq \Gamma^*$,
so $C$ is $\tPR_0$ with respect to~$\Gamma^*$.
Furthermore the condition $\Gamma_{\rest\alpha} \vdash_1 \Gamma_{\rest\tau}$
and Lemma~\ref{lem:RUP_restriction}
imply that every clause in $\Delta$ is derivable from $\Gamma$
by a $\vdash_1$ step.
Thus we can derive $\Gamma^*$ from~$\Gamma$ by $\vdash_1$ steps,
then introduce $C$ by the $\tPR_0$ rule, and recover $\Gamma \cup \{ C \}$ by
deleting everything else.
\end{proof}

\begin{lem}\label{lem:PR_0_Gamma(x)}
Suppose $C$ is $\tPR_0$ with respect to $\Gamma$, witnessed by $\tau$
with~$x \notin \tdom(\tau)$. Then $C$ is $\tPR_0$ with
respect to $\Gamma^{(x)}$.
\end{lem}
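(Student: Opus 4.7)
The plan is to verify directly that the witness $\tau$ also witnesses that $C$ is $\tPR_0$ with respect to $\Gamma^{(x)}$. The conditions $\tau \vDash C$ and $\tdom(\tau) \supseteq \tVar(C)$ carry over immediately. The only substantive thing to check is that $C \lor \Gamma^{(x)}_{\rest\tau} \subseteq \Gamma^{(x)}$. Note at the outset that since $\tVar(C) \subseteq \tdom(\tau)$ and $x \notin \tdom(\tau)$, the variable $x$ does not appear in $C$; this will be used repeatedly.

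I would fix an arbitrary clause in $\Gamma^{(x)}_{\rest\tau}$, write it as $G_{\rest\tau}$ for some $G \in \Gamma^{(x)}$ with $\tau \not\vDash G$, and split according to how $G$ arises in $\Gamma^{(x)}$. In the first case, $G \in \Gamma$ and $G$ does not mention $x$. Then $G_{\rest\tau} \in \Gamma_{\rest\tau}$, so the hypothesis gives $C \lor G_{\rest\tau} \in \Gamma$; since neither $C$ nor $G_{\rest\tau}$ mentions $x$, this clause already lies in $\Gamma^{(x)}$ by its definition.

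In the second case, $G = E \lor F$ is the resolvent of two clauses $x \dotlor E$ and $\olnot x \dotlor F$ from $\Gamma$. Because $x \notin \tdom(\tau)$ and $\tau \not\vDash G$, neither of these parent clauses is satisfied by $\tau$, so their restrictions $x \lor E_{\rest\tau}$ and $\olnot x \lor F_{\rest\tau}$ belong to $\Gamma_{\rest\tau}$. Applying the $\tPR_0$ hypothesis for $\Gamma$, both $C \lor x \lor E_{\rest\tau}$ and $C \lor \olnot x \lor F_{\rest\tau}$ lie in $\Gamma$. Since $x \notin C$ and $x$ does not appear in $E$ or $F$, these clauses have the forms $x \dotlor (C \lor E_{\rest\tau})$ and $\olnot x \dotlor (C \lor F_{\rest\tau})$ required by the definition of $\Gamma^{(x)}$, and their resolvent is $C \lor E_{\rest\tau} \lor F_{\rest\tau} = C \lor G_{\rest\tau}$.

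The only minor technicality to check along the way is that $C \lor G_{\rest\tau}$ is genuinely a (non-tautological) clause so that it actually appears in $\Gamma^{(x)}$; this is automatic because $C$ uses only variables in $\tdom(\tau)$ while $G_{\rest\tau}$ uses only variables outside $\tdom(\tau)$, and $E_{\rest\tau} \cup F_{\rest\tau} \subseteq E \cup F$ which is non-tautological since $E \lor F$ was a clause. I do not expect a significant obstacle beyond this bookkeeping: the argument is essentially just that the $\tPR_0$ containment commutes with the Davis--Putnam elimination of a variable $x$ that $\tau$ ignores.
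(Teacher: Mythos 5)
Your proof is correct and follows essentially the same two-case argument as the paper's, splitting on whether a clause of $\Gamma^{(x)}$ comes directly from $\Gamma$ or is a resolvent on $x$. You supply a bit more bookkeeping than the paper does (explicitly verifying that the parent clauses are not satisfied by $\tau$ and that the resulting clauses are non-tautological), but the underlying argument is the same.
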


\begin{proof}
The $\tPR_0$ condition implies that the variable $x$ does not occur in $C$.
We are given that $C \lor\penalty10000 \Gamma_{\rest\tau} \subseteq\penalty10000 \Gamma$
and want to show that
$C \lor \Gamma^{(x)}_{\rest\tau} \subseteq \Gamma^{(x)}$.
So let $D \in \Gamma^{(x)}$ with $\tau \not \vDash D$.
First suppose $D$ is in $\Gamma$ and  $x$ does not occur in~$D$.
Then $C \lor D_{\rest\tau} \in \Gamma$ by assumption,
so $C \lor D_{\rest\tau} \in \Gamma^{(x)}$.
Otherwise, $D=E \lor F$ where both $E \dotlor x$ and $F \dotlor \olnot x$
are in $\Gamma$. Then by assumption both $C \lor E_{\rest\tau} \lor x$
and $C \lor F_{\rest\tau} \lor \olnot x$ are in $\Gamma$.
Hence $C \lor D_{\rest\tau} = C \lor E_{\rest\tau} \lor F_{\rest\tau} \in \Gamma^{(x)}$.
\end{proof}

We can now prove Theorem~\ref{thm:DRATnnvDPRnnv},
that $\DRATnnv$ simulates $\DPRnnv$.

\begin{proof}[Proof of Theorem~\ref{thm:DRATnnvDPRnnv}]
We are given a $\DPRnnv$ refutation of some set  $\Delta$,
using only the variables in~$\Delta$.
By Lemma~\ref{lem:PR_to_PR_0} we may assume without loss
of generality that the refutation uses only $\vdash_1$, deletion and $\tPR_0$ steps.
Consider a $\tPR_0$ inference in this refutation, which derives
$\Gamma \cup \{ C \}$ from a set of clauses $\Gamma$,
witnessed by a partial assignment~$\tau$.
We want to derive $\Gamma \cup \{ C \}$ from $\Gamma$ in $\tDRAT$
using only variables in~$\Delta$.

Suppose $\tau$ is a total assignment to all variables in $\Gamma$.
The set $\Gamma$
is necessarily unsatisfiable,
as otherwise it could not occur as a line in a refutation.
Therefore $\Gamma_{\rest\tau}$ is simply~$\bot$,
so the $\tPR_0$ condition tells us that $C \in \Gamma$
and we do not need to do anything.

Otherwise, there is some variable $x$ which occurs in $\Gamma$
but is outside the domain of $\tau$, and thus in particular
does not occur in $C$.
We first use $\vdash_1$ and deletion steps to replace $\Gamma$ with
$\Gamma^{(x)}$.
By Lemma~\ref{lem:PR_0_Gamma(x)}, $C$ is $\tPR_0$, and thus $\tPR$, with respect to
$\Gamma^{(x)}$.
By Lemma~\ref{lem:DRATsimDPRonevar}  there
is a short $\tDRAT$ derivation
of $\Gamma^{(x)} \cup \{ C \}$ from $\Gamma^{(x)}$,
using one new variable which does not occur in $\Gamma^{(x)}$ or $C$.
We choose $x$ for  this variable.
Finally, observing that here
$\Gamma^{(x)} \cup \{ C \} = {(\Gamma \cup \{ C \})}^{(x)}$,
we  recover $\Gamma \cup \{C \}$ using Lemma~\ref{lem:Gamma_from_(x)}.
\end{proof}

\subsection{Towards a simulation of \texorpdfstring{$\PRnnv$}{PR-} by \texorpdfstring{$\SPRnnv$}{SPR-}}%
\label{sec:partialPRnnvSPRnnv}

Our next result shows how to replace a $\tPR$ inference
with $\tSPR$ inferences, without additional variables.
It is not a
polynomial simulation of $\PRnnv$ by $\SPRnnv$ however, as it
depends exponentially on the ``discrepancy'' as defined next.
Recall that $C$ is $\tPR$ with respect to $\Gamma$ if
$\Gamma_{\rest\alpha} \vdash_1 \Gamma_{\rest\tau}$, where
$\alpha = \olnot{C}$ and $\tau$ is a partial assignment
satisfying~$C$.
We will keep this notation throughout this section.
$C$ is $\tSPR$ with respect to $\Gamma$
if additionally $\tdom(\tau) = \tdom(\alpha)$.

\begin{defi}\label{def:discrepancy}
The \emph{discrepancy} of a $\tPR$ inference
 is  $|\tdom(\tau)\setminus\tdom(\alpha)|$.
 That is, it is the number of variables which are assigned by~$\tau$
but not by~$\alpha$.
\end{defi}

\begin{thm}\label{thm:SPRsimPRnoNew_disc}
Suppose that $\Gamma$ has a $\tPR$ refutation $\Pi$ of
size~$S$ in which every $\tPR$ inference has
discrepancy bounded by $\delta$. Then $\Gamma$ has a
$\tSPR$ refutation of size $O(2^\delta S)$ which does not
use any variables not present in $\Pi$.
\end{thm}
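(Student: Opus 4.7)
The plan is to simulate each $\tPR$ inference in the original refutation by a bundle of $\tSPR$ inferences followed by a small cascade of resolution steps. Suppose a single $\tPR$ step derives a clause $C$ from the current set $\Gamma$ using witness $\tau$, and write $\alpha = \olnot C$. By Lemma~\ref{lem:PRnormal} we may assume $\tdom(\tau) \supseteq \tdom(\alpha)$, so the discrepancy is $|Y|$ where $Y := \tdom(\tau) \setminus \tdom(\alpha)$. Call this quantity $\delta$.

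For each of the $2^\delta$ total assignments $\sigma : Y \to \{\tFalse,\tTrue\}$ I will introduce, by an $\tSPR$ inference with witness~$\tau$, the clause
\[
C_\sigma ~:=~ C \lor \olnot\sigma.
\]
The $\tSPR$ side-conditions check out: $\olnot{C_\sigma} = \alpha \cup \sigma$, so $\tdom(\olnot{C_\sigma}) = \tdom(\alpha)\cup Y = \tdom(\tau)$, giving the required matching of domains; and $\tau \vDash C \subseteq C_\sigma$. The propagation condition $\Gamma_{\rest \alpha\cup\sigma} \vdash_1 \Gamma_{\rest\tau}$ is obtained by applying Fact~\ref{fac:unit_propagation} to the hypothesis $\Gamma_{\rest\alpha}\vdash_1 \Gamma_{\rest\tau}$, restricting both sides by~$\sigma$. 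The left side becomes exactly $\Gamma_{\rest\alpha\cup\sigma}$; the right side is unchanged, because $Y \subseteq \tdom(\tau)$ means no clause of $\Gamma_{\rest\tau}$ contains a $Y$-variable, so restricting by $\sigma$ has no effect.

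Once all $2^\delta$ clauses $C_\sigma$ have been derived, I resolve them in a balanced binary tree on the variables of~$Y$: two clauses $C_\sigma$, $C_{\sigma'}$ that agree on $Y\setminus\{y_i\}$ and differ on $y_i$ resolve (on $y_i$) to $C \lor \olnot{\sigma|_{Y\setminus\{y_i\}}}$. Iterating through all $\delta$ variables collapses the $2^\delta$ clauses to $C$ itself, using $2^\delta - 1$ resolution steps. No new variables appear in this construction, since $Y \subseteq \tVar(\Gamma)$ already.

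The only subtlety is that the $\tSPR$ inferences are sequential: when I derive $C_{\sigma_{k+1}}$ I must verify $\tSPR$ with respect to $\Gamma \cup \{C_{\sigma_1},\dots,C_{\sigma_k}\}$ plus any earlier lines of the refutation, rather than the original $\Gamma$. But each $C_{\sigma_j}$ is satisfied by $\tau$ (because $\tau \vDash C$), so these extra clauses contribute nothing to $\Gamma_{\rest\tau}$; meanwhile they can only enlarge the left-hand side $\Gamma_{\rest\alpha\cup\sigma_{k+1}}$, which preserves $\vdash_1$-derivability. Hence each $\tPR$ step of size $|C|$ is replaced by $O(2^\delta)$ lines, each of size $|C|+\delta$, giving $O(2^\delta |C|)$ added to the proof. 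Summing over all $\tPR$ inferences in the original refutation yields the claimed $O(2^\delta S)$ bound.
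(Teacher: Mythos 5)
Your construction is essentially the paper's: the clauses $C_\sigma = C \lor \olnot\sigma$ are the same family as the paper's $C \lor D_i$, your in-line $\vdash_1$ argument (restrict the hypothesis by $\sigma$) is exactly the paper's Lemma~\ref{lem:AlphaPlus}, and the $2^\delta-1$ resolution cascade to recover $C$ is identical. The $\tSPR$ side-conditions you check (same domain as $\olnot{C_\sigma}$, $\tau\vDash C_\sigma$) are correct.

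There is one genuine gap. Your monotonicity argument in the final paragraph only covers the clauses $C_{\sigma_1},\ldots,C_{\sigma_k}$ produced \emph{within} the current $\tPR$ step, where it is sound (they are all satisfied by the current $\tau$, so they vanish on the right-hand side and only enlarge the left). But since $\SPRnnv$ has no deletion, by the time you reach a \emph{later} $\tPR$ step in $\Pi$, the current clause set is not $\Gamma_i$ but $\Gamma_i$ together with all $C^{(j)}_\sigma$'s and intermediate resolvents left over from simulating the $\tPR$ steps $j<i$. Those leftover clauses are generally \emph{not} satisfied by the new witness $\tau_i$, so they do contribute to the right-hand side $(\,\cdot\,)_{\rest\tau_i}$ of the $\vdash_1$ condition, and ``can only enlarge the left-hand side'' no longer applies. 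Because $\tPR$ and $\tSPR$ are non-monotone, this is exactly the kind of accumulation that could in principle invalidate a later inference. The missing ingredient is that every leftover clause is subsumed by the corresponding derived clause $C^{(j)}$, which is already in $\Gamma_i$; by Lemma~\ref{lem:subsumeNoDelete}(a), a set enlarged only by subsumed clauses still satisfies the $\vdash_1$ condition. Invoking that lemma closes the gap.
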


When the discrepancy is logarithmically bounded,
Theorem~\ref{thm:SPRsimPRnoNew_disc} gives
polynomial size $\tSPR$ refutations automatically.
We need a couple of lemmas before proving
the theorem. 

\begin{lem}\label{lem:AlphaPlus}
Suppose $\Gamma_{\rest\alpha} \vdash_1 \Gamma_{\rest \tau}$
and $\beta$ is a partial assignment extending $\alpha$,
such that $\tdom(\beta) \subseteq \tdom(\tau)$.
Then $\Gamma_{\rest\beta} \vdash_1 \Gamma_{\rest \tau}$
\end{lem}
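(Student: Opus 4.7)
The plan is to reduce this to a direct application of Fact~\ref{fac:unit_propagation} via the composition identity in Lemma~\ref{lem:substitution_composition}. First I would split $\beta$ as $\beta = \beta_0 \circ \alpha$, where $\beta_0$ is the partial assignment with domain $\tdom(\beta)\setminus\tdom(\alpha)$ that agrees with $\beta$ there. That this composition recovers $\beta$ is immediate from the case-definition of composition given in the preliminaries, using that $\beta$ extends~$\alpha$. Lemma~\ref{lem:substitution_composition} then gives $\Gamma_{\rest\beta} = (\Gamma_{\rest\alpha})_{\rest\beta_0}$, which is the identity that will convert the hypothesis on $\alpha$ into a statement about $\beta$.

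Next I would fix an arbitrary clause $D \in \Gamma$ with $\tau \not\vDash D$, so that $D_{\rest\tau}$ is a generic element of $\Gamma_{\rest\tau}$, and aim to show $\Gamma_{\rest\beta} \vdash_1 D_{\rest\tau}$. By hypothesis we have $\Gamma_{\rest\alpha} \vdash_1 D_{\rest\tau}$, i.e.\ a unit propagation refutation of $\Gamma_{\rest\alpha} \cup \olnot{D_{\rest\tau}}$. Applying Fact~\ref{fac:unit_propagation} to this refutation with the restriction~$\beta_0$ yields $\bigl(\Gamma_{\rest\alpha} \cup \olnot{D_{\rest\tau}}\bigr)_{\rest\beta_0} \vdash_1 \bot$.

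It remains to identify the restricted set on the left with $\Gamma_{\rest\beta} \cup \olnot{D_{\rest\tau}}$. The first summand is handled by the composition identity above. For the second, the crucial observation — and the one place the side hypothesis $\tdom(\beta) \subseteq \tdom(\tau)$ is used — is that the variables of $D_{\rest\tau}$ all lie outside~$\tdom(\tau)$, hence outside $\tdom(\beta_0) \subseteq \tdom(\tau) \setminus \tdom(\alpha)$, so $\beta_0$ touches no literal of~$\olnot{D_{\rest\tau}}$ and $(\olnot{D_{\rest\tau}})_{\rest\beta_0} = \olnot{D_{\rest\tau}}$. Combining these gives $\Gamma_{\rest\beta} \cup \olnot{D_{\rest\tau}} \vdash_1 \bot$, i.e.\ $\Gamma_{\rest\beta} \vdash_1 D_{\rest\tau}$, as required.

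There is no real obstacle here; the only pitfall is keeping track of which variables are touched by which restriction, and in particular verifying that $\beta_0$ acts trivially on~$\olnot{D_{\rest\tau}}$, which is exactly what the domain containment hypothesis guarantees.
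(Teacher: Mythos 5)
Your proof is correct and follows essentially the same idea as the paper's: fix a generic $E \in \Gamma_{\rest\tau}$, note that the hypothesized unit-propagation refutation of $\Gamma_{\rest\alpha}\cup\olnot E$ survives further restriction (Fact~\ref{fac:unit_propagation}), and check via Lemma~\ref{lem:substitution_composition} together with the domain condition $\tdom(\beta)\subseteq\tdom(\tau)$ that the restricted set is exactly $\Gamma_{\rest\beta}\cup\olnot E$. The only cosmetic difference is that you factor $\beta$ as $\beta_0\circ\alpha$ and restrict by $\beta_0$, whereas one can equivalently restrict by $\beta$ itself, since $\beta\circ\alpha=\beta$ when $\beta$ extends $\alpha$; the paper takes the latter, slightly shorter, route.
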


\begin{proof}
Suppose $E \in \Gamma_{\rest\tau}$. Then $E$ contains no
variables from $\beta$, so $\olnot{E}_{\rest\beta} = \olnot{E}$,
and by assumption there is a
refutation $\Gamma_{\rest\alpha}, \olnot{E} \vdash_1 \bot$.
Thus $\Gamma_{\rest\beta}, \olnot{E} \vdash_1 \bot$ by
Fact~\ref{fac:unit_propagation}.
\end{proof}

\begin{proof}[Proof of Theorem~\ref{thm:SPRsimPRnoNew_disc}.]
Our main task is to show that a $\tPR$ inference with discrepancy
bounded by~$\delta$
can be simulated by multiple $\tSPR$ inferences, while bounding
the increase in proof size in terms of~$\delta$.
Suppose $C$ is derivable from~$\Gamma$ by a $\tPR$ inference.
That is, $\Gamma_{\rest \alpha} \vdash_1 \Gamma_{\rest\tau}$
where $\alpha = \olnot C$ and $\tau \vDash C$, and
by Lemma~\ref{lem:PRnormal} we may assume
that $\tdom(\tau)\supseteq \tdom(\alpha)$.
List the variables in $\tdom(\tau) \setminus \tdom(\alpha)$
as~$p_1, \ldots, p_s$, where $s \le \delta$.

Enumerate as $D_1, \dots, D_{2^s}$ all clauses
containing exactly the variables $p_1, \ldots, p_s$
with some pattern of negations.
Let $\sigma_i = \olnot { C \lor D_i}$,
so that $\sigma_i \supseteq \alpha$ and
$\tdom(\sigma_i) = \tdom(\tau)$.
By Lemma~\ref{lem:AlphaPlus},
$\Gamma_{\rest\sigma_i} \vdash_1 \Gamma_{\rest\tau}$.
Since $\tau \vDash C \lor D_j$ for every $j$,
in fact
\[ \Gamma_{\rest\sigma_i} \vdash_1
           {(\Gamma \cup \{C \lor\penalty10000 D_1,
              \dots, C\lor\penalty10000 D_{i-1}\})}_{\rest\tau}.
\]
Thus we may introduce all clauses $C \lor D_1,
\dots, C \lor D_{2^s}$ one after another by $\tSPR$ inferences.
We can then use $2^s-1$ resolution steps to derive~$C$.

The result is a set~$\Gamma^\prime \supseteq \Gamma$
which contains $C$ plus many extra clauses  subsumed by~$C$,
which must be carried through the rest of the refutation,
as we do not have the deletion rule.
But by Lemma~\ref{lem:subsumeNoDelete}(a) this is not a problem, 
as the presence of these
additional subsumed clauses does not affect the validity
of later $\tPR$ inferences.
\end{proof}

\section{Upper bounds for some hard tautologies}
\label{sec:Upperbounds}

This section proves that $\SPRnnv$
 --- without new variables --- can
give polynomial size refutations for many of the
usual ``hard'' propositional principles.
Heule, Kiesl and Biere~\cite{HKB:StrongExtensionFree,HKB:NoNewVariables}
showed that the tautologies based on the
pigeonhole principle (PHP)
have polynomial size $\SPRnnv$ proofs,
and Heule and Biere discuss polynomial size
$\PRnnv$ proofs of the Tseitin tautologies and the 2--1 pigeonhole principle
in~\cite{HeuleBiere:Variable}.
The $\SPRnnv$ proof of the PHP tautologies can be viewed as a
version of the original extended resolution proof of PHP given
by Cook and Reckhow~\cite{CookReckhow:proofs};
see also~\cite{Kullmann:GeneralizationER} for an
adaptation of the original proof to use $\tBC$ inferences.

 Here we describe polynomial size $\SPRnnv$ proofs
 for several well-known principles,
namely the pigeonhole principle, the bit pigeonhole principle,
the parity principle, the clique-coloring principles,
and the Tseitin tautologies.
We also
show that orification, xorification, and typical cases of lifting can be handled in $\SPRnnv$.

The existence of such small proofs is surprising,
 since they use only clauses
in the original literals, and it is well-known that such clauses are limited
in what they can express. However, $\SPRnnv$ proofs can
exploit the underlying symmetries of the principles to introduce
new clauses, in effect arguing that properties can
be assumed to hold ``without loss of generality'' (see~\cite{RebolaPardoSuda:SatPreserving}).

It is open whether extended resolution,
or the Frege proof system, can be simulated
by $\PRnnv$ or~$\DPRnnv$, or more generally by~$\DSRnnv$.
The examples below show that
any separation of these systems must involve a new technique.

Our proofs use the same basic idea as the sketch in
Example~\ref{ex:PHP_in_SR}.
One complication is that we are now
 working with $\tSPR$ rather than $\tSR$ inferences.
This requires us to make the individual inferences more complicated
-- for example the assignments $\alpha$ in
Example~\ref{ex:PHP_in_SR} set one pigeon, while
those in Section~\ref{sec:PHP} below set two pigeons.
Another is that we want to avoid using any
deletion steps. This means that, when showing
that an $\tSPR$ inference is valid,
 we have to
consider every clause introduced so far.
For this reason we will do all necessary $\tSPR$ inferences
at the start, in a careful order,
before we do any resolution steps.
This is the purpose of Lemma~\ref{lem:SPR_symmetry_2}.

\begin{defi}
A \emph{$\Gamma$-symmetry} is an invertible substitution $\pi$ such that
$\Gamma_{\rest\pi}  = \Gamma$.
\end{defi}

We will use the observation that, if $\pi$ is
a $\Gamma$-symmetry and $\alpha = \olnot{C}$ is a partial assignment, then by
Lemma~\ref{lem:substitution_composition} we have
\[
\Gamma_{\rest\alpha} = {(\Gamma_{\rest\pi})}_{\rest\alpha}
=\Gamma_{\rest \alpha \circ \pi}.
\]
Hence, if $\alpha\circ\pi \vDash C$, we can infer $C$
from $\Gamma$ by an $\tSR$ inference with $\tau = \alpha\circ\pi$.
If furthermore all literals in the domain and image of $\pi$ are in $\tdom(\alpha)$, then
$\alpha \circ \pi$ behaves as a partial assignment
and $\tdom(\alpha \circ \pi) = \tdom(\alpha)$,
so this becomes
an  $\tSPR$ inference.

We introduce one new piece of notation, writing $\olnot{\alpha}$
for the clause expressing that the partial assignment $\alpha$ does not hold
(so $C = \olnot{\alpha}$ if and only if $\alpha = \olnot{C}$).
Two partial assignments are called \emph{disjoint}
if their domains are disjoint.
The next lemma describes sufficient conditions for introducing,
successively, the clauses $\olnot{\alpha_i}$ for $i=0,1,2,\ldots$
using only $\tSPR$ inferences.

\begin{lem}\label{lem:SPR_symmetry_2}
Suppose $(\alpha_0, \tau_0), \ldots, (\alpha_m, \tau_m)$
is a sequence of pairs of partial assignments such that for each $i$,
\begin{enumerate} 
\item 
$\Gamma_{\rest\alpha_i} = \Gamma_{\rest\tau_i}$
\item 
$\alpha_i$ and $\tau_i$ are contradictory and have the same domain
\item 
for all $j < i$, the assignments $\alpha_j$ and $\tau_i$ are either disjoint or contradictory.
\end{enumerate}
Then we can derive $\Gamma \cup \{ \olnot{\alpha_i} : i=0,\dots,m\}$
from $\Gamma$ by a sequence of $\tSPR$ inferences.
\end{lem}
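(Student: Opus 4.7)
The plan is to proceed by induction on $i$, introducing the clauses $\olnot{\alpha_0}, \olnot{\alpha_1}, \ldots, \olnot{\alpha_m}$ one at a time, each by a single $\tSPR$ inference whose witnessing partial assignment is $\tau_i$. Write $\Gamma^{(i)} = \Gamma \cup \{\olnot{\alpha_j} : j < i\}$ for the set of clauses available just before step $i$. The two easy preconditions for $\tSPR$ are immediate from condition~2: since $\alpha_i$ and $\tau_i$ are contradictory and have the same domain, $\tau_i \vDash \olnot{\alpha_i}$ and $\tdom(\tau_i) = \tdom(\olnot{\alpha_i})$.

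The real work is to verify the RUP condition $\Gamma^{(i)}_{\rest\alpha_i} \vdash_1 \Gamma^{(i)}_{\rest\tau_i}$. I would split the right-hand side into clauses originating in $\Gamma$ and clauses of the form $\olnot{\alpha_j}$ for $j < i$. For the first group, condition~1 gives $\Gamma_{\rest\alpha_i} = \Gamma_{\rest\tau_i}$, and both sets sit inside their $\Gamma^{(i)}$ counterparts, so every such clause on the right already appears on the left and the RUP derivation is trivial.

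For the previously added clauses $\olnot{\alpha_j}$, I would invoke condition~3. If $\alpha_j$ and $\tau_i$ are contradictory, then $\tau_i$ satisfies $\olnot{\alpha_j}$, so this clause does not even appear in $\Gamma^{(i)}_{\rest\tau_i}$ and there is nothing to prove. If instead $\alpha_j$ and $\tau_i$ are disjoint, the key observation is that $\tdom(\alpha_i) = \tdom(\tau_i)$ by condition~2, so $\alpha_j$ and $\alpha_i$ are also disjoint; hence $\olnot{\alpha_j}$ is unchanged by restriction to either $\alpha_i$ or $\tau_i$ and appears verbatim on both sides, giving a trivial RUP derivation.

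The main obstacle is really just the possibility of an asymmetry between $\alpha_i$ and $\tau_i$ when they interact with an earlier $\olnot{\alpha_j}$ --- one could worry that $\alpha_i$ satisfies $\olnot{\alpha_j}$ (so the clause disappears on the left) while $\tau_i$ leaves it intact, breaking the RUP implication. The point of condition~3, combined with the equal-domain condition~2, is precisely to rule this out: contradiction with $\tau_i$ eliminates the clause on the right, while disjointness with $\tau_i$ forces disjointness with $\alpha_i$ as well, preserving the clause on both sides. Once this case analysis is checked, the induction goes through and yields the desired $\tSPR$ derivation of $\Gamma \cup \{\olnot{\alpha_i} : i \le m\}$.
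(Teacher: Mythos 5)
Your proposal is correct and matches the paper's proof essentially step for step: both fix $\tau_i$ as the $\tSPR$ witness for $\olnot{\alpha_i}$, both use condition~2 to verify $\tau_i \vDash \olnot{\alpha_i}$ and the equal-domain requirement, and both discharge the RUP condition by observing that every clause on the right appears on the left, via exactly the same case split (condition~1 for clauses of $\Gamma$; condition~3 with the equal-domain observation for the previously introduced clauses $\olnot{\alpha_j}$). The only cosmetic difference is that the paper phrases the conclusion as a set containment $\Gamma^{(i)}_{\rest\alpha_i} \supseteq \Gamma^{(i)}_{\rest\tau_i}$ while you phrase it as a trivial RUP derivation, but these are the same observation.
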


\begin{proof}
We write $C_i$ for $\olnot{\alpha_i}$.
By item 2,~$\tau_i \vDash C_i$.
Thus it is enough to show that for each $i$,
\[
{\big( \Gamma \cup \{ {C_0}, \ldots, {C_{i-1}} \} \big)}_{\rest\alpha_i}
\supseteq
{\big( \Gamma \cup \{ {C_0}, \ldots, {C_{i-1}} \} \big)}_{\rest\tau_i}.
\]
We have $\Gamma_{\rest\alpha_i} = \Gamma_{\rest\tau_i}$.
For $j<i$, either $\alpha_j$ and $\tau_i$ are disjoint
and consequently ${(C_j)}_{\rest\alpha_i} = {(C_j)}_{\rest\tau_i}=C_j$,
or they are contradictory and so $\tau_i \vDash C_j$ and $C_j$ vanishes
from the right hand side.
\end{proof}

In the lemma, if we added to (2) the condition that $\alpha_i$
and $\tau_i$ disagree on only a single variable, then
by Theorem~\ref{thm:RATandLPR} we could derive
the clauses $\olnot{\alpha_i}$ by $\tRAT$ inferences rather than
needing $\tSPR$ inferences. However in the applications below
they typically differ on more than one variable,
so our proofs are in $\SPRnnv$, not in~$\RATnnv$.

\subsection{Pigeonhole principle}\label{sec:PHP}

Let $n\ge 1$ and $[n]$ denote $\{0,\ldots,n{-}1\}$.
The pigeonhole principle $\PHP_n$ consists of the clauses
\begin{align*}
\bigor_{j\in [n]} p_{i,j}
     & \quad \text{for each fixed $i\in[n+1]$}
     &&\hbox{(pigeon axioms)} \\
\olnot {p_{i,j}} \lor \olnot{p_{i^\prime,j}}
     & \quad \text{for all $i < i^\prime\in[n+1]$ and $j \in [n]$}
     &&\hbox{(hole axioms)}.
\end{align*}

\begin{thmC}[\cite{HKB:StrongExtensionFree}]\label{thm:PHP_SPR_2}
$\PHP_n$ has polynomial size $\SPRnnv$ refutations.
\end{thmC}

\begin{proof}
Our strategy is to first derive all unit clauses $\olnot{p_{j,0}}$
for $j>0$, which effectively takes pigeon~$0$ and hole~$0$ out of the picture
and reduces $\PHP_n$ to a renamed instance of $\PHP_{n-1}$.
We repeat this construction to reduce to a renamed instance of $\PHP_{n-2}$, etc.
At each step, we will need to use several clauses introduced
by $\tSPR$ inferences. We  use Lemma~\ref{lem:SPR_symmetry_2}
to introduce all necessary clauses at one go at the start of the construction.

Let~$\alpha_{i,j,k}$ be the assignment setting
$p_{i,k}=1$,  $p_{j,i}=1$ and all other variables
$p_{\ell,k}, p_{\ell,i}$ for holes~$k$ and~$i$ to~$0$.
Let~$\pi_{k,i}$ be the $\PHP_n$-symmetry which switches holes~$k$ and $i$,
that is,  maps $p_{\ell, i} \mapsto p_{\ell, k}$ and $p_{\ell, k} \mapsto p_{\ell,i}$
for every pigeon~$\ell$. Let~$\tau_{i,j,k}$ be $\alpha_{i,j,k} \circ \pi_{k,i}$, so in particular~$\tau_{i,j,k}$
sets $p_{i,i}=1$ and $p_{j,k}=1$.
By the properties of symmetries,
we have ${(\PHP_n)}_{\rest \alpha_{i,j,k}} = {(\PHP_n)}_{\rest\tau_{i,j,k}}$.

For $i=0, \dots, n-2$ define
\[
A_i := \{ (\alpha_{i,j,k}, \tau_{i,j,k}) : i< j < n+1, \ i<k <n\}.
\]
Any $\tau_{i,j,k}$ appearing in $A_i$ contradicts every
 $\alpha_{i,j',k'}$ appearing in $A_i$, since they disagree about
 which pigeon maps to hole $i$. On the other
hand, if $i'<i$ and $\alpha_{i',j',k'}$ appears in $A_{i'}$
and is not disjoint from~$\tau_{i,j,k}$, then they must share some hole. So either $i=k'$ or $k=k'$, and in either case they disagree
about hole $k'$.

Hence we can apply Lemma~\ref{lem:SPR_symmetry_2}
to derive all clauses $\olnot{\alpha_{i,j,k}}$ such that $i < j < n+1$
and $i < k < n$.
Note $\olnot{\alpha_{i,j,k}}$ is the clause
$\olnot{p_{i,k}} \lor \bigvee_{\ell\not=i}p_{\ell,k} \lor \olnot{p_{j,i}}
   \lor \bigvee_{\ell\not=j}p_{\ell,i}$, which
we resolve
with hole axioms to get~$\olnot{p_{i,k}} \lor \olnot{p_{j,i}}$.

Now we  use induction on $i=0, \dots, n-1$ to derive all unit clauses
$\olnot{p_{j,i}}$ for all $j$ with $i<j<n+1$.
Fix $j>i$. For each hole $k>i$ we have
$\olnot{p_{i,k}} \lor \olnot{p_{j,i}}$ (or  if $i=n-1$ there is no such~$k$).
We have $\olnot{p_{i,i}} \lor \olnot{p_{j,i}}$ since it is a hole axiom,
and for each $k<i$, we have $\olnot{p_{i,k}}$ from the inductive hypothesis.
Resolving all these with the  axiom~$\bigvee_k p_{i,k}$
gives $\olnot {p_{j,i}}$.

Finally  the unit clauses $\olnot{p_{n,i}}$ for $i<n$ together
contradict the  axiom~$\bigvee_i p_{n,i}$.
\end{proof}

\subsection{Bit pigeonhole principle}\label{sec:BPHP}

Let $n=2^k$.
The \emph{bit pigeonhole principle} contradiction, $\BPHP_n$,
asserts that each of $n+1$ pigeons can be assigned a distinct $k$-bit binary string.
For each pigeon~$x$, with
 $0\le x<n+1$, it has variables $p^x_1, \dots, p^x_k$ for
the bits of the string assigned to~$x$. We think of strings $y \in {\{0,1\}}^k$ as holes.
When convenient we will identify holes with numbers $y<n$.
We write $(x \pigeonto y)$ for the conjunction $\bigwedge_i (p^x_i = y_i)$
asserting that pigeon~$x$ goes to hole~$y$,
where $p^x_i = 1$ is the literal $p^x_i$
and $p^x_i = 0$ is the literal $\olnot{p^x_i}$, and where
$y_i$ is the $i$-th bit of~$y$.
We write $(x \notpigeonto y)$ for its negation: $\bigvee_i (p^x_i \neq y_i)$.
The axioms of $\BPHP_n$ are then
\[
(x \notpigeonto y) \lor (x' \notpigeonto y)
\quad
\hbox{for all holes $y$ and all distinct pigeons~$x, x'$.}
\]
Notice that the set $\{ (x \notpigeonto y) : y<n \}$ consists of the $2^k$ clauses
containing the variables $p^x_1, \dots, p^x_k$ with all
patterns of negations. We can derive $\bot$ from this set in
$2^k{-}1$ resolution steps.

\begin{thm}\label{thm:SPR_BPHP}
The $\BPHP_n$ clauses
have polynomial size $\SPRnnv$ refutations.
\end{thm}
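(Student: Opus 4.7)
The plan is to adapt the PHP refutation of Theorem~\ref{thm:PHP_SPR_2} to the bit-variable setting, using pigeon-swap symmetries in place of PHP's hole-swap symmetries. Since $\BPHP_n$ lacks explicit pigeon axioms and has no single-hole-swap symmetry, we use a phased structure: for $y = 0, 1, \ldots, n-1$ in turn, run a ``Phase $A = y$'' that derives the $k$-clauses $(j \notpigeonto y)$ for every pigeon $j > y$. After all $n$ phases the clauses $\{(n \notpigeonto y) : y < n\}$ realize every sign pattern on pigeon $n$'s bits and tree-resolve to $\bot$ in $n - 1$ steps.

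Within Phase $A = y$, invoke Lemma~\ref{lem:SPR_symmetry_2} with pairs indexed by $j > y$ and $B \in \{0,1\}^k \setminus \{y\}$: let $\pi_{y,j}$ be the $\BPHP_n$-symmetry that swaps pigeons $y$ and $j$, and set $\alpha_{j,B} = (j \pigeonto y) \wedge (y \pigeonto B)$, so that $\tau_{j,B} := \alpha_{j,B} \circ \pi_{y,j} = (j \pigeonto B) \wedge (y \pigeonto y)$. These share the domain $\{p^y_\ast, p^j_\ast\}$, are contradictory since $B \neq y$, and satisfy $\Gamma_{\rest\alpha_{j,B}} = \Gamma_{\rest\tau_{j,B}}$ by Lemma~\ref{lem:substitution_composition}, because $\pi_{y,j}$ is a symmetry of the current~$\Gamma$. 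Condition~(3) of Lemma~\ref{lem:SPR_symmetry_2} holds trivially by contradiction on pigeon~$y$: in every $\alpha_a$ pigeon $y$ is assigned $B_a \neq y$, while in every $\tau_b$ it is assigned~$y$, and any two $k$-bit strings $B_a \neq y$ disagree on some bit. The lemma thus introduces $\{(j \notpigeonto y) \lor (y \notpigeonto B) : j > y,\ B \neq y\}$ by SPR. Combined with the $\BPHP$ hole axioms $(y \notpigeonto y) \lor (j \notpigeonto y)$ (which supply the missing case $B = y$), for each $j > y$ we obtain the family $\{(j \notpigeonto y) \lor (y \notpigeonto B) : B \in \{0,1\}^k\}$. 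The $(y \notpigeonto B)$ parts exhaust every sign pattern on pigeon $y$'s $k$ bits, so this family tree-resolves in $2^k - 1$ steps, with the shared block $(j \notpigeonto y)$ carried through, yielding a derivation of $(j \notpigeonto y)$.

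The main obstacle is checking that $\pi_{y,j}$ remains a $\Gamma$-symmetry after $\Gamma$ has been enlarged by earlier phases. This holds because every earlier-phase clause $(j' \notpigeonto y') \lor (y' \notpigeonto B')$ and every derived $(j' \notpigeonto y')$, with $y' < y$, belongs to a collection closed under swapping any two pigeons of index strictly greater than $y'$: the only nontrivial effect of $\pi_{y,j}$ is to send $j' = y \mapsto j$ or $j' = j \mapsto y$, producing clauses of the same form (since $y, j > y'$), while pigeon $y'$ itself is fixed. Each phase contributes $O(n^2)$ SPR inferences and $O(n^2)$ resolutions, so the total refutation has size $O(n^3)$, which is polynomial in the input.
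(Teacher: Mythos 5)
Your proof derives the same clauses $(j \notpigeonto y) \lor (y \notpigeonto B)$ via the same pigeon-swap symmetries as the paper, so the core idea is identical. The organizational choice is different: the paper applies Lemma~\ref{lem:SPR_symmetry_2} \emph{once}, to the full set of pairs $(\alpha_{m,x,y},\tau_{m,x,y})$ over all $m,x,y$ (verifying condition~(3) across different $m$'s using the constraint $y>m$), and only after all SPR clauses are present does it do the resolutions. You instead interleave, applying the lemma per phase, resolving, and then re-applying the lemma to the enlarged~$\Gamma$ at the next phase. That trades away the cross-phase check of condition~(3), at the cost of having to re-verify condition~(1), $\Gamma_{\rest\alpha}=\Gamma_{\rest\tau}$, against the enlarged~$\Gamma$.

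There is a gap in that re-verification. Because there is no deletion, the $\Gamma$ at the start of Phase~$y$ contains not only the earlier phases' SPR clauses and the final $k$-clauses $(j' \notpigeonto y')$, but also every \emph{intermediate} clause produced in the tree resolutions, of the form $(j' \notpigeonto y')\lor D$ with $D$ a partial disjunction on pigeon~$y'$'s bits. Your ``collection closed under swapping pigeons $>y'$'' argument covers the SPR clauses and the final $k$-clauses but says nothing about these intermediates, and the set of intermediates depends on the shape of the resolution tree — it is not automatically invariant under~$\pi_{y,j}$. Two short repairs exist. One is to fix a canonical tree order on pigeon~$y'$'s bits, independent of~$j'$, so the set of intermediates is manifestly closed under any swap of two pigeons $>y'$. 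The other, used elsewhere in the paper (e.g.\ the Tseitin proof), is to note that everything produced in Phase~$y'$ is subsumed by the final $(j'\notpigeonto y')$'s and invoke Lemma~\ref{lem:subsumeNoDelete}(a), so the SPR conditions at Phase~$y$ need only be checked against the subsuming core $\BPHP_n\cup\{(j'\notpigeonto y'): y'<y,\ j'>y'\}$, which is obviously $\pi_{y,j}$-symmetric. Either fix closes the hole, but as written your argument does not establish condition~(1) of Lemma~\ref{lem:SPR_symmetry_2} for the actual clause set present. The paper's ``all SPR inferences first'' structure avoids this issue altogether, which is also why it can restrict to $B>y$ rather than all $B\neq y$.
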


The theorem is proved below. It is essentially the same
as the proof of PHP in~\cite{HKB:StrongExtensionFree}
(or Theorem~\ref{thm:PHP_SPR_2} above).
For each  $m < n-1$ and each pair $x,y>m$,
we define a clause
\[
C_{m,x,y} := (m \notpigeonto y) \lor (x \notpigeonto m).
\]
Note we allow $x=y$.
Let $\Gamma$ be the set of all such clauses $C_{m,x,y}$.
We will show these clauses can be introduced by $\tSPR$ inferences,
but first we show they suffice to derive $\BPHP_n$.

\begin{lem}\label{lem:BPHP_final_refutation}
$\BPHP_n \cup \Gamma$
 has a polynomial size resolution refutation.
\end{lem}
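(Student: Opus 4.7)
The plan is to imitate the pigeonhole refutation of Theorem~\ref{thm:PHP_SPR_2}: by induction on $m = 0, 1, \ldots, n{-}2$, derive for every pigeon $x$ with $m < x \le n$ the clause $(x \notpigeonto m)$ that rules out pigeon $x$ going to hole~$m$. Because $(x \notpigeonto m)$ is a disjunction of $k = \log n$ literals, not a unit clause, each inductive step will cost $O(n)$ resolutions rather than $O(1)$, but this is still polynomial.

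For the inductive step at stage $m$ with a fixed pigeon $x > m$, the idea is to collect, for every hole $y < n$, a clause of the form $(m \notpigeonto y) \lor (x \notpigeonto m)$, drawn from three sources. When $y > m$, this is literally the hypothesis clause $C_{m,x,y} \in \Gamma$. When $y = m$, it is the $\BPHP_n$ axiom for the two pigeons $m$ and~$x$ at hole~$m$. When $y < m$, the inductive hypothesis applied with $m$ in the role of the pigeon (legitimate because $m > y$) has already produced $(m \notpigeonto y)$, which weakens to the required clause. As $y$ ranges over all $n = 2^k$ holes, the prefixes $(m \notpigeonto y) = \bigvee_i (p^m_i \neq y_i)$ realize every Boolean pattern on the $k$ variables $p^m_1, \ldots, p^m_k$. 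Consequently $n{-}1$ resolutions on these variables collapse the collected clauses to their common tail $(x \notpigeonto m)$, completing the inductive step.

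Once the induction terminates, $(x \notpigeonto m)$ is available for every $0 \le m < x \le n$. To finish the refutation, take the $\BPHP_n$ axiom $(n{-}1 \notpigeonto n{-}1) \lor (n \notpigeonto n{-}1)$ and combine it with the derived clauses $(n{-}1 \notpigeonto m)$ for $m = 0, \ldots, n{-}2$; the same $n$-way resolution trick on $p^{n-1}_1, \ldots, p^{n-1}_k$ yields $(n \notpigeonto n{-}1)$. Joined with the previously derived $(n \notpigeonto m)$ for $m = 0, \ldots, n{-}2$, these $n$ clauses again cover every pattern on $p^n_1, \ldots, p^n_k$, so $n{-}1$ further resolutions produce $\perp$. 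The total count is $O(n^2)$ inductive steps at $O(n)$ resolutions each, giving an $O(n^3)$ refutation. I do not foresee a substantive obstacle: the argument is bookkeeping around the standard observation that $2^k$ clauses enumerating every bit-pattern on $k$ variables, sharing a common tail~$D$, resolve to~$D$ in $2^k - 1$ steps.
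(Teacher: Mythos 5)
Your proposal matches the paper's proof essentially step for step: induction on holes $m$, at each stage deriving $(x \notpigeonto m)$ for every $x > m$ by resolving the $n$ clauses with prefix $(m \notpigeonto y)$ coming from $C_{m,x,y}$ (for $y > m$), the $\BPHP_n$ axiom (for $y = m$), and the inductive hypothesis (for $y < m$), then resolving the clauses $(n \notpigeonto m)$, $m < n$, to reach $\perp$. The only cosmetic difference is that you peel off the $m = n{-}1$ case and the final resolutions as a separate wrap-up, and you mention a ``weakening'' step for the $y < m$ clauses which isn't actually needed (a subclause resolves just as well), but neither changes the argument.
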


\begin{proof}
Using induction on $m = 0, 1,2, \ldots, n{-}1$ we derive all clauses
$(x \notpigeonto m)$ such that $x > m$.
So suppose $m < n$ and $x >m$.
For each $y>m$, we have the clause
$
(m \notpigeonto y) \lor (x \notpigeonto m),
$
as this is  $C_{m,x,y}$. We also have the clause
$
(m \notpigeonto m) \lor (x \notpigeonto m),
$
as this is an axiom of $\BPHP_n$. Finally, for each $m'<m$, we have
$
(m \notpigeonto m')
$
by the inductive hypothesis (or, in the base case $m=0$, there are no such clauses).
Resolving all these together gives $(x \notpigeonto m)$.

At the end we have in particular derived all the clauses
$(n \notpigeonto m)$ such that $m < n$.
Resolving all these clauses together yields $\bot$.
\end{proof}

Thus it is enough to show that we can introduce all clauses in $\Gamma$ using $\tSPR$ inferences.
We use Lemma~\ref{lem:SPR_symmetry_2}. For $m < n-1$ and each pair $x,y>m$,
define partial assignments
\begin{align*}
\alpha_{m,x,y} &:= (m \pigeonto y) \wedge (x \pigeonto m)\\
\tau_{m,x,y} &:= (m \pigeonto m) \wedge (x \pigeonto y)
\end{align*}
so that
$C_{m,x,y} = \olnot{\alpha_{m,x,y}}$ and
$\tau_{m,x,y} = \alpha_{m,x,y} \circ \pi$ where $\pi$ swaps all variables
for pigeons~$m$ and~$x$. Hence
${(\BPHP_n)}_{\rest\alpha_{m,x,y}}={(\BPHP_n)}_{\rest\tau_{m,x,y}}$
as required.

For the other conditions for Lemma~\ref{lem:SPR_symmetry_2},
first observe that assignments $\alpha_{m,x,y}$
and $\tau_{m,x',y'}$ are always inconsistent, since they map $m$ to different places.
Now suppose that $m<m'$ and $\alpha_{m,x,y}$ and~$\tau_{m',x',y'}$
are not disjoint. Then they must have some pigeon in common,
so either $m'=x$ or $x'=x$. In both cases $\tau_{m',x',y'}$
contradicts $(x \pigeonto m)$, in the first case because
it maps $x$ to $m'$,
and in the second  because it maps~$x$ to $y'$ with $y'>m'$.

\IGNORE{
The other conditions for Lemma~\ref{lem:SPR_symmetry_2}
follow from the following two observations:
\textbf{[To be rewritten, to match the lemma better]}
\begin{enumerate}[(i)]
\item
for any $x',y'$  the partial assignment $\tau[m,x',y']$ satisfies $C_{m,x,y}$
\item
if $m'>m$, then for any $x',y'$ either
\begin{enumerate}
\item
$\alpha[m',x',y']$ and $\tau[m',x',y']$ set no variables in $C_{m,x,y}$, or
\item
$\tau[m',x',y']$ satisfies $C_{m,x,y}$.
\end{enumerate}
\end{enumerate}
For (i), it is enough to observe that
$\tau[m,x',y']$ maps $m$ to $m$,
while $C_{m,x,y}$ contains $(m \notpigeonto y)$ with $y>m$.
For (ii), notice that $\alpha[m',x',y']$ and $\tau[m',x',y']$
both set precisely the variables for pigeons $m'$ and $x'$.
If this includes some variable
in $C_{m,x,y}$
then there is some pigeon in common, so either
$m'=x$ or~$x'=x$.
In both cases $\tau[m',x',y]$ satisfies $(x \notpigeonto m)$,
in the first case because it maps $x$ to $m'$,
and in the second  because it maps $x$ to $y'$ with $y'>m'$.}

\subsection{Parity principle}\label{sec:PRparity}

The \emph{parity principle} states that there is no (undirected) graph on
an odd number of vertices in which each vertex has degree exactly one
(see~\cite{Ajtai:ParityandPHP,BIKPP:nullstellensatz}).
For $n$ odd, let $\PAR_n$ be a set of clauses expressing (a violation of) the
parity principle on $n$ vertices, with variables $x_{i,j}$ for the $\binom n 2$
many values $0\le i < j < n$, where we identify the variable $x_{i,j}$ with~$x_{j,i}$.
We write $[n]$ for~$\{0,\ldots, n{-}1\}$.
$\PAR_n$ consists of the clauses
\begin{align*}
\bigor_{j\not= i}  x_{i,j} & \quad \hbox{for each fixed $i\in[n]$} &&\hbox{(``pigeon'' axioms)} \\
\olnot {x_{i,j}} \lor \olnot{x_{i,j^\prime}} & \quad \text{for all distinct $i,j,j^\prime \in [n]$} &&\hbox{(``hole'' axioms)}.
\end{align*}

\begin{thm}\label{thm:PRparity}
The $\PAR_n$ clauses
have polynomial size $\SPRnnv$ refutations.
\end{thm}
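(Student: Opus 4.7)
The plan mirrors the $\SPRnnv$ refutation of $\BPHP_n$. Since $n$ is odd, there is no perfect matching on $n$ vertices, and the strategy will be to force vertex $2t$ to be matched to vertex $2t{+}1$ for each $t = 0, 1, \ldots, (n{-}3)/2$; the pigeon axiom for the unpaired vertex $n{-}1$ will then yield~$\bot$. No deletion will be needed.

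First I will introduce, by a polynomial-size batch of $\tSPR$ inferences, the two-literal clauses
\[
C_{m,k} := \olnot{x_{m,k}} \vee x_{m,m+1}
\]
for each even $m$ with $0 \le m \le n{-}3$ and each $k \in \{m{+}2, \ldots, n{-}1\}$. The witness for $C_{m,k}$ comes from the $\PAR_n$-symmetry $\pi_{m+1,k}$ that swaps vertices $m{+}1$ and $k$: with
\[
\alpha_{m,k} := \{x_{m,k} \mapsto 1,\, x_{m,m+1} \mapsto 0\}
\quad\text{and}\quad
\tau_{m,k} := \alpha_{m,k} \circ \pi_{m+1,k},
\]
one computes $\tau_{m,k} = \{x_{m,k} \mapsto 0,\, x_{m,m+1} \mapsto 1\}$, so that $\tdom(\alpha_{m,k}) = \tdom(\tau_{m,k})$, $\olnot{\alpha_{m,k}} = C_{m,k}$, $\tau_{m,k} \vDash C_{m,k}$, and ${(\PAR_n)}_{\rest\alpha_{m,k}} = {(\PAR_n)}_{\rest\tau_{m,k}}$ by Lemma~\ref{lem:substitution_composition}.

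These pairs will be fed into Lemma~\ref{lem:SPR_symmetry_2}, ordered first by $m$ and then by $k$. Conditions~1 and~2 of that lemma are immediate. For condition~3, within a single stage $m$ the assignments $\alpha_{m,k'}$ and $\tau_{m,k}$ with $k' \ne k$ share the variable $x_{m,m+1}$ and assign it opposite values; across stages $m' < m$ (both even), a brief case check on the four unordered vertex-pairs $\{m', k'\}$, $\{m', m'{+}1\}$, $\{m, k\}$, $\{m, m{+}1\}$ shows that none of the first two can equal either of the last two, since $m - m' \ge 2$ together with $k \ge m{+}2$ and $k' \ge m'{+}2$ preclude any coincidence. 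So the cross-stage domains are disjoint and Lemma~\ref{lem:SPR_symmetry_2} lets all $C_{m,k}$ be introduced at once.

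The remainder will be pure resolution. By induction on even $m$ from $0$ upward I will derive the unit clause $x_{m,m+1}$: the inductive hypothesis supplies $x_{m', m'+1}$ for every even $m' < m$, and the ``hole'' axioms (using the identification $x_{i,j} = x_{j,i}$) convert these into $\olnot{x_{m,j}}$ for every $j < m$; these, together with the clauses $C_{m,k}$ for $k > m{+}1$ and the pigeon axiom for vertex $m$, resolve to $x_{m,m+1}$. After all these unit clauses have been derived, the same trick will give $\olnot{x_{n-1,j}}$ for every $j \ne n{-}1$, and a final resolution against the pigeon axiom for vertex $n{-}1$ yields $\bot$. The delicate point, which I expect to be the main obstacle, is the cross-stage bookkeeping for condition~3 of Lemma~\ref{lem:SPR_symmetry_2}: because the variables are indexed by unordered pairs of vertices, one must carefully rule out every way in which an earlier $\alpha_{m',k'}$ and a later $\tau_{m,k}$ could share a variable without contradicting on it.
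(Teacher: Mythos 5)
Your plan breaks down at the central $\tSPR$ step, and the failure is substantive, not cosmetic.

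You set $\alpha_{m,k}$ to be the two-variable assignment $\{x_{m,k}\mapsto 1,\; x_{m,m+1}\mapsto 0\}$ and claim that $\tau_{m,k}:=\alpha_{m,k}\circ\pi_{m+1,k}$ equals the two-variable partial assignment $\{x_{m,k}\mapsto 0,\; x_{m,m+1}\mapsto 1\}$. That computation is wrong. The symmetry $\pi_{m+1,k}$ swaps \emph{vertices} $m{+}1$ and $k$, so its domain as a substitution is all variables $x_{m+1,\ell}$ and $x_{k,\ell}$ --- roughly $2n$ variables, not two. For any $\ell\notin\{m,m{+}1,k\}$ one has $(\alpha_{m,k}\circ\pi_{m+1,k})(x_{m+1,\ell}) = \alpha_{m,k}(x_{k,\ell}) = x_{k,\ell}$, a variable, not a constant. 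Hence $\alpha_{m,k}\circ\pi_{m+1,k}$ is a genuine substitution identifying many variables; it is not a partial assignment at all, let alone one with domain $\{x_{m,k},x_{m,m+1}\}$. The observation in Section~4 of the paper (just before Lemma~\ref{lem:SPR_symmetry_2}) explicitly warns about this: the composition behaves as a partial assignment, and one gets an $\tSPR$ rather than an $\tSR$ inference, only when the domain and image of $\pi$ are contained in $\tdom(\alpha)$. That hypothesis fails badly here.

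Nor can the step be repaired by simply \emph{defining} $\tau_{m,k}$ to be the partial assignment $\{x_{m,k}\mapsto 0,\; x_{m,m+1}\mapsto 1\}$. With that $\tau_{m,k}$, condition~1 of Lemma~\ref{lem:SPR_symmetry_2} --- that ${(\PAR_n)}_{\rest\alpha_{m,k}} = {(\PAR_n)}_{\rest\tau_{m,k}}$ --- is false. Restricting by $\alpha_{m,k}$ (which asserts $m\pigeonto k$) satisfies the pigeon axiom for vertex $k$ and produces the unit clauses $\olnot{x_{k,\ell}}$ for $\ell\neq m,k$; restricting by $\tau_{m,k}$ (which asserts $m\pigeonto m{+}1$) instead satisfies the pigeon axiom for $m{+}1$ and produces units $\olnot{x_{m+1,\ell}}$, while leaving the $k$-pigeon axiom alive. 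One can check that neither residual set is even $\vdash_1$-derivable from the other, so this is not merely a failure of the convenient equality version; the underlying $\tSPR$ side condition $\Gamma_{\rest\alpha}\vdash_1\Gamma_{\rest\tau}$ does not hold either. In fact, of the three partial assignments on $\{x_{m,k},x_{m,m+1}\}$ that satisfy $C_{m,k}$, none works as a witness.

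The paper's proof avoids this precisely by making $\alpha_{i,j,k}$ a wide assignment that fixes \emph{all} variables adjacent to the relevant vertices, so that the vertex swap moves variables only inside $\tdom(\alpha)$ and the composition genuinely is a partial assignment of the same domain. This is the structural difference between $\PAR_n$ and $\BPHP_n$: in $\BPHP_n$, the ``pigeon'' variable blocks are disjoint, so swapping two pigeons moves exactly the $2k$ variables your two-pigeon $\alpha$ already fixes; in $\PAR_n$, variables are shared edges, and swapping two vertices touches a linear number of edges. Your proposal imports the $\BPHP_n$ template without accounting for this, and the resulting clauses $C_{m,k}$ are too short to be introduced by $\tSPR$ with the claimed witnesses.
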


\begin{proof}
 Let $n=2m+1$.
For $i<m$ and distinct $j,k$ with $2i+1<j,k<n$
define $\alpha_{i,j,k}$ to be the partial assignment
which matches $2i$ to $j$ and $2i+1$ to $k$, and sets all other adjacent variables to $0$.
That is, $x_{2i,j}=1$ and $x_{2i, j'}=0$ for all $j' \neq j$,
and $x_{2i+1,k}=1$
and $x_{2i+1, k'}=0$ for all $k' \neq k$.
Similarly define $\tau_{i,j,k}$ to be the partial assignment
which matches $2i$ to $2i+1$ and $j$ to $k$, and sets all other adjacent variables to $0$,
so that $\tau_{i,j,k} =\alpha_{i,j,k} \circ \pi$ where $\pi$ swaps vertices
$2i+1$ and $j$.
It is easy to see that the conditions of Lemma~\ref{lem:SPR_symmetry_2} are satisfied.
Therefore, we can introduce all clauses  $\olnot{\alpha_{i,j,k}}$
by $\tSPR$ inferences.

We now inductively derive the unit clauses $x_{2i,2i+1}$ for $i=0, 1, \dots, m-1$.
Once we have these, refuting $\PAR_n$ becomes trivial.
So suppose we have $x_{2i',2i'+1}$ for all $i'<i$ and want to derive $x_{2i, 2i+1}$.
Consider any $r<2i$. First suppose $r$ is even, so $r=2m$ for some $m<i$.
We resolve the ``hole'' axiom $\olnot{x_{2m, 2i}} \lor \olnot{x_{2m, 2m+1}}$
with $x_{2m, 2m+1}$ to get $\olnot{x_{2m, 2i}}$, which is the same clause
as $\olnot{x_{2i, r}}$. A similar argument works for $r$ odd, and
we can also obtain $\olnot{x_{2i+1, r}}$ in a similar way.

Resolving the clauses $\olnot{x_{2i, r}}$ and $\olnot{x_{2i+1, r}}$ for $r<2i$
with the ``pigeon'' axioms for vertices $2i$ and $2i+\penalty10000 1$ gives clauses
\[
x_{2i,2i+1} \lor \bigvee_{r>2i+1} x_{2i,r}
\quad \hbox{and} \quad
x_{2i, 2i+1} \lor \bigvee_{r >2i+1} x_{2i+1,r}.
\]
Now by resolving clauses $\olnot{\alpha_{i,j,k}}$ with suitable ``hole'' axioms
we can get $\olnot{x_{2i,j}} \lor \olnot{x_{2i+1, k}}$
for all distinct~$j,k>2i+1$. Resolving these with the clauses above gives
$x_{2i,2i+1}$, as required.
\end{proof}

\subsection{Clique-coloring principle}\label{sec:PRcliqueColoring}

The \emph{clique-coloring principle} $\CC_{n,m}$ states, informally, that a graph
with $n$ vertices cannot have
both a clique of size~$m$ and a coloring of size $m-1$
(see~\cite{Krajicek:interpolation,Pudlak:monotone}). For~$m \le n$
integers, $\CC_{n,m}$ uses  variables
$p_{a,i}$, $q_{i,c}$ and $x_{i,j}$ where $a \in [m]$
and $c\in[m{-}1]$ and~$i, j \in [n]$ with~$i\not= j$. Again, $x_{i,j}$
is identified with $x_{j,i}$.  The intuition is that $x_{i,j}$ indicates
that vertices $i$ and~$j$ are joined by an edge, $p_{a,i}$ asserts
that $i$ is the $a$-th vertex of a clique, and $q_{i,c}$ indicates that
vertex~$i$ is assigned color~$c$.
We list the clauses of $\CC_{n,m}$ as
\begin{enumerate}[(i)]
\item
$\bigor\nolimits_{i}  p_{a,i}$ \
for each  $a\in[m]$
\item
$\olnot {p_{a,i}} \lor \olnot{p_{a',i}}$ \
for  distinct $a,a^\prime \in[m]$ and each $i \in [n]$
\item
$\bigor\nolimits_{c} q_{i,c}$ \
for each  $i\in[n]$
\item
$\olnot {q_{i,c}} \lor \olnot{q_{i,c^\prime}}$ \
for each $i\in[n]$ and distinct $c,c^\prime \in [m{-}1]$
\item
$\olnot{p_{a,i}} \lor \olnot{p_{a^\prime, j}} \lor x_{i,j}$ \
for each distinct $a,a^\prime\in[m]$ and
distinct $i,j\in[n]$
\item
$\olnot {q_{i,c}} \lor \olnot{q_{j,c}} \lor \olnot{x_{i,j}}$ \
for each $c\in[m{-}1]$ and distinct $i,j \in [n]$.
\end{enumerate}

\begin{thm}\label{thm:PRcliqueColoring}
The $\CC_{n,m}$ clauses
have polynomial size $\SPRnnv$ refutations.
\end{thm}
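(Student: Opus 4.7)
My plan is to parallel the proof of Theorem~\ref{thm:PHP_SPR_2} in two stages. First I would pin the clique at vertices $0, \ldots, m-1$ by deriving the unit clauses $p_{a,a}$ for each $a \in [m]$. Then clauses (v) and (vi) combined with these pinning clauses reduce the remaining obligations to a copy of $\PHP_{m-1}$ in the color variables $q_{a,c}$ with $a \in [m]$, $c \in [m-1]$, which I refute by a second application of the same strategy.

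For the pinning stage I would proceed by induction on $a$. At step $a$, apply Lemma~\ref{lem:SPR_symmetry_2} to pairs $(\alpha_{a,b,k}, \tau_{a,b,k})$ indexed by $b \in [m] \setminus \{a\}$ and $k \in [n]$ with $k > a$, where $\pi_{a,k}$ is the vertex-swap $\CC_{n,m}$-symmetry exchanging vertices $a$ and $k$ and $\tau_{a,b,k} = \alpha_{a,b,k} \circ \pi_{a,k}$. Because clause (v) intertwines the $p$- and $x$-variables, $\pi_{a,k}$ has to act on $x$ as well as on $p$ and $q$, so $\tdom(\alpha_{a,b,k})$ is forced to contain the entire support of $\pi_{a,k}$. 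I set $\alpha_{a,b,k}(p_{a,k}) = \alpha_{a,b,k}(p_{b,a}) = 1$, the remaining $p$- and $q$-variables in the two affected vertex columns and rows to $0$ (after fixing two distinct colors $c_0, c_1$ with $\alpha_{a,b,k}(q_{a,c_0}) = \alpha_{a,b,k}(q_{k,c_1}) = 1$), and every $x$-literal in the support of $\pi_{a,k}$ to $1$. The consistency conditions of Lemma~\ref{lem:SPR_symmetry_2} are verified exactly as in the proof of Theorem~\ref{thm:PHP_SPR_2}: each $\tau_{a,b,k}$ maps vertex $a$ to position $a$, so any two $\tau_{a,b,k}$ and $\alpha_{a,b',k'}$ from the same step contradict each other, while any $\tau_{a,b,k}$ from step $a > a'$ either is disjoint from $\alpha_{a',b',k'}$ or shares a vertex on which the two disagree.

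After the $\tSPR$ inferences, each $\olnot{\alpha_{a,b,k}}$ is trimmed by resolution: positive $p$-literals fall to clause (ii), positive $q$-literals fall to clause (iv), and the negative $q$-literals are summed out across all $(c_0, c_1)$ choices using clause (iii). This leaves a weakened 2-clause $\olnot{p_{a,k}} \lor \olnot{p_{b,a}}$ together with negative $x$-literals coming from the $x$-variables in $\pi_{a,k}$'s support. The PHP-style induction of Theorem~\ref{thm:PHP_SPR_2} (using clauses (i) and (ii)) proceeds with these weakened 2-clauses; the accumulated negative $x$-literals are cancelled later by resolving against instances of clause (v) applied to the inductively-derived pinning clauses $p_{a',a'}$ and clique edges $x_{a',a''}$ for indices $a', a'' \le a$ that become available as the induction proceeds, ultimately yielding the unit clauses $p_{a,a}$. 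For the reduction to $\PHP_{m-1}$, resolving clause (v) with $p_{a,a}$ and $p_{b,b}$ gives the edges $x_{a,b}$ for all distinct $a,b \in [m]$, and resolving clause (vi) with these edges gives $\olnot{q_{a,c}} \lor \olnot{q_{b,c}}$. Combined with clauses (iii) and (iv), this is an isomorphic copy of $\PHP_{m-1}$ in the variables $q_{a,c}$, which I refute by running the $\SPRnnv$ proof of Theorem~\ref{thm:PHP_SPR_2} verbatim using color-swap $\CC_{n,m}$-symmetries. These color-swaps fix all $p$- and $x$-variables and preserve the accumulated $q$-clauses, so Lemma~\ref{lem:subsumeNoDelete} ensures each SPR inference remains valid in the current clause set.

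The principal technical obstacle I anticipate is the cleanup in the pinning stage. Because $\pi_{a,k}$ is forced by clause (v) to touch $\Theta(n+m)$ variables, the raw SPR-derived clauses carry many extraneous negative $x$-literals, and disposing of them requires careful bookkeeping of which pinning clauses and clique edges have been derived by each step of the induction. With this cleanup handled, the remainder of the proof is a routine adaptation of Theorem~\ref{thm:PHP_SPR_2}, and both the number of SPR inferences and the total cleanup overhead remain polynomial in $n$ and $m$.
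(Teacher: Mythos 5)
Your proposal takes a genuinely different route from the paper, but the first stage has a gap that I do not see how to close.

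Your pinning stage uses a vertex-swap $\CC_{n,m}$-symmetry $\pi_{a,k}$, and since axiom~(v) couples the $p$-variables to the $x$-variables, $\pi_{a,k}$ necessarily acts on all $x_{a,\ell}$ and $x_{k,\ell}$. The SPR requirement $\tdom(\tau) = \tdom(\alpha)$ then forces $\alpha_{a,b,k}$ to assign roughly $2n$ of the $x$-variables, so the clause $\olnot{\alpha_{a,b,k}}$ introduced by the SPR inference carries $\Theta(n)$ extra $x$-literals. You correctly identify cleaning these up as the main obstacle, but the proposed cleanup does not work. To cancel a literal $\olnot{x_{a,\ell}}$ by resolving against axiom~(v) you need to have derived $x_{a,\ell}$, which requires knowing that both $a$ and $\ell$ are occupied clique vertices; for a vertex $\ell$ outside the set you eventually want to pin (there are $n-m$ such vertices, and in general $n$ is much larger than $m$) there is simply no clause in $\CC_{n,m}$ that asserts $\ell$ is a clique vertex, so $x_{a,\ell}$ is not derivable. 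Flipping the sign and setting $\alpha(x_{a,\ell})=0$ does not help either: then $\olnot{\alpha}$ contains the positive literal $x_{a,\ell}$, and the only axiom with $\olnot{x_{a,\ell}}$ is~(vi), which introduces new $q$-literals rather than shrinking the clause. So the clauses produced by your pinning stage never collapse to the 2-clauses $\olnot{p_{a,k}} \lor \olnot{p_{b,a}}$, and the PHP-style induction you want to run on them never gets off the ground.

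The paper sidesteps the $x$-variables entirely, and this is the idea your proposal is missing. It does not pin the clique first and then run a separate color-PHP; instead it uses a single sequence of SPR inferences with partial assignments $\alpha^r_{a,i,j,c}$ and $\tau^r_{a,i,j,c}$ that assign \emph{only} $p$- and $q$-variables and, crucially, \emph{agree} on all $p$-variables: both place index $a$ at vertex $j$ and index $r$ at vertex $i$, and differ only in which of the two colors $r,c$ goes to $i$ and which to $j$. Because $\alpha$ and $\tau$ share the same $p$-part, the axioms involving $p$ (namely (i), (ii), (v)) restrict identically under $\alpha$ and $\tau$ with no symmetry argument at all, and the remaining verification concerns only the $q$-axioms. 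No $x$-variable ever enters $\tdom(\alpha)$ or $\tdom(\tau)$, so the introduced clauses have no stray $x$-literals and can be trimmed to $\olnot{p_{a,j}} \lor \olnot{q_{j,r}} \lor \olnot{p_{r,i}} \lor \olnot{q_{i,c}}$ using only axioms~(ii) and~(iv). The subsequent resolution argument then derives, for each $r$, clauses saying ``any clique index $a>r$ goes to a vertex colored with some color $>r$'', which is the gradual reduction replacing your two-stage plan. If you want to fix your proposal you should look for a pair $(\alpha,\tau)$ with domain contained in the $p$- and $q$-variables whose $p$-parts coincide, rather than a pair that encodes a vertex swap.
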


\begin{proof}
The intuition for the $\SPRnnv$ proof is that we introduce clauses
stating that the first $r$ clique members are assigned vertices that
are colored by the first $r$ colors; iteratively for $r=1,2,\ldots$.

Write $(a \pigeonto i \pigeonto c)$
for the assignment which sets
\begin{align*}
p_{a,i} &= 1 \text{\quad and \quad $p_{a',i}=0$ for all $a' \neq a$}\\
q_{i,c} &= 1 \text{\quad and \quad $q_{i,c'}=0$ for all $c' \neq c$.}
\end{align*}
For all $r<m-2$, all indices $a>r$, all colors $c>r$ and all distinct vertices $i,j \in [n]$, define
\begin{align*}
\alpha^r_{a,i,j,c} &:=
 (a \pigeonto j \pigeonto r) \land (r \pigeonto i \pigeonto c)\\
\tau^r_{a,i,j,c} &:=
 (a \pigeonto j \pigeonto c) \land (r \pigeonto i \pigeonto r).
\end{align*}

Let $\Gamma$ consist of axioms (i), (ii) and (v), containing $p$ and~$x$ variables
but no~$q$ variables,
and let~$\Delta$ consist of the remaining axioms (iii), (iv) and (vi),
containing $q$ and~$x$ variables but no $p$ variables.
Let us write $\alpha$ for $\alpha^r_{a,i,j,c}$ and $\tau$
for $\tau^r_{a,i,j,c}$.
Then $\Gamma_{\rest \alpha} = \Gamma_{\rest \tau}$ since
$\alpha$ and $\tau$ are the same on $p$ variables. Let $\alpha'$ and $\tau'$
be respectively $\alpha$ and $\tau$ restricted to $q$ variables.
Then $\Delta_{\rest \alpha'} = \Delta_{\rest \tau'}$
since $\tau' = \alpha' \circ \pi$ where $\pi$
is the $\Delta$-symmetry which swaps vertices $i$ and $j$.
Hence also $\Delta_{\rest \alpha} = \Delta_{\rest \tau}$.

We will show that the conditions of Lemma~\ref{lem:SPR_symmetry_2} are satisfied,
so we can introduce all clauses~$\olnot{\alpha^r_{a,i,j,c}}$
by $\tSPR$ inferences. The first condition was just discussed.
For the second condition, first notice that
$\alpha^r_{a,i,j,c}$ and $\tau^r_{a,i,j,c}$ set the same variables.

Now suppose $r,a,c,i,j$ are such that $r<a<m$, that $r<c<m{-}1$, and that $i,j\in [n]$ are distinct.
Suppose  $r',a',c',i',j'$ satisfy the same conditions, with $r' \le r$.
We want to show that if $\tau := \tau^r_{a,i,j,c}$ and $\alpha := \alpha^{r'}_{a',i',j',c'}$
are not disjoint, then they are contradictory. Notice that showing this
will necessarily use the literals $\olnot{p_{a',i}}$ and~$\olnot{q_{i,c'}}$
in the definition of our assignments, and that it will be enough to show
that~$\alpha$ and~$\tau$ disagree about either which index or which color is assigned to a vertex~$i$.
First suppose~$r'=r$. Assuming $\alpha$ and $\tau$ are not disjoint, we must be in
one of the following four cases.
\begin{enumerate}
\item
$i'=i$. Then $\alpha$ maps vertex $i$ to color $c'>r$ while $\tau$ maps $i$ to color $r$.
\item
$i'=j$. Then $\alpha$ maps index $r<a$ to vertex $j$ while $\tau$ maps index $a$ to $j$.
\item
$j'=i$. Then $\alpha$ maps index $a'>r$ to vertex $i$ while $\tau$ maps index $r$ to $i$.
\item
$j'=j$. Then $\alpha$ maps vertex $j$ to color $r<c$ while $\tau$ maps $j$ to color $c$.
\end{enumerate}
Now suppose $r'<r$. Assuming $\alpha$ and $\tau$ are not disjoint, we have the same cases.
\begin{enumerate}
\item
$i'=i$. Then $\alpha$ maps index $r'<r$ to vertex $i$ while $\tau$ maps index $r$ to $i$.
\item
$i'=j$. Then $\alpha$ maps index $r'<a$ to vertex $j$ while $\tau$ maps index $a$ to $j$.
\item
$j'=i$. Then $\alpha$ maps vertex $i$ to color $r'<r$ while $\tau$ maps $i$ to color $r$.
\item
$j'=j$. Then $\alpha$ maps vertex $j$ to color $r'<c$ while $\tau$ maps $j$ to color $c$.
\end{enumerate}

\noindent
Thus the conditions are met and we can introduce the clauses~$\olnot{\alpha^r_{a,i,j,c}}$, that is,
\[
\olnot{p_{a,j}} \vee \bigvee_{a'\neq a} p_{a',j}
\vee
\olnot{q_{j,r}} \vee \bigvee_{r'\neq r} q_{j,r'}
\vee
\olnot{p_{r,i}} \vee \bigvee_{r'\neq r} p_{r',i}
\vee
\olnot{q_{i,c}} \vee \bigvee_{c'\neq c} q_{i,c'},
\]
for all $r<a<m$, all $r<c<m{-}1$ and all
distinct $i,j \in [n]$.
Now let~$C^r_{a,i,j,c}$ be the clause
\[
\olnot{p_{a,j}} \lor \olnot{q_{j,r}}
\ \lor \olnot{p_{r,i}} \lor \olnot{q_{i,c}}.
\]
We derive this by resolving $\olnot{\alpha^r_{a,i,j,c}}$
with instances of axiom~(ii) to remove the literals $p_{a',j}$ and $p_{r',i}$
and then with
instances of axiom~(iv) to remove the literals $q_{j,r'}$ and~$q_{i,c'}$.
We now want to derive, for each $r$, each $a$ with $r<a<m$ and each
$j \in [n]$, the clause
\begin{equation}\label{eq:clause}
\olnot{p_{a,j}} \lor \bigvee_{c>r} q_{j,c}
\end{equation}
which can be read as ``if $a>r$ goes to $j$, then $j$ goes
to some $c>r$''. Intuitively, this removes indices
and colors $0, \dots, r$ from $\CC_{n,m}$, thus reducing it to
a CNF isomorphic
to~$\CC_{n,m-r-1}$.

Suppose inductively that we have
already derived (\ref{eq:clause}) for all $r'<r$.  In particular
 we have derived~$\olnot{p_{a,j}} \lor \bigvee_{c>r-1} q_{j,c}$,
or for $r=0$ we use the axiom $\bigvee_{c} q_{j,c}$.
We resolve this with the clauses~$C^r_{a,i,j,c}$ for all $c>r$
to get
\begin{equation} \label{eq:pqpq}
\olnot{p_{a,j}} \lor \olnot{q_{j,r}}
\ \lor \olnot{p_{r,i}} \lor {q_{i,r}}.
\end{equation}
By resolving together suitable instances of axioms~(v)
and~(vi) we obtain
\[
\olnot{p_{a,j}} \lor \olnot{p_{r,i}}
 \lor \olnot{q_{j,r}} \lor \olnot{q_{i,r}}
\]
and resolving this with (\ref{eq:pqpq})
removes the variable $q_{i,r}$ to give
$\olnot{p_{a,j}} \lor \olnot{q_{j,r}}
\ \lor \olnot{p_{r,i}}$.
We derive this for every $i$, and
then resolve with the axiom $\bigvee_i p_{r,i}$
to get $\olnot{p_{a,j}} \lor \olnot{q_{j,r}}$, and finally
again with our inductively given clause
$\olnot{p_{a,j}} \lor \bigvee_{c>r-1} q_{j,c}$
to get $\olnot{p_{a,j}} \lor \bigvee_{c>r} q_{j,c}$
as required.
\end{proof}

\IGNORE{

\subsection{Tseitin tautologies}\label{sec:PRtseitin}

The Tseitin tautologies $\TS_G$ are well-studied hard examples
for many proof systems (see~\cite{Tseitin:derivation,Urquhart:hardresolution}).
Let $G$ be an undirected graph with each vertex~$i$ labelled with a
charge~$\gamma(i) \in \{0,1\}$, such that the total charge on $G$ is odd.
For each edge~$e$ of~$G$ there is a variable~$x_e$.
Then $\TS_G$ consists of clauses
expressing in CNF form that for each fixed vertex~$i$, the
parity of the values $x_e$ over the edges $e$ touching~$i$ is equal to the charge~$\gamma(i)$.
This is well-known to be unsatisfiable.

\begin{lem}\label{lem:tseitin_tree}
If $G$ is a tree, then $\TS_G$ is refutable by unit propagation.
\end{lem}

\begin{proof}
Each leaf node has a single edge touching it, so the Tseitin
constraint for the node is a unit clause fixing the value of that edge.
We remove this edge from the tree and continue inductively.
\end{proof}

It is common to take $G$ to have constant degree~$\delta$ so that
$\TS_G$ has size polynomial in the number~$n$ of vertices.
It is also common to take $G$ to be an expander graph
(see~\cite{Urquhart:hardresolution}) as this usually makes
it harder to refute the clauses $\TS_G$.
However, by the next theorem, these conditions
imply that there are polynomial size $\SPRnnv$ refutations of $\TS_G$,
since such a graph has diameter logarithmic in~$n$.

\begin{thm}\label{thm:TseitinPR}
Let $G$ be a graph of diameter $d$. Then there
is an $\SPRnnv$ refutation of $\TS_G$
of size $O(|\TS_G| \cdot 2^{2d})$.
\end{thm}

\begin{proof}
It is possible to remove edges from~$G$ to obtain
a subgraph~$T$ which is a spanning tree of diameter at most~$2d$.
Our goal is to derive the unit clause $\olnot{x_e}$ for every
edge $e$ in $G \setminus T$. Using these, we can derive all clauses in $\TS_T$,
with the same charges as in $G$, and this is then refutable
by unit propagation.

Pick any edge $e = \{i,j\}$ not in $T$. As $i$ and $j$ are nodes in $T$,
there is a path from $i$ to $j$ of length at most $2d$ which passes entirely
through $T$. Let~$K$ be the cycle formed by $e$ together with this path.
Let $\alpha$ be any assignment to all variables in $K$ such that $\alpha(x_e)=1$,
and let $\tau$ be the assignment with the same domain but
assigning opposite values. Then $\tau = \alpha \circ \pi$, where
$\pi$ is the $\TS_G$-symmetry which flips the sign of all variables on the
cycle $K$.
Hence we can apply Lemma~\ref{lem:SPR_symmetry_2} to
introduce all clauses $\olnot{\alpha}$ by $\tSPR$ inferences. There are at most $2^{2d}$ of them,
by the bound on the length of $K$. They have the form
$\olnot{x_e} \lor D$ for every possible choice~$D$ of signs of literals
on~\mbox{$K \setminus \{ e \}$}, so we can resolve them together to derive the
unit clause~$\olnot{x_e}$.
All clauses appearing in this process are subsumed by~$\olnot {x_e}$
so by Lemma~\ref{lem:subsumeNoDelete}(a), in future $\tSPR$ inferences, 
we can treat them as though they were all replaced by $\olnot {x_e}$.

Now pick another edge $f$ in $G \setminus T$, and a similar cycle $K'$ for it.
Then $e$ does not lie on $K'$, so assignments to variables on $K'$
have no effect on the clause $\olnot{x_e}$. Hence we can derive
$\olnot{x_f}$ using $\tSPR$ inferences by the same argument as above.
We repeat this for all edges in $G \setminus T$.  When all these edges
are removed, we have reduced to the tree-like case of Lemma~\ref{lem:tseitin_tree}
\end{proof}

\begin{thm}\label{thm:TseitinSR}
$\TS_G$
has polynomial size $\SRnnv$ refutations, for any graph $G$.
\end{thm}

\begin{proof}
The proof follows the same outline as the proof of
Theorem~\ref{thm:TseitinPR}, but now we do not have
a useful bound on the height of the tree~$T$.
Consider an edge~$e$ in $G \setminus T$ on a cycle $K$.
Let $\alpha$ be the partial assignment which sets $x_e$
to $1$ and leaves all other variables unchanged.
Let $\tau = \alpha \circ \pi$, where $\pi$ is again the $\TS_G$-symmetry
which flips the sign of all variables on~$K$. Notice that $\tau$ is not a partial assignment.
Nevertheless, ${(\TS_G)}_{\rest\alpha} = {(\TS_G)}_{\rest\tau}$ by symmetry
and $\tau \vDash \olnot{x_e}$,
so we can introduce the clause $\olnot{x_e}$ by a $\tSR$ inference.
We carry on as in Theorem~\ref{thm:TseitinPR}.
\end{proof}

We conjecture that Theorem~\ref{thm:TseitinSR} holds also
for $\SPRnnv$.\footnote{Here is an outline of a possible
proof. Let $G$ be an arbitrary graph with (unsatisfiable)
Tseitin principle $\TS_G$. If there is any node
in~$G$ of degree two, there are 2-clauses asserting the
equality or inequality of the variables for the two incident
edges. Resolution inferences allow us to replace all occurrences
of one of variables with the other variable, possibly negated.
This effectively reduces the size of the graph~$G$ by one.
On the other hand, if no vertex has degree two, then there
must be a short (logarithmic length) cycle in~$G$. Arguing as
in the proof of Theorem~\ref{thm:TseitinPR}, we can use $\tSPR$
inferences to set one of the literals in the cycle to zero.
This effectively reduces the number of edges in~$G$ by~1.
Then repeat these constructions until $G$ is trivialized.

We believe this proof outline can work, but it does not fit in
the framework of Lemma~\ref{lem:SPR_symmetry_2}.}

}   


\subsection{Tseitin tautologies}\label{sec:PRtseitinAgain}

The \emph{Tseitin tautologies} $\TS_{G,\gamma}$ are well-studied hard examples
for many proof systems (see~\cite{Tseitin:derivation,Urquhart:hardresolution}).
Let $G$ be an undirected graph with $n$ vertices, with each vertex~$i$ labelled with a
charge~$\gamma(i) \in \{0,1\}$ such that the total charge on $G$ is odd.
For each edge~$e$ of~$G$ there is a variable~$x_e$.
Then $\TS_{G,\gamma}$ consists of clauses
expressing that, for each vertex~$i$, the
parity of the values~$x_e$ over the edges $e$ touching~$i$ is equal to the charge~$\gamma(i)$.
For a vertex $i$ of degree $d$, this requires~$2^{d-1}$ clauses,
using one clause to
rule out each assignment to the edges touching $i$ with the wrong parity.
 If $G$ has constant
degree then this has size polynomial in $n$, but in general the size
may be exponential in $n$.
It is well-known to be unsatisfiable.

The next lemma is a basic property of Tseitin contradictions.
Note that it does not depend on $\gamma$. By cycle
we mean a simple cycle, with no repeated vertices.

\begin{lem}\label{lem:Tseitin_symmetry}
Let $K$ be any cycle in $G$. Then the substitution $\pi_K$ which flips
the sign of every literal on $K$ is a $\TS_{G, \gamma}$-symmetry.
\end{lem}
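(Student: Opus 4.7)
The plan is to verify the two defining conditions of a $\Gamma$-symmetry: invertibility of $\pi_K$, and $(\TS_{G,\gamma})_{\rest \pi_K} = \TS_{G,\gamma}$. The first is immediate, since $\pi_K$ is an involution: applying it twice flips each cycle variable twice, returning it to itself. The bulk of the work is the second condition, and the argument is local to each vertex.

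Group the clauses of $\TS_{G,\gamma}$ as $\TS_{G,\gamma} = \bigcup_{i} \Delta_i$, where $\Delta_i$ is the set of clauses expressing the parity constraint $\bigoplus_{e \ni i} x_e = \gamma(i)$ at vertex $i$. It suffices to show $(\Delta_i)_{\rest \pi_K} = \Delta_i$ for every vertex $i$, since $\pi_K$ acts only on the variables $x_e$ for $e$ on $K$ and these are partitioned among the $\Delta_i$ via the edge–vertex incidence.

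Consider a fixed vertex $i$. If $i$ does not lie on $K$, then no edge incident to $i$ is on $K$, so $\pi_K$ is the identity on the variables appearing in $\Delta_i$, and $(\Delta_i)_{\rest \pi_K} = \Delta_i$ trivially. If $i$ lies on $K$, then because $K$ is a simple cycle, exactly two edges $e_1, e_2$ of $K$ are incident to $i$; the substitution $\pi_K$ flips $x_{e_1}$ and $x_{e_2}$ and fixes all other variables of $\Delta_i$. Now $\Delta_i$ consists precisely of the clauses ruling out each total assignment to the edges at $i$ whose XOR differs from $\gamma(i)$. Flipping exactly two of those edge variables preserves the XOR, so an assignment violates the parity iff its image under the flip does. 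Thus the map sending each clause of $\Delta_i$ to the corresponding clause with $x_{e_1}$ and $x_{e_2}$ complemented is a bijection from $\Delta_i$ to itself, which is exactly the action of restriction by $\pi_K$. Hence $(\Delta_i)_{\rest \pi_K} = \Delta_i$.

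The main (mild) obstacle is just making the bookkeeping of the clause-by-clause bijection precise; the graph-theoretic content is the single fact that a simple cycle meets every vertex in an even number (either $0$ or $2$) of incident edges, which is why the parity constraint at each vertex is preserved. Taking the union over all $i$ then gives $(\TS_{G,\gamma})_{\rest \pi_K} = \TS_{G,\gamma}$, completing the proof.
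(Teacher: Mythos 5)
Your proof is correct. Note, however, that the paper does not actually supply a proof for Lemma~\ref{lem:Tseitin_symmetry} --- it is stated without argument, evidently treated as a standard observation about Tseitin contradictions. Your vertex-by-vertex decomposition, reducing the global claim $(\TS_{G,\gamma})_{\rest\pi_K} = \TS_{G,\gamma}$ to the local claim $(\Delta_i)_{\rest\pi_K} = \Delta_i$ and then invoking the fact that a simple cycle touches each vertex in exactly $0$ or $2$ edges (so flipping preserves the parity constraint at that vertex), is the natural way to fill in the missing argument. One small point worth making explicit: since $\pi_K$ maps each variable to a literal (never a constant) and never identifies two literals of the same clause, $\pi_K$ satisfies no clause of $\TS_{G,\gamma}$, so the restriction $C_{\rest\pi_K}$ is simply $\pi_K(C)$ for every $C$; this is what lets your clause-by-clause bijection coincide exactly with the action of restriction.
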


\begin{lem}\label{lem:short_cycle}
If every node in $G$ has degree at least $3$, then $G$ contains
a cycle of length at most~$2 \log n$.
\end{lem}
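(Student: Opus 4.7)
The plan is a standard BFS-and-counting argument in the style of the Moore bound. Pick any vertex $v$ of $G$ and run breadth-first search from $v$, obtaining layers $L_0 = \{v\}, L_1, L_2, \ldots$, where $L_i$ is the set of vertices at graph distance exactly~$i$ from~$v$, along with a BFS tree~$T$. I will argue that if there is no cycle of length at most $2d$, then $T$ must branch rapidly through depth~$d$, which forces more than~$n$ vertices into the ball of radius~$d$ around~$v$.

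The key structural observation is that if $G$ has no cycle of length at most~$2d$, then every vertex~$u$ at depth~$k$ in~$T$ with $k < d$ has all of its $\ge 3$ neighbours in~$G$ either equal to its parent in~$T$ (if $u \ne v$) or equal to a child of~$u$ in~$T$. For if $u$ had a neighbour~$w$ at depth~$\le k$ other than its parent, then the edge~$\{u,w\}$ together with the BFS paths from~$u$ and~$w$ up to their lowest common ancestor gives a cycle of length at most $1 + 2k \le 2d - 1$; and if~$u$ had a neighbour~$w$ at depth~$k{+}1$ which is a child in~$T$ of some other depth-$k$ vertex~$u'$, then the edge $\{u,w\}$ together with the tree path $w, u', \ldots, v, \ldots, u$ gives a cycle of length at most $2 + 2k \le 2d$.

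Given this, the root~$v$ has at least $3$ children in~$T$ and every vertex at depth $1 \le k < d$ has at least $2$ children, so
\[
\sum_{i=0}^{d} |L_i| ~\ge~ 1 + 3(1 + 2 + \cdots + 2^{d-1}) ~=~ 3 \cdot 2^d - 2.
\]
Taking $d = \lfloor \log_2 n \rfloor$ gives $3 \cdot 2^d - 2 \ge \tfrac{3}{2} n - 2 > n$ once $n \ge 5$, which contradicts $|V(G)| = n$. Hence $G$ contains a cycle of length at most $2 \lfloor \log_2 n \rfloor \le 2 \log n$, with the handful of small cases dispatched by direct inspection.

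The main potential nuisance is only calibrating the constants inside the logarithm; conceptually the argument is entirely routine and uses nothing beyond the minimum-degree-$3$ hypothesis and a Moore-style count, with no features specific to Tseitin graphs.
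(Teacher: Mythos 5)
Your proof is correct and takes essentially the same approach as the paper's: both are Moore-bound ball-counting arguments, picking a root, observing that the minimum-degree-$3$ hypothesis forces the BFS tree to branch with factor at least $2$ below the root in the absence of a short cycle, and concluding that the ball of radius $\lfloor \log_2 n \rfloor$ would otherwise contain more than $n$ vertices. The paper phrases this more tersely (asserting that the radius-$\log n$ ball cannot be a tree and hence some vertex is reachable from the root in two ways), whereas you spell out the BFS-tree case analysis and the explicit $3 \cdot 2^d - 2$ count, which is a bit more careful about constants but is the same idea.
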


\begin{proof}
Pick any vertex $i$ and let $H$ be the subgraph consisting of all
vertices reachable from~$i$ in at most $\log n$ steps.
Then $H$ cannot be a tree, as otherwise by the assumption on degree it
would contain more than $n$ vertices. Hence it must contain
some vertex reachable from $i$ in two different ways.
\end{proof}

\begin{thm}\label{thm:TS_SPRproofs}
The $\TS_{G, \gamma}$ clauses have polynomial size $\SPRnnv$ refutations.
\end{thm}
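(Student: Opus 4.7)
The plan is to prove the theorem by induction on the number of edges of $G$, using a case analysis on the minimum degree. One may assume that $G$ is connected.

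First, the easy cases. If $G$ has a vertex $v$ of degree one with incident edge $e$, the Tseitin clause at $v$ is already the unit $\pm x_e$; resolving this with the Tseitin clauses at the other endpoint of $e$ yields $\TS_{G-v,\gamma'}$ for a suitably adjusted charge function~$\gamma'$. If $G$ has a vertex $v$ of degree two with incident edges $e_1,e_2$, the two Tseitin clauses at $v$ encode $x_{e_1}\leftrightarrow\pm x_{e_2}$; resolving these $2$-clauses with each Tseitin clause $D\vee\pm x_{e_2}$ at the far endpoint of $e_2$ derives the corresponding clause $D\vee\pm x_{e_1}$, and these together with the unchanged clauses elsewhere constitute $\TS_{G/e_2,\gamma''}$, where $G/e_2$ is $G$ with $e_2$ contracted into $e_1$. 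In either case, apply the inductive hypothesis to the smaller instance.

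Otherwise, $G$ has minimum degree at least three, and by Lemma~\ref{lem:short_cycle} it contains a cycle $K$ of length at most $2\log n$. Fix any edge $e$ on $K$. For each total assignment $\alpha$ to the variables on $K$ with $\alpha(x_e)=\tTrue$, let $\tau=\alpha\circ\pi_K$, where $\pi_K$ is the $\TS_{G,\gamma}$-symmetry of Lemma~\ref{lem:Tseitin_symmetry}. The conditions of Lemma~\ref{lem:SPR_symmetry_2} are then satisfied, so all $O(n^2)$ clauses $\olnot{\alpha}$ can be introduced by $\tSPR$ inferences. Resolving these together produces the unit $\olnot{x_e}$, from which, by resolving with the original Tseitin clauses containing $x_e$, one obtains $\TS_{G-e,\gamma}$. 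Apply the inductive hypothesis.

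After each reduction the current clause set $\Gamma$ contains, besides the smaller Tseitin instance, \emph{garbage} clauses---the original clauses involving the eliminated edge variable together with any intermediate clauses derived during the reduction. Because we have no deletion, I need to verify that the inductive $\tSPR$ inferences for the smaller instance remain valid in $\Gamma$. Lemma~\ref{lem:subsumeNoDelete}(a) handles the minimum-degree-three case cleanly: the reduced set $\TS_{G-e,\gamma}\cup\{\olnot{x_e}\}$ subsumes every garbage clause, since each such clause either contains $\olnot{x_e}$ (the intermediate cycle clauses, and the original Tseitin clauses at the endpoints of $e$ of the form $\olnot{x_e}\vee D$) or is a superset of a Tseitin clause of $G-e$ (the original Tseitin clauses of the form $x_e\vee D$). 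The main obstacle is the degree-two case, where the subsumption relation does not hold directly; instead, the relation $\Gamma_{\rest\alpha}\vdash_1\Gamma_{\rest\tau}$ must be verified by hand for each inductive $\tSPR$ inference, exploiting the fact that the eliminated variable $x_{e_2}$ does not appear in any inductive $\alpha$ or $\tau$ and that the persistent $2$-clauses at $v$ allow unit propagation to translate between occurrences of $x_{e_1}$ and $x_{e_2}$.
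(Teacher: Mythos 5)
Your overall plan matches the paper closely: induction on the number of edges, handling degree-$1$ and degree-$2$ vertices by local reductions, and using the cycle-flipping symmetry $\pi_K$ together with Lemma~\ref{lem:short_cycle} and Lemma~\ref{lem:SPR_symmetry_2} when the minimum degree is at least~$3$. However, the degree-two step via edge contraction creates a gap that your sketch does not actually close, and I don't believe it can be closed as stated. After you contract $e_2$ into $e_1$, the clause set still contains (with no deletion) the two-clauses at $v$ expressing $x_{e_1}\oplus x_{e_2}=\gamma(v)$, plus the old Tseitin clauses at $w$ that mention $x_{e_2}$. Now consider a later inductive $\tSPR$ inference whose cycle $K'$ in the contracted graph passes through the surviving variable $x_{e_1}$, so that $\alpha(x_{e_1})\neq\tau(x_{e_1})$. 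Restricting the two-clauses at $v$ by $\alpha$ unit-propagates $x_{e_2}$ to one value, while restricting them by $\tau$ unit-propagates $x_{e_2}$ to the \emph{opposite} value. Thus $\Gamma_{\rest\tau}$ contains a unit clause on $x_{e_2}$ that directly contradicts a unit clause of $\Gamma_{\rest\alpha}$, and showing $\Gamma_{\rest\alpha}\vdash_1\Gamma_{\rest\tau}$ would require $\Gamma_{\rest\alpha}$ itself to be unit-refutable -- which it is not in general, since Tseitin formulas (restricted on a single cycle) remain far from unit-refutable. So the ``persistent $2$-clauses at $v$ allow unit propagation to translate'' idea fails precisely because the translation carries $\alpha$ and $\tau$ to \emph{opposite} values of $x_{e_2}$, and $\tau$'s value cannot be reached from $\Gamma_{\rest\alpha}$. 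The subsumption argument (Lemma~\ref{lem:subsumeNoDelete}(a)) also does not apply, as you note.

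The paper avoids this trap by never contracting edges. In the degree-two case it redefines a \emph{path} as a maximal chain through degree-$2$ vertices, finds a short cycle $K$ made of at most $2\log n$ such paths, and takes $\alpha$ to assign \emph{every} variable on $K$ -- including the variables you would have eliminated. Along each path there are only two assignments that do not immediately falsify a Tseitin axiom, so there are still only $2^{O(\log n)}$ choices of $\alpha$; the symmetry $\pi_K$ flips all of them coherently, so $\Gamma_{\rest\alpha}=\Gamma_{\rest\tau}$ holds for the whole garbage-laden set including old Tseitin clauses and old unit clauses $\ell_j$ (which lie off $K$). After resolving the introduced clauses down to a single unit clause $\olnot{x_e}$, the intermediate clauses are subsumed by $\olnot{x_e}$, and Lemma~\ref{lem:subsumeNoDelete}(a) then keeps the induction clean. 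If you want to salvage your approach, the essential change is to make the $\tSPR$ assignments include all the eliminated variables along degree-$2$ chains, rather than trying to resolve them away and hoping to reconstruct them by unit propagation afterwards.
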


\begin{proof}
We will construct a sequence of triples
$(G_0, \gamma_0, \ell_0), \dots, (G_m, \gamma_m, \ell_m)$
where $(G_0, \gamma_0)$ is $(G, \gamma)$, each $G_{i+1}$ is a subgraph of $G_i$
formed by deleting one edge and removing any isolated vertices,
$\gamma_i$ is an odd assignment of charges to $G_i$,
and $\ell_i$ is a literal corresponding to an edge in $G_{i}\setminus G_{i+1}$.
Let
\[
\Gamma_i = \TS_{G_0, \gamma_0} \cup \{\ell_0\}
\cup \dots \cup
 \TS_{G_i, \gamma_i} \cup \{\ell_i\}.
\]
As we go we will construct an $\SPRnnv$ derivation containing sets of
clauses $\Gamma'_i$ extending and subsumed by $\Gamma_i$, and we will eventually
reach a stage $m$ where $\Gamma_m$ is trivially refutable.
The values of $\ell_i$, $G_{i+1}$ and $\gamma_{i+1}$ are defined from
$G_i$ and $\gamma_i$ according to the next three cases.

\emph{Case 1:} $G_i$ contains a vertex $j$ of degree 1.
Let $\{j,k\}$ be the edge touching~$j$.
If~$k$ has degree~2 or more,
we define $(G_{i+1}, \gamma_{i+1})$
by letting $G_{i+1}$ be $G_i$ with edge $\{j,k\}$ and vertex $j$
removed, and letting
$\gamma_{i+1}$ be $\gamma_i$ restricted to $G_{i+1}$ and with
$\gamma_{i+1}(k) = \gamma_i(k) + \gamma_i(j)$.
If $k$ has degree $1$
and the same charge as~$j$,
then we let $G_{i+1}$ be $G_i$ with both $j$ and $k$ removed
(with unchanged charges).
In both cases, every clause
in $\TS_{G_{i+1}, \gamma_{i+1}}$ is derivable from $\TS_{G_i, \gamma_i}$
by a $\vdash_1$ step, as the Tseitin condition on $j$
in $\TS_{G_i, \gamma_i}$ is a unit clause; we set $\ell_i$
to be the literal contained in this clause.
If $k$ has degree $1$ and opposite charge from $j$, then
we can already derive a contradiction from
$\TS_{G_i, \gamma_i}$ by one $\vdash_1$ step.

\emph{Case 2:} $G_i$ contains no vertices of degree 1 or 2.
Apply Lemma~\ref{lem:short_cycle} to find a cycle $K$ in $G_i$
of length at most $2 \log n$ and let $e$ be the first edge in $K$.
Our goal is to derive the unit clause $\olnot x_e$
and remove $e$ from $G_i$.

Let $\alpha$ be any assignment to the variables on $K$
which sets $x_e$ to 1, and let $\tau$ be the opposite assignment.
Using Lemma~\ref{lem:Tseitin_symmetry} applied simultaneously to all
graphs $G_0, \dots, G_i$ we have
${(\Gamma_i)}_{\rest\alpha} = {(\Gamma_i)}_{\rest\tau}$,
as the unit clauses $\ell_i$ are unaffected by these restrictions.
Hence by Lemma~\ref{lem:SPR_symmetry_2},
$\SPRnnv$ inferences can be used to
introduce all clauses $\olnot{\alpha}$, of which there are at most
$2^{2 \log n -1}$. We resolve them all together
to get the unit clause $\olnot x_e$.
This subsumes all other clauses introduced
so far in this step;
we set $\ell_i$ to be~$\olnot{x_e}$, and
by Lemma~\ref{lem:subsumeNoDelete}(a), we may 
ignore these subsumed clauses in future inferences. (Therefore we
avoid needing the deletion rule.)
We define $(G_{i+1}, \gamma_{i+1})$ by
deleting edge~$e$ from~$G_i$ and leaving $\gamma_i$ unchanged.
All clauses in $\TS_{G_{i+1}, \gamma_{i+1}}$ can now be derived
from $\TS_{G_i, \gamma_i}$ and $\olnot x_e$ by single $\vdash_1$ steps.

\emph{Case 3:} $G_i$ contains no vertices of degree~1, but
may contain vertices of degree~2. We will adapt the argument of case~2.
Redefine a \emph{path} to be a sequence of edges connected by
degree-2 vertices. By temporarily replacing paths in $G_i$ with edges,
we can apply Lemma~\ref{lem:short_cycle} to find a cycle $K$
in $G_i$ consisting of edge-disjoint paths $p_1, \dots, p_m$ where
$m \le 2 \log n$. Let $x_j$ be the variable associated with the first edge in $p_j$. For each~$j$,
there are precisely two assignments to the variables in $p_j$ which do not
immediately falsify some axiom of $\TS_{G_i, \gamma_i}$.
Let $\alpha$ be a partial assignment which picks one of these two
assignments for each $p_j$, and such that $\alpha(x_1)=1$.
As in case~2, $\SPRnnv$ inferences can be
used to introduce $\olnot{\alpha}$ for each $\alpha$ of this form.

Let us look at the part of $\olnot{\alpha}$ consisting of literals from path
$p_j$. This has the form $z^j_1 \lor \dots \lor z^j_r$, where
$z^j_1$ is $x_j$ with positive or negative sign and for each $k$, by the choice of $\alpha$, there are Tseitin axioms
expressing that $z^j_k$ and $z^j_{k+1}$ have the same value.
Hence if we set $z^j_1 =0$ we can set all literals in this clause
to $0$ by unit propagation.
Applying the same argument to all parts of $\alpha$
shows that
we can derive $z^1_1 \lor \dots \lor z^m_1$ from $\olnot \alpha$
and $\TS_{G_i, \gamma_i}$ with a single $\vdash_1$ step.
We introduce all $2^{m-1}$ such clauses, one for each $\alpha$,
all with $z^1_1 = \olnot{x_1}$.
We resolve them together to get the unit clause $\olnot{x_1}$,
then proceed as in case~2.

For the size bound, each case above requires us to derive at most
$n \! \cdot \! |\TS_{G,\gamma}|$ clauses, and the refutation can take at most $n$
steps.
\end{proof}

\subsection{Or-ification and xor-ification}\label{sec:PRorXor}

Orification and xorification have been widely used to make
hard instances of propositional tautologies,
see~\cite{BIW:nearoptimal,BenSasson:sizespace,Urquhart:regularresolution}.
This and the next section discuss how $\SPRnnv$ inferences can be used to ``undo'' the effects
of orification, xorification, and lifting without using
any new variables. As a consequence, these techniques are not likely to
be helpful in establishing lower bounds for the size of $\PRnnv$ refutations.

Typically, one ``orifies'' many variables at once; however, for the
purposes of this paper, we describe orification of a single variable.
Let $\Gamma$ be a set of clauses, and $x$~a variable.  For the $m$-fold
\emph{orification} of $x$, we introduce new variables $x_1,\ldots, x_m$,
with the intent of replacing $x$ with~$x_1\lor x_2\lor \cdots \lor x_m$.
Specifically, each clause $x\lor C$ in~$\Gamma$ is replaced with
$x_1\lor \cdots \lor x_m \lor C$, and each clause $\olnot x \lor C$
is replaced with the $m$-many clauses
$\olnot{x_j}\lor C$. Let $\Gamma^{\lor}$ denote the results of
this orification of~$x$. We claim that $\SPRnnv$ inferences may be used
to derive $\Gamma$ (with $x$ renamed to~$x_1$) from~$\Gamma^{\lor}$,
undoing the orification, as follows.
We first use $\SPRnnv$ inferences to derive each clause
$x_1 \lor \olnot{x_j}$ for $j>1$. This is done using Lemma~\ref{lem:SPR_symmetry_2},
with
$\alpha_j$ setting $x_1$ to $0$ and $x_j$ to $1$,
and $\tau_j$ setting $x_1$ to $1$ and $x_j$ to $0$,
so that $\tau_j$ is $\alpha_j$ with $x_1$ and $x_j$ swapped.
Thus any clause $x_1\lor \cdots \lor x_m \lor C$ in $\Gamma^{\lor}$
can be resolved with these to yield $x_1 \lor C$, and for  clauses
$\olnot{x_1} \lor C$ in $\Gamma^{\lor}$ we do not need to change anything.

Xorification of~$x$ is a similar construction, but now
we introduce $m$ new variables with the intent of letting
$x$ be expressed by $x_1 \oplus x_2 \oplus \cdots \oplus x_m$.
Each clause $x \lor C$ in~$\Gamma$
(respectively, $\olnot x \lor C$ in~$\Gamma$) is replaced by $2^{m-1}$ many
clauses $x_1^\sigma\lor x_2^\sigma\lor \cdots \lor x_m^\sigma \lor C$
where $\sigma$ is a partial assignment setting an odd number
(respectively, an even number) of the variables $x_j$ to~$\tTrue$.
To undo the xorification it is enough to derive the
unit clauses $\olnot {x_j}$ for $j>1$.
So for each $j>1$,  we first use  Lemma~\ref{lem:SPR_symmetry_2} to introduce
the clause~$x_1 \lor \olnot x_j$,
using the same partial assignments as in the previous paragraph,
and the clause~$\olnot{x_1} \lor \olnot{x_j}$,
using assignments
$\alpha_j$ setting $x_1$ and $x_j$ both to $1$,
and $\tau_j$ setting $x_1$ and $x_j$ both to $0$,
so that~$\tau_j$ is~$\alpha_j$ with the signs of both $x_1$ and $x_j$ flipped.
Resolving these gives~$\olnot x_j$.
This subsumes $x_1 \lor \olnot x_j$ and $\olnot{x_1} \lor \olnot{x_j}$,
so  by Lemma~\ref{lem:subsumeNoDelete}(a), 
we may ignore these two clauses in later $\SPRnnv$ steps, and can thus use
the same argument to derive the clauses $\olnot x_i$ for~\mbox{$i \neq j$}, since $\alpha_{i}$ and $\tau_{i}$ do
not affect the clause~$\olnot x_j$.

\subsection{Lifting}

Lifting is a technique for leveraging lower bounds on decision trees to obtain
lower bounds in stronger computational models,
see~\cite{RazMcKenzie:SeparationNC,BHP:HardnessAmplification,HuynhNordstrom:Amplifying}.

The most common form of lifting is the ``indexing gadget'' where
a single variable $x$ is replaced by $\ell + 2^\ell$
new variables $y_1,\ldots,y_\ell$
and $z_0,\ldots, z_{2^\ell-1}$. The intent is that
the variables $y_1,\ldots,y_\ell$ specify an integer~$i \in [2^\ell]$,
and $z_i$ gives the value of~$x$.
As in Section~\ref{sec:BPHP},
we write $(\vec y \pigeonto i)$ for the conjunction $\bigwedge_j (y_j = i_j)$
where $i_j$ is the $j$-th bit of $i$,
and write $(\vec y \notpigeonto i)$ for its negation $\bigvee_j (y_j \neq\penalty10000 i_j)$.
Thus, $x$~is equivalent to the CNF formula
$\bigwedge_{i \in [2^\ell]} \left( (\vec y \notpigeonto i) \lor z_i \right)$,
and $\olnot x$ is equivalent to the CNF
formula $\bigwedge_{i \in [2^\ell]} \left( (\vec y \notpigeonto i) \lor \olnot{z_i} \right)$.

Let $\Gamma$ is a set of clauses with an $\SPRnnv$ refutation.
The indexing gadget applied to~$\Gamma$
on the variable~$x$ does the following to modify $\Gamma$ to produce
set of lifted clauses~$\Gamma^\prime$:
Each clause $x \dotlor C$ containing~$x$ is replaced
by the $2^\ell$ clauses $(\vec y \notpigeonto i) \lor z_i \lor C$ for
$i\in[2^\ell]$, and each clause
$\olnot x \dotlor C$ containing~$\olnot x$ is replaced
by the $2^\ell$ clauses $(\vec y \notpigeonto i) \lor \olnot{z_i} \lor C$.

For all $i\not=0$ and all $a,b\in \{ 0, 1 \}$,  let $\alpha_{i,a,b}$ and $\tau_{i,a,b}$ be the partial
assignments
\begin{align*}
\alpha_{i,a,b} & := \, ( \vec y \pigeonto i ) \wedge \, z_0 = a \, \wedge \, z_i = b \\
\tau_{i,a,b} & := \, ( \vec y \pigeonto 0 ) \wedge \, z_0 = b \, \wedge \, z_i = a.
\end{align*}
Since $i \not= 0$ always holds, it is immediate
that conditions~(2) and~(3) of Lemma~\ref{lem:SPR_symmetry_2} hold.
For condition~(1), observe that the set of clauses
$\{ (\vec y \notpigeonto j) \lor z_j \lor C : j \in[2^\ell] \}$,
restricted by $( \vec y \pigeonto i )$, becomes the single clause
$z_i \lor C$, and restricted by $( \vec y \pigeonto 0)$
becomes $z_0 \lor C$.
In this way $\Gamma'_{\rest\alpha_{i,a,b}} = \Gamma'_{\rest\tau_{i,a,b}}$
and condition~1.\ also holds.
Therefore by Lemma~\ref{lem:SPR_symmetry_2}, $\SPRnnv$
inferences can be used to derive all clauses
$\olnot {\alpha_{i,a,b}}$, namely all the clauses
$(\vec y \notpigeonto i) \lor z_0 \! \neq \! a \lor z_i \! \neq \! b$.
For each fixed $i \neq 0$ this is four clauses, which can be resolved together
to give the clause $(\vec y \notpigeonto i)$.
Then from these $2^\ell-1$ clauses we can obtain by resolution
each unit clause~$y_j$ for $j=1, \ldots, \ell$.
Finally, using unit propagation with these, we derive the
clauses~$z_0 \dotlor C$ and $\olnot{z_0} \dotlor C$
for all original clauses $x \dotlor C$ and $\olnot x \dotlor C$ in~$\Gamma$.
We have thus derived from~$\Gamma^\prime$, using $\SPRnnv$ and resolution
inferences, a copy~$\Gamma^\pprime$ of all the clauses
in~$\Gamma$, except with~$x$ replaced with~$z_0$. The other clauses in
in~$\Gamma^\prime$ or that were inferred during the process of
deriving~$\Gamma^\pprime$ are subsumed
by either the unit clauses $y_j$ or the clauses in~$\Gamma^\pprime$.
Thus applying part~(b) and then part~(a)\ of Lemma~\ref{lem:subsumeNoDelete},
they do not interfere with
future $\SPRnnv$ inferences refuting~$\Gamma^\pprime$.


\section{Lower bounds}\label{sec:Lowerbounds}

This section gives an exponential separation between
$\DRATnnv$ and $\RATnnv$,
by showing that the bit pigeonhole principle $\BPHP_n$ requires
exponential size refutations in $\RATnnv$. This lower bound
still holds if we allow some deletions, as long as no initial
clause of $\BPHP_n$ is deleted.
On the other hand, with unrestricted deletions,
it follows from Theorems~\ref{thm:BCsimRATnoNew},~\ref{thm:DRATnnvDPRnnv} and~\ref{thm:SPR_BPHP}
in this paper that it has polynomial size refutations
in $\DRATnnv$ and even in $\DBCnnv$, as well as in $\SPRnnv$.

Kullmann~\cite{Kullmann:GeneralizationER} has already
proved related separations for generalized
extended resolution (GER), which lies somewhere between $\tDBC$
and $\tBC$ in strength.
That work shows separations between various subsystems of GER,
and in particular gives an exponential lower bound
on proofs of $\PHP_n$ in the system GER with no new
variables, by analyzing which clauses are blocked with
respect to $\PHP_n$.

We define the \emph{pigeon-width} of a clause or assignment
to equal the number of distinct pigeons that it mentions.
Our size lower bound for $\BPHP_n$ uses a conventional strategy:
we first show a width lower bound (on pigeon-width),
and then use a random restriction
to show that a proof of subexponential size can be made
into one of small pigeon-width. We do not aim for optimal constants.

We have to be careful about one technical point in the second step,
which is that $\RATnnv$ refutation size
does not in general behave well under restrictions,
as discussed in Section~\ref{sec:ER_nnv}.
So, rather than using restrictions as such to reduce width, we will define
a partial random matching~$\rho$ of pigeons to holes and show that if $\BPHP_n$ has a $\RATnnv$ refutation
of small size, then $\BPHP_n \cup \rho$ has one of small pigeon-width.

A useful tool in analyzing resolution derivations from a
set of clauses $\Gamma$ is the
\emph{Prover-Adversary game} on~$\Gamma$ (see e.g.~\cite{Pudlak:ProofsAsGames, atserias2008combinatorial}). 
In the game, the Adversary claims to know a
satisfying assignment for $\Gamma$,
and the Prover tries to force her into a contradiction by querying
the values of variables; the Prover can also forget variable assignments to
save memory and simplify his strategy.
A \emph{position} in the game is a partial assignment
$\alpha$ recording the contents of the Prover's memory.
To fully specify the game we also need to specify the
starting position.

As the next lemma shows, a strategy for the Prover in this game
is essentially the same thing as a resolution derivation.
But it is more intuitive to describe a strategy than a
derivation,
and the game also gives a natural way to show width lower bounds.

\begin{lem}\label{lem:Prover_Adversary}
Consider a restriction of the Prover-Adversary game on
a set of clauses~$\Gamma$ starting 
from position $\olnot C$ in which the Prover's
memory can hold information about at most~$m$ pigeons
simultaneously.
If the Prover has a winning strategy in this game,
then~$C$ is derivable from $\Gamma$ in pigeon-width $m$.
If the Adversary has a winning strategy, then
$C$ is not derivable from $\Gamma$ in pigeon-width $m-1$.
\end{lem}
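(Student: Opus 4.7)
The proof is a standard Prover-Adversary to resolution-width translation in the style of Pudl\'ak's ``Proofs as games''~\cite{Pudlak:ProofsAsGames} and Atserias-Dalmau~\cite{atserias2008combinatorial}, adapted to replace ordinary width with pigeon-width throughout. The plan is to handle the two directions separately, by reading a Prover winning strategy as a resolution derivation and, conversely, a derivation as a Prover strategy.

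For the forward direction, I would fix a Prover winning strategy~$S$ from the starting position~$\olnot C$, regarded as a tree whose internal nodes are query or forget moves and whose leaves are terminal winning positions. The plan is to assign to each position $\alpha$ in~$S$, by induction on the subtree below it, a clause $D_\alpha$ derivable from $\Gamma$ with $D_\alpha \subseteq \olnot \alpha$. At a winning leaf, some axiom $D \in \Gamma$ is falsified by~$\alpha$, which means $D \subseteq \olnot \alpha$, so take $D_\alpha := D$. At a forget step from~$\alpha$ to $\alpha' \subseteq \alpha$, reuse $D_\alpha := D_{\alpha'}$, which still lies in~$\olnot \alpha$. At a query step on a variable~$x$, the two children are $\alpha_+ = \alpha \cup \{x\}$ and $\alpha_- = \alpha \cup \{\olnot x\}$; the inductive hypothesis gives $D_{\alpha_+} \subseteq \olnot \alpha \cup \{\olnot x\}$ and $D_{\alpha_-} \subseteq \olnot \alpha \cup \{x\}$, so a single resolution on~$x$ (or direct reuse of one child if the new literal does not actually appear) produces $D_\alpha \subseteq \olnot \alpha$. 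Reading the recursion bottom-up yields a resolution derivation of $D_{\olnot C} \subseteq C$ from~$\Gamma$, and every $D_\alpha$ has pigeon-width bounded by that of~$\alpha$, hence by~$m$.

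For the second direction I would argue contrapositively. Suppose $\Pi\colon C_1, \dots, C_n = C$ is a resolution derivation from $\Gamma$ with every $C_i$ of pigeon-width at most~$m-1$, and construct a Prover strategy with memory~$m$ maintaining the invariant that the current position $\alpha$ falsifies some ``target'' clause $C_i$ of~$\Pi$. Initially $\alpha = \olnot C$ falsifies $C_n = C$. If $C_i$ is an axiom then Prover wins; otherwise $C_i$ was obtained by resolution on some variable~$x$ from $C_j = x \lor C_j'$ and $C_k = \olnot x \lor C_k'$, and Prover first forgets down to the canonical position $\olnot C_i$ (pigeon-width $\leq m-1$) and then queries~$x$. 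On either Adversary answer, the resulting position falsifies $C_j$ or $C_k$ respectively, and Prover forgets down to the new target, again of pigeon-width $\leq m-1$. The one delicate point is that during the query the position momentarily contains an extra literal, and if the pigeon of~$x$ is not already present this raises the pigeon-width by one --- this is precisely why memory~$m$ (rather than~$m-1$) suffices, and accounts for the asymmetry in the lemma statement. I expect this pigeon-width bookkeeping, in the worst case where each query introduces a fresh pigeon, to be the only subtlety.
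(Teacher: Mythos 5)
Your proposal is correct and follows essentially the same route as the paper: reading a Prover strategy tree as a resolution derivation for one direction, and turning a low-pigeon-width derivation into a Prover strategy (tracking a falsified target clause) for the other, with the same observation that a query step may temporarily raise the pigeon-width by one, accounting for the $m$ versus $m-1$ asymmetry. Your forward direction is a touch more careful than the paper's—maintaining $D_\alpha \subseteq \olnot\alpha$ rather than invoking weakening—but this is a cosmetic difference, not a different argument.
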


\begin{proof}
We can think of a winning strategy for the Prover
as a tree in which the nodes are labelled
with a partial assignment (the position~$\alpha$) and with the
Prover's action in that situation, that is:
query a variable, forget
a variable, or declare victory because~$\alpha$ falsifies
a clause from~$\Gamma$. We can make this into a resolution refutation by
replacing each label~$\alpha$ with the clause negating it
and interpreting the three actions as respectively
resolution, weakening, and deriving a clause from an
axiom by weakening.

For the other direction, it is enough to
construct a winning strategy for the Prover from a
derivation of pigeon-width $m-1$. This is the reverse
of the process described above, except that we need
to be careful with the resolution rule. Suppose
an instance of the rule is: from $p \vee D$ and
 $\olnot p \vee E$ derive~$D \vee E$, where each
 clause mentions at most $m-1$ pigeons.
In the Prover-strategy, this becomes:
from position~$\olnot{D} \cup \olnot E$, query $p$. If it
is false, forget some variables from $\olnot E$ to reach
position~$\olnot{p} \cup \olnot{D}$. If it is true, forget
some variables from $\olnot D$
to reach position $p \cup \olnot E$. Thus the
Prover may be in a position mentioning $m$ pigeons
immediately after $p$ is queried.
\end{proof}

\begin{lem}\label{lem:adversary_width}
Let $\beta$ be a partial assignment corresponding to
a partial matching of $m$ pigeons to holes.
Then $\BPHP_n \cup \beta$ requires pigeon-width $n{-}m$ to refute in
resolution.
\end{lem}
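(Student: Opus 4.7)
The plan is to invoke the Adversary direction of Lemma~\ref{lem:Prover_Adversary}. It suffices to exhibit a winning strategy for the Adversary in the Prover-Adversary game on $\BPHP_n \cup \beta$, starting from the empty position, where the Prover's memory is restricted to hold information about at most $n-m-1$ pigeons simultaneously. This will rule out any resolution derivation of $\bot$ from $\BPHP_n \cup \beta$ of pigeon-width at most $n-m-1$.

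The Adversary maintains a partial injection $\sigma$ from pigeons to holes, starting with $\sigma := \beta$. When the Prover queries a bit $p^x_i$, the Adversary acts as follows: if $x$ is already in $\tdom(\sigma)$, she answers with the $i$-th bit of $\sigma(x)$; otherwise she picks an arbitrary hole $y$ not in the range of $\sigma$, extends $\sigma$ by setting $\sigma(x) = y$, and then answers with the $i$-th bit of $y$. Whenever the Prover's forget move removes the last remaining bit of some pigeon $x \notin \tdom(\beta)$ from memory, the Adversary accordingly removes $x$ from $\sigma$, releasing its hole.

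The key calculation is that a free hole is always available. At every stage, $\tdom(\sigma)$ consists of the $m$ pigeons matched by $\beta$ together with those non-$\beta$ pigeons having at least one bit currently in the Prover's memory. Since the total pigeon-width of the Prover's memory is at most $n-m-1$, the number of non-$\beta$ pigeons in $\tdom(\sigma)$ is also bounded by $n-m-1$, giving $|\tdom(\sigma)| \le m + (n-m-1) = n-1 < n$. Hence at least one hole is unused and the Adversary can always extend $\sigma$.

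It remains to verify that no axiom of $\BPHP_n \cup \beta$ is ever falsified by the Prover's current position $\alpha$. Since $\alpha$ is consistent with the injection $\sigma$ and $\sigma \supseteq \beta$, every unit clause of $\beta$ is satisfied or unforced by $\alpha$; and since $\sigma$ maps distinct pigeons to distinct holes, no hole-conflict axiom $(x \notpigeonto y) \vee (x' \notpigeonto y)$ is falsified either. Thus the Adversary never loses, and the required pigeon-width lower bound follows. The only mildly delicate point, and the one that needs care in the writeup, is the bookkeeping showing that the Adversary's act of forgetting $x$ from $\sigma$ when its last bit leaves memory keeps $|\tdom(\sigma)|$ controlled by the memory bound without ever producing an inconsistency with $\alpha$.
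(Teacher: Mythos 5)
Your argument is essentially the paper's: the Adversary maintains a partial matching extending $\beta$, consistent with the Prover's memory, and this is fed into the Adversary direction of Lemma~\ref{lem:Prover_Adversary}. The paper states this in two lines; you supply the bookkeeping.

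However, there is an off-by-one slip in how you invoke Lemma~\ref{lem:Prover_Adversary}. That lemma says: if the Adversary wins the game with memory bound $m$, then no derivation of pigeon-width $m-1$ exists. You run the game with memory bound $n-m-1$, which therefore only rules out pigeon-width $n-m-2$, whereas the lemma claims the refutation requires pigeon-width $n-m$, i.e.\ that no refutation of pigeon-width $n-m-1$ exists. You should run the game with memory bound $n-m$, as the paper does. Your hole-counting argument still goes through under this bound: at the moment the Adversary must assign a fresh hole to a pigeon $x \notin \tdom(\beta)$, the queried bit of $x$ is not yet in memory, so the memory mentions at most $n-m-1$ \emph{other} pigeons (any more, and answering the query would push it past $n-m$). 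Hence at that instant $|\tdom(\sigma)| \le m + (n-m-1) = n-1$ and a free hole exists, exactly as in your calculation. The rest of the argument --- consistency of $\sigma$ with $\alpha$, the axioms never being falsified, and the harmlessness of dropping a pigeon from $\sigma$ when its last bit is forgotten --- is correct as you wrote it.
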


\begin{proof}
$\BPHP_n \cup \beta$ is essentially an unusual
encoding of the pigeonhole principle with $n+1-m$ pigeons and $n-m$ holes.
Thus, if the Prover is limited to remembering variables
from at most $n-m$ pigeons, there is an easy strategy
for the Adversary in the game starting from the
empty position. Namely, she can always maintain a matching
between the pigeons mentioned in the Prover's memory
and the available holes.
The result follows by Lemma~\ref{lem:Prover_Adversary}.
\end{proof}


\begin{thm}\label{thm:DRAT_width}
Let $\rho$ be a partial matching of size at most $n/4$.
Let $\Pi$ be a $\DRATnnv$ refutation of $\BPHP_n \cup \rho$ in which
no clause of $\BPHP_n$ is ever deleted. Then
some clause in $\Pi$ has pigeon-width more than $n/3$.
\end{thm}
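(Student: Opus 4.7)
The plan is to argue by contradiction: assume that every clause of $\Pi$ has pigeon-width at most $n/3$, and then convert $\Pi$ into a resolution refutation of $\BPHP_n\cup\rho$ whose pigeon-width is bounded by a constant multiple of $n/3$. Since Lemma~\ref{lem:adversary_width} gives a pigeon-width lower bound of $n-|\rho|\ge 3n/4$ for any resolution refutation of $\BPHP_n\cup\rho$ from the empty position, and the gap between $n/3$ and $3n/4$ leaves plenty of slack, such a conversion will yield the required contradiction.

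The heart of the argument is an extension of the Prover–Adversary game of Lemma~\ref{lem:Prover_Adversary} that handles $\tRAT$ inferences, leveraging the \emph{sign-flip symmetry} of $\BPHP_n$: for any bit variable $p^x_i$, the substitution negating $p^x_i$ throughout merely permutes the hole axioms of $\BPHP_n$. Adversary maintains a partial matching $\sigma\supseteq\rho$ covering every pigeon currently in Prover's memory; by the pigeon-width bound she has at most $n/3$ pigeons in memory and at most $n/3 + n/4 < n$ total matched pigeons, leaving at least $n/3$ free holes to play with. Query, forget, and $\vdash_1$ inferences are handled exactly as in the proof of Lemma~\ref{lem:adversary_width}, noting that any $\vdash_1$ step from narrow hypotheses corresponds to a narrow resolution derivation since unit propagation only shrinks clauses. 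For a $\tRAT$ inference deriving $C=p\lor C'$ on literal $p=(p^x_i)^\epsilon$, if $\sigma\vDash C$ nothing needs to be done; otherwise Adversary flips the value of $p^x_i$ in $\sigma$, moving pigeon $x$ from its current hole $y$ to $y\oplus e_i$, and reassigning any colliding pigeon to a fresh unused hole. The RAT property, combined with the sign-flip symmetry of $\BPHP_n\cup\rho$, certifies that this modified $\sigma$ still satisfies $\BPHP_n$ and $\rho$ and all RAT clauses previously introduced.

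The main obstacle is verifying that Adversary's flip is actually available: one must ensure the pigeon $x$ whose bit is flipped is not in $\text{dom}(\rho)$ (so that flipping does not break $\rho$), that the reassignment of a colliding pigeon preserves earlier RAT clauses, and that consistency with Prover's memory is maintained through the update. The first point is handled by letting Adversary \emph{choose} her matching on unmatched pigeons lazily so that she never commits to a value of $p^x_i$ for $x\in\text{dom}(\rho)$ that would later need to be flipped; the second is handled by choosing the fresh hole from among the $\ge n/3$ genuinely unused ones, which lies outside the literals appearing in any of the narrow RAT clauses previously introduced; and the third is handled by the pigeon-width bound, which limits how many pigeons simultaneously constrain Adversary. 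Once Adversary's strategy is shown to win against any narrow $\DRATnnv$ refutation, the translation back via Lemma~\ref{lem:Prover_Adversary} yields the desired narrow resolution refutation of $\BPHP_n\cup\rho$, contradicting Lemma~\ref{lem:adversary_width}.
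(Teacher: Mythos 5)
Your high-level plan — assume small pigeon-width, convert the $\DRATnnv$ refutation into a narrow resolution refutation, and then contradict Lemma~\ref{lem:adversary_width} — agrees with the paper's approach. But the execution has two serious problems.

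First, the game is being used in the wrong direction. In Lemma~\ref{lem:Prover_Adversary}, an Adversary win shows \emph{non}-derivability in small pigeon-width; a Prover win is what yields a narrow resolution derivation. Your final sentence, ``once Adversary's strategy is shown to win \dots, the translation back \dots yields the desired narrow resolution refutation,'' is therefore backwards. If you really want a narrow resolution refutation of $\BPHP_n\cup\rho$, you must exhibit a winning \emph{Prover} strategy, and of course no such strategy can exist, precisely because of Lemma~\ref{lem:adversary_width}. What the argument needs is the opposite: to show that each $\tRAT$ clause in $\Pi$ is itself derivable from $\BPHP_n\cup\rho$ by narrow resolution, so that the whole of $\Pi$ collapses to a narrow resolution refutation and the contradiction comes only at the very end (or earlier, if the $\tRAT$ condition itself is impossible).

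Second, and more fundamentally, the bit-flipping Adversary strategy does not work as stated, and the mechanism that actually makes the theorem go through is missing. Your plan is to have Adversary flip $p^x_i$ whenever her matching $\sigma$ falsifies a newly introduced $\tRAT$ clause $p\lor C'$; but $\sigma$ is required to extend Prover's current memory, and the literal $p$ can easily already be in that memory at that moment, so the flip is blocked. The hand-waving about ``choosing the matching lazily'' does not resolve this, because Adversary's answers to earlier queries are already committed. The paper instead takes the narrow clause $C=p\mathbin{\dot\lor}C'$ with $\alpha=\olnot C$ and splits into three cases on $\alpha\cup\rho$. If it is inconsistent with $\rho$, the clause is a weakening of $\rho$. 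If it cannot be extended to a partial matching of the pigeons it mentions, $C$ has an easy narrow Prover strategy and hence a narrow resolution derivation (your proposal would fall into this case, but there is no need for any symmetry or flipping). The interesting case is when $\alpha\cup\rho$ \emph{does} extend to a partial matching $\beta$: here one picks the hole $y'$ that pigeon $x$ would occupy if the bit $p$ were flipped, picks a pigeon $x'$ occupying $y'$ (extending $\beta$ to $\beta'$ if necessary), and applies the $\tRAT$ condition to the hole axiom $H=(x\notpigeonto y')\lor(x'\notpigeonto y')$, whose first disjunct contains $\olnot p$. The $\tRAT$ condition then forces $\Gamma\cup\beta'\vdash_1\bot$, which, combined with the inductive hypothesis that $\Gamma$ is narrowly derivable, gives a narrow resolution refutation of $\BPHP_n\cup\beta'$ — contradicting Lemma~\ref{lem:adversary_width}, since $\beta'$ still leaves more than $n/3$ free pigeons. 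This Case 2 argument — showing the dangerous situation cannot arise at all, rather than showing Adversary survives it — is the content of the theorem, and nothing in your proposal supplies it.
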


\begin{proof}
Suppose for a contradiction there is a such a refutation
$\Pi$ in pigeon-width $n/3$. We consider each $\tRAT$ inference
in $\Pi$ in turn, and show it
can be eliminated and replaced with standard resolution reasoning,
without increasing the pigeon-width.

Inductively suppose $\Gamma$ is a set of clauses derivable  from $\BPHP_n \cup \rho$
in pigeon-width~$n/3$, using only resolution and weakening.
Suppose a clause $C$ in $\Pi$ of the form $p \dotlor C'$ is $\tRAT$
with respect to
 $\Gamma$ and~$p$. 
Let $\alpha = \olnot C$, so
$\alpha(p)=0$ and
$\alpha$ mentions at most~$n/3$
pigeons. We consider three cases.

\emph{Case 1:} the assignment $\alpha$ is inconsistent with $\rho$.
This means that $\rho$ satisfies a literal which appears in $C$,
so $C$ can be derived from $\rho$ by a single weakening step.

\emph{Case 2:}
the assignment $\alpha \cup \rho$ can be extended to a
partial matching~$\beta$ of the pigeons it mentions.
We will show that this cannot happen.
Let $x$ be the pigeon
associated with the literal $p$. Let $y=\beta(x)$ and let $y'$ be the
hole~$\beta$ would map~$x$ to if the bit $p$ were flipped to $\tTrue$.
If $y' = \beta(x')$
for some pigeon~$x'$ in the domain of~$\beta$, let $\beta'=\beta$. Otherwise let
$\beta' = \beta \cup \{ (x',y')\}$ for some pigeon~$x'$ outside the domain of~$\beta$.

Let $H$ be the hole axiom
$(x \notpigeonto y') \lor (x' \notpigeonto y')$ in $\Gamma$.
The clause $(x \notpigeonto y')$
contains the literal~$\olnot p$,
since $(x \pigeonto y')$ contains $p$.
So  $H = \olnot{p} \dotlor H'$
for some clause $H'$.
By the $\tRAT$ condition, either $C' \cup H'$ is a tautology or $\Gamma \vdash_1 C \lor H'$.
Either way, $\Gamma \cup \olnot{C} \cup \olnot{H'} \vdash_1 \bot$.
Since $\beta' \supseteq \alpha$, $\beta'$ falsifies~$C$.
It also falsifies~$H'$,
since it satisfies $(x \pigeonto y') \wedge (x' \pigeonto y')$
except at~$p$.
It follows that $\Gamma \cup \beta' \vdash_1 \bot$.
By assumption,
$\Gamma$ is derivable from $\BPHP_n \cup \rho$
in pigeon-width~$n/3$, and $\beta'$ extends $\rho$.
Since unit propagation does not increase pigeon-width,
this implies that $\BPHP_n \cup \beta'$
is refutable in resolution in pigeon-width $n/3$,
by first deriving $\Gamma$ and then using unit propagation.
This contradicts
Lemma~\ref{lem:adversary_width} as $\beta'$ is a matching of
at most $n/3+n/4+1$ pigeons.

\emph{Case 3:} the assignment $\alpha \cup \rho$ cannot
be extended to a partial matching of the pigeons it mentions.
Consider the Prover-Adversary game on $\BPHP_n \cup \rho$
with starting position~$\alpha$. The Prover can ask all remaining
bits of the pigeons mentioned in~$\alpha$, and
since there is no suitable partial matching this
forces the Adversary to reveal a collision and lose the game.
This strategy
has pigeon-width $n/3$; it follows that $C$ is derivable from
$\BPHP_n \cup \rho$ in resolution in this pigeon-width, as required.
\end{proof}

\begin{thm}\label{thm:BPHP_size}
Let $\Pi$ be a $\DRATnnv$ refutation of $\BPHP_n$ in which
no clause of $\BPHP_n$ is ever deleted.
Then $\Pi$ has size at least~$2^{n/60}$.
\end{thm}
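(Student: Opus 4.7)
I would prove this by combining the pigeon-width lower bound of Theorem~\ref{thm:DRAT_width} with a size-width tradeoff via a random partial matching of pigeons to holes. Assume for contradiction that $\Pi$ is a $\DRATnnv$ refutation of $\BPHP_n$ of size $S < 2^{n/60}$ in which no initial clause of $\BPHP_n$ is deleted. Call a clause in $\Pi$ \emph{wide} if its pigeon-width exceeds $n/3$. The goal is to produce, from $\Pi$, a partial matching $\rho$ of size at most $n/4$ together with a $\DRATnnv$ refutation of $\BPHP_n \cup \rho$ of pigeon-width at most $n/3$ in which no BPHP clause is deleted, contradicting Theorem~\ref{thm:DRAT_width}.

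First I would show that a uniformly random partial matching $\rho$ of size $\lfloor n/4 \rfloor$ satisfies every wide clause of $\Pi$ with positive probability. Fix a wide $C$ of pigeon-width $w > n/3$. For each pigeon $x$ mentioned in $C$: pigeon $x$ lies in the domain of $\rho$ with probability approximately $1/4$, and given this, even in the worst case that $C$ restricts a single bit of $x$, a random matched hole satisfies some literal of $C$ on $x$ with probability at least $1/2$. So the probability that $\rho$ fails to satisfy $C$ at $x$ is at most $7/8$, and standard negative-correlation bounds for uniform random matchings give $\Pr[\rho \not\vDash C] \leq (7/8)^{w} < 2^{-n/30}$. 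A union bound over the at most $S$ wide clauses in $\Pi$ yields total failure probability less than $S \cdot 2^{-n/30} \leq 2^{-n/60} < 1$, so such a matching $\rho$ exists.

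Given such $\rho$, I would construct a $\DRATnnv$ refutation $\Pi^*$ of $\BPHP_n \cup \rho$ by walking through $\Pi$, keeping narrow clauses unchanged, and replacing each wide clause $C$ by a unit literal $\ell \in C \cap \rho$. Each such $\ell$ is a pigeon-width $1$ axiom of $\BPHP_n \cup \rho$, and since $\ell$ subsumes $C$, any downstream $\vdash_1$ derivation that used $C$ remains valid in $\Pi^*$ by Lemma~\ref{lem:subsumeNoDelete}.

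The main obstacle is that the $\tRAT$ condition is nonmonotone in the current clause set, so introducing $\rho$ as axioms or replacing wide clauses by units can invalidate a downstream $\tRAT$ inference whose pivot literal $p$ has $\olnot p$ present in the new current set. I would circumvent this by exploiting the sign-symmetry of $\BPHP_n$ noted at the end of the introduction: the principle is invariant under globally flipping any subset of bit positions (which corresponds to permuting the hole labels). Pre-composing $\Pi$ with a uniformly random global sign-flip effectively decouples the signs of $\tRAT$ pivots from the actual literals of $\rho$, and a case analysis analogous to the one in the proof of Theorem~\ref{thm:DRAT_width} then lets us certify $\Pi^*$ as a valid $\DRATnnv$ refutation of $\BPHP_n \cup \rho$ of pigeon-width at most $n/3$ in which no BPHP clause is deleted. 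Theorem~\ref{thm:DRAT_width} applied to $\rho$ then yields the desired contradiction, so $S \geq 2^{n/60}$.
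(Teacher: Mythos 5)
Your probabilistic step is sound and close in spirit to the paper's: a random partial matching $\rho$ of roughly $n/4$ pigeons satisfies every wide clause of $\Pi$ with high probability, so a union bound over fewer than $2^{n/60}$ wide clauses succeeds. (The paper picks pigeons independently with probability $1/5$ rather than conditioning on $|\rho|=\lfloor n/4\rfloor$, but both analyses yield the needed exponentially small failure probability.) The divergence, and the gap, is in what you do with $\rho$ afterward.

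You propose to manufacture a new $\DRATnnv$ refutation $\Pi^*$ of $\BPHP_n \cup \rho$ by keeping narrow clauses and replacing each wide clause $C$ with a unit $\ell \in C \cap \rho$, and then to invoke Theorem~\ref{thm:DRAT_width} as a black box. You correctly identify the obstruction: the $\tRAT$ condition is nonmonotone in the current clause set, so adding $\rho$ as axioms and swapping clauses for units can silently invalidate downstream $\tRAT$ steps. But the remedy you sketch --- precomposing $\Pi$ with a random sign-flip symmetry of $\BPHP_n$ --- does not engage with this. A sign flip just produces another valid $\DRATnnv$ refutation of the same (isomorphic) $\BPHP_n$; the obstruction then reappears verbatim. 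After adding $\rho$ and shrinking wide clauses to units, a $\tRAT$ step with pivot $p$ must still be re-verified against a changed clause set, which may now contain new clauses with $\olnot{p}$ (the inserted units, or clauses of $\rho$), and whose unit-propagation obligations $\Gamma \vdash_1 p \lor C' \lor D'$ must hold for the altered $\Gamma$. ``Decoupling signs'' gives no mechanism for certifying any of these conditions, so you have no argument that $\Pi^*$ is a legitimate $\DRATnnv$ refutation, and Theorem~\ref{thm:DRAT_width} cannot be applied.

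The paper avoids this by not building $\Pi^*$ at all. After fixing $\rho$, it never again refers to a $\tDRAT$ refutation of $\BPHP_n\cup\rho$. Instead it proves, by induction along $\Pi$, that some subclause of each line of $\Pi$ has a \emph{resolution-and-weakening} derivation from $\BPHP_n\cup\rho$ of pigeon-width at most $n/3$. For wide lines this holds because they are subsumed by a unit of $\rho$. For narrow lines introduced by $\tRAT$, one reruns the three-case analysis from the proof of Theorem~\ref{thm:DRAT_width}, with the single extra (and monotone) observation that if $\Gamma\vdash_1 C\lor H'$ and $\Gamma'$ subsumes $\Gamma$, then $\Gamma'\vdash_1 C\lor H'$. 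Because resolution and $\vdash_1$ are monotone, no $\tRAT$ condition ever has to be re-certified, and Lemma~\ref{lem:adversary_width} directly supplies the contradiction. That inline repetition of the width argument in a monotone target system, rather than a black-box appeal to Theorem~\ref{thm:DRAT_width}, is what your proposal is missing.
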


\begin{proof}
Construct a random restriction $\rho$ by selecting each
pigeon independently with probability $1/5$ and then randomly
matching the selected pigeons with distinct holes
(there is an ${(1/5)}^{n+1}$ chance that there is no matching,
because we selected all the pigeons --- in this case we set all
variables at random).

Let $m=n/4$. Let $C$ be a clause mentioning
at least $m$ distinct pigeons $x_1, \dots, x_m$
and choose literals $p_1, \dots, p_m$ in $C$ such that $p_i$
belongs to pigeon $x_i$.
The probability that~$p_i$ is satisfied by $\rho$ is~$1/10$.
However, these events are not quite independent for different~$i$,
as the holes used by other pigeons are blocked for pigeon~$x_i$.
To deal with this, we may assume that pigeons $x_1, \dots, x_m$,
in that order,  were
the first pigeons considered in the construction of $\rho$.
When we come to $x_i$, if we set it, then
there are $n/2$ holes which would satisfy $p_i$, at
least $n/2 - m \ge n/4$ of which are free;
so of the free holes, the fraction which satisfy $p_i$
is  at least $1/3$.
So the probability that $\rho$ satisfies $p_i$,
conditioned on it not satisfying any of $p_1, \dots, p_{i-1}$,
is at least $1/15$.
%
%
Therefore the probability that~$C$
is not satisfied by $\rho$ is at most ${(1-1/15)}^m < e^{-m/15} = e^{-n/60}$.

Now suppose $\Pi$ contains no more than $2^{n/60}$ clauses
of pigeon width at least~$n/4$. 
By the union bound, for a random $\rho$,
the probability that at least one of these clauses
is not satisfied by $\rho$ is at most $2^{n/60} e^{-n/60} = {(2/e)}^{n/60}$.
Therefore most restrictions~$\rho$ satisfy
all clauses in $\Pi$ of pigeon-width at least~$n/4$, and by the Chernoff
bound we may choose such a  $\rho$
which also sets no more than $n/4$
pigeons.

We now observe inductively that for each clause $C$ in $\Pi$,
some subclause of~$C$
is derivable from $\BPHP_n \cup \rho$ in resolution in
pigeon-width $n/3$,
ultimately contradicting Lemma~\ref{lem:adversary_width}.
If $C$ has pigeon-width more than $n/3$, this follows because $C$ is subsumed by~$\rho$.
Otherwise, if~$C$ is derived by a $\tRAT$ inference, we
repeat the proof of Theorem~\ref{thm:DRAT_width};
in case~2 we additionally use the observation that if $\Gamma \vdash_1 C \lor H'$
and $\Gamma'$ subsumes $\Gamma$, then
$\Gamma' \vdash_1 C \lor H'$.
\end{proof}

\begin{cor}\label{coro:RATminusnotSimulateTwo}
$\RATnnv$ does not simulate $\DRATnnv$.
$\RATnnv$ does not simulate $\SPRnnv$.
\end{cor}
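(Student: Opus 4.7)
The plan is to combine the lower bound of Theorem~\ref{thm:BPHP_size} with the polynomial upper bounds for $\BPHP_n$ established earlier in the paper, using $\BPHP_n$ as the separating family.

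First I would observe that any $\RATnnv$ refutation is in particular a $\DRATnnv$ refutation in which no clause is ever deleted (trivially, since $\RATnnv$ has no deletion rule at all). Hence Theorem~\ref{thm:BPHP_size} applies verbatim and shows that every $\RATnnv$ refutation of $\BPHP_n$ has size at least $2^{n/60}$. This is the lower bound side of both claimed separations.

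Next I would recall the two matching polynomial upper bounds. For the first separation, the paragraph preceding this corollary notes that combining Theorems~\ref{thm:BCsimRATnoNew}, \ref{thm:DRATnnvDPRnnv} and~\ref{thm:SPR_BPHP} yields polynomial size $\DRATnnv$ (indeed $\DBCnnv$) refutations of $\BPHP_n$: by Theorem~\ref{thm:SPR_BPHP} there is a polynomial size $\SPRnnv$ refutation, which is in particular a $\DSPRnnv$ refutation; Theorem~\ref{thm:DRATnnvDPRnnv} turns this into a polynomial size $\DRATnnv$ refutation (via $\DPRnnv$); and Theorem~\ref{thm:BCsimRATnoNew} gives $\DBCnnv$. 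For the second separation, Theorem~\ref{thm:SPR_BPHP} already provides polynomial size $\SPRnnv$ refutations of $\BPHP_n$ directly.

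Putting these together, if $\RATnnv$ simulated $\DRATnnv$ then $\BPHP_n$ would have polynomial size $\RATnnv$ refutations, contradicting the $2^{n/60}$ lower bound; the same argument with $\SPRnnv$ in place of $\DRATnnv$ gives the second separation. There is no real obstacle here — the work has all been done in Theorems~\ref{thm:SPR_BPHP} and~\ref{thm:BPHP_size}; the only thing to be careful about is the trivial but essential point that the hypothesis of Theorem~\ref{thm:BPHP_size} (no initial $\BPHP_n$ clause is ever deleted) is automatically satisfied by $\RATnnv$ proofs, so the lower bound transfers without loss.
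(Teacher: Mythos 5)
Your proposal is correct and follows the same route as the paper: combine the polynomial-size $\SPRnnv$ upper bound for $\BPHP_n$ (Theorem~\ref{thm:SPR_BPHP}), the simulation of $\DPRnnv$ by $\DRATnnv$ (Theorem~\ref{thm:DRATnnvDPRnnv}), and the exponential $\RATnnv$ lower bound (Theorem~\ref{thm:BPHP_size}). The only addition is your explicit remark that a $\RATnnv$ refutation is a $\DRATnnv$ refutation with no deletions so the lower bound applies, which the paper leaves implicit but which is indeed the right thing to check.
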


\begin{proof}
By Theorem~\ref{thm:SPR_BPHP}, $\BPHP_n$ has short proofs
in $\SPRnnv$. Thus, by Theorem~\ref{thm:DRATnnvDPRnnv},
this also holds for $\DRATnnv$ (and for $\DBCnnv$ by
Theorem~\ref{thm:BCsimRATnoNew}). On the other hand,
Theorem~\ref{thm:BPHP_size} just showed $\BPHP_n$ requires
exponential size $\RATnnv$ proofs.
\end{proof}

\section{Open problems}\label{sec:open}

There are a number of open questions about the systems with no new variables.
Of particular importance is the question of the relative strengths of $\DPRnnv$,
$\DSRnnv$ and related systems. The results of~\cite{HeuleBiere:Variable,HKB:NoNewVariables,HKB:StrongExtensionFree} and
the present paper show that~$\DPRnnv$, and even the possibly weaker system $\SPRnnv$, are strong.
$\DPRnnv$ is a promising system
for effective proof search algorithms, but it is open whether practical proof search
algorithms can effectively exploit its strength.
It is also open whether $\DPRnnv$ or $\DSRnnv$
simulates $\tER$.

Another important question is
to understand the strength of deletion for these systems.
Of course, deletion is well-known to help the performance of SAT solvers in
practice, if for no other reason, because unit propagation is
faster when fewer clauses are present.
In addition, for systems such as~$\tRAT$, it is known that deletion
can allow new inferences. The results in
Sections~\ref{sec:Upperbounds} and~\ref{sec:Lowerbounds}
improve upon this
by showing that $\RATnnv$ does not simulate $\DRATnnv$. This
strengthens the case, at least in theory, for the importance of deletion.

In Section~\ref{sec:Upperbounds} we
described small $\SPRnnv$ proofs of  many of the
known ``hard'' tautologies that have been shown to require
exponential size proofs in constant depth Frege. It is open
whether $\SPRnnv$ simulates Frege; and by these results,
any separation of $\SPRnnv$ and Frege systems will likely require
developing new techniques.  Even more tantalizing, we can ask
whether $\SRnnv$ simulates Frege.

There are several hard tautologies for which we do not
whether there are as polynomial size $\SPRnnv$ proofs.
Jakob Nordstr\"om [personal communication, 2019] suggested
(random) 3-XOR SAT
and the even coloring principle as examples.
3-XOR SAT has short cutting planes proofs via Gaussian elimination;
it is open whether $\SPRnnv$ or $\DSPRnnv$ or even $\DSRnnv$ has polynomial
size refutations for all unsatisfiable 3-XOR SAT principles.
The even coloring principle
is a special case of the Tseitin principle~\cite{Markstrom:EvenColoring}:
the graph has an odd number
of edges, each vertex
has even degree, and the initial clauses assert that, for
each vertex, exactly one-half the incident
edges are labeled~1.  It is not hard to see that the even coloring principle can
be weakened to the Tseitin principle by removing some clauses
with the deletion rule.
Hence there are polynomial size $\DSPRnnv$ refutations (with deletion)
of the even coloring principle. It is open whether $\SPRnnv$ (without
deletion) has polynomial size refutations for the even coloring
principle.

Paul Beame [personal communication, 2018] suggested that the
graph PHP principles (see~\cite{BenSassonWigderson:ShortProofsNarrow})
may separate systems such as $\SPRnnv$ or even $\SRnnv$ from
Frege systems. However, there are reasons to suspect that
in fact the graph PHP principles also have short $\SPRnnv$
proofs. Namely, $\tSPR$ inferences
can infer a lot of clauses from the graph PHP clauses. If an
instance of graph PHP has every pigeon with outdegree $\ge 2$,
then there must be an alternating cycle of pigeons $i_1,\ldots i_{\ell+1}$
and holes $j_1,\ldots j_\ell$ such that $i_\ell = i_1$,
the edges $(i_s, j_s)$ and $(i_{s+1},j_s)$ are
all in the graph, and $\ell = O(\log n)$. Then
an $\tSPR$ inference can be used to learn the clause
$\olnot{x_{i_1,j_1}} \lor \olnot{x_{i_2,j_2}} \lor \cdots \lor \olnot{x_{i_\ell,j_\ell}}$,
by using the fact that a satisfying assignment that falsifies this clause
can be replaced by the assignment that maps instead each
pigeon~$i_{s+1}$ to hole~$j_s$.

This construction clearly means that $\tSPR$ inferences can infer
many clauses from the graph PHP clauses. However, we do not know how
to use these to form a short $\SPRnnv$ proof of the graph PHP principles.
It remains open whether a polynomial size $\SPRnnv$ proof exists.

\paragraph{Acknowledgements}
We thank the reviewers of the conference and journal versions
of this paper for suggestions and comments that improved the paper.
We also thank Jakob Nordstr\"om, Paul Beame, Marijn Heule,
Thomas Kochmann and Oliver Kullmann for useful comments, questions and suggestions.

\bibliographystyle{alpha}
\bibliography{logic}
\end{document}